\documentclass[12pt,a4paper]{article}

\usepackage[british]{babel}
\usepackage{titlesec}
\usepackage[a4paper,top=4cm,bottom=2cm,left=2.5cm,right=2.5cm,marginparwidth=1.75cm]{geometry}
\usepackage{graphicx}
\usepackage{bbm}
\usepackage{xcolor}
\usepackage{amsmath}
\usepackage{amsthm}
\usepackage{setspace}
\usepackage{amsfonts}
\usepackage{csquotes}
\usepackage{mathrsfs}
\usepackage{dsfont}
\usepackage{appendix}
\usepackage{comment}
\usepackage{footmisc}
\usepackage{mathtools}
\usepackage{float}
\usepackage{pgfplots}
\pgfplotsset{compat=1.18}
\usetikzlibrary{arrows.meta}
\usepackage{tikz}
\usetikzlibrary{arrows.meta,calc}
\usepgfplotslibrary{fillbetween}
\definecolor{cBlue}{RGB}{0,83,159}
\definecolor{cOrange}{RGB}{213,94,0}
\definecolor{cGray}{RGB}{90,90,90}
\usepackage{placeins} 

\theoremstyle{plain}
\newtheorem{theo}{Theorem}
\newtheorem{prop}{Proposition}
\newtheorem{coro}{Corollary}
\newtheorem{claim}{Claim}

\newtheorem{lemma}{Lemma}

\theoremstyle{definition}

\newtheorem{eg}{Example}

\newtheorem{rmk}{Remark}

\newcommand{\citeposs}[1]{\citeauthor{#1}'s (\citeyear{#1})}

\usepackage[authoryear]{natbib}
\setcitestyle{round,aysep={,}}
\usepackage[colorlinks=true,linkcolor=blue,citecolor=blue,urlcolor=blue]{hyperref} 
\title{An Explicit Solution for the Two-Bidder All-Pay Auction with Arbitrary Type Distributions\thanks{I am deeply indebted to Charles Z. Zheng for invaluable guidance and support throughout the project. I also thank Roy Allen, Elizabeth Caucutt, Christopher P. Chambers, Sergio Ocampo Diaz, Maria Goltsman, Ren{\'e} Kirkegaard, Jingfeng Lu, Gregory Pavlov, Ron Siegel and Peter A. Streufert for helpful comments. I use ChatGPT and Claude for preparation of the manuscript. All remaining errors are my own.}}
\author{Yijia Ding\thanks{yding463@uwo.ca. Department of Economics, University of Western Ontario.}}
\date{Sept 12, 2026}

\begin{document}

\maketitle

\begin{abstract}
This paper provides an explicit characterization of the (unique) equilibrium of the two-bidder all-pay auction with asymmetrically distributed, independent private values. The characterization is achieved through recasting the problem into its equivalence in the quantile space and thereby obtaining equilibrium bid functions that involve only integrals and inverses of the primitives. No regularity assumption is required: the value distributions can be continuous, discrete, singular continuous, or any mixture of these. The explicit formula makes comparative statics transparent. Raising one bidder’s value distribution in first-order stochastic dominance raises that bidder’s equilibrium bid distribution in the same order. Perturbing a symmetric-bidder model has different revenue effects depending on the direction of the perturbation: whereas slightly weakening one of the two bidders reduces expected revenue, slightly strengthening one of the two has an ambiguous effect. The explicit formula also turns some involved arguments in the literature into direct corollaries.
\end{abstract}

\section{Introduction}

This paper studies the two-bidder all-pay auction with asymmetric, independently distributed private values. This has been a major model in the study of conflicts, contests, patent races and political lobbying, but its equilibrium has been difficult to use in full generality. The classical analysis of \citet{amann1996asymmetric} studies the case in which value distributions have strictly positive densities on their supports, while \citet{siegel2014asymmetric} studies finite type distributions at the other extreme. These approaches characterize the equilibrium, but they leave two limitations. First, in neither case is the equilibrium written as an explicit function of the primitives. Second, the analysis relies on either smooth continuous distributions or finite discrete distributions. This paper removes both restrictions in the two-bidder independent private value environment.

The main result in this paper gives explicit, closed-form formulas for the equilibrium bid functions in terms of only the primitives (Theorem~\ref{main theo}). Because the equilibrium is exhibited explicitly, its existence and uniqueness follow from the construction itself, without the fixed-point arguments common in this literature \citep[e.g.,][]{prokopovych2023monotone}. No continuity or discreteness assumption is imposed on the value distributions: each distribution may be continuous, discrete, singular continuous, or any mixture of these cases. This generality matters for applications such as multi-stage all-pay auctions, where off-path posteriors are unrestricted by Bayes's rule, and information design, where imposing continuity or discreteness on posterior distributions would restrict the designer's feasible information structures.

The organizing idea is to work in quantile space. Instead of describing a strategy as a map from values to bid distributions, we rank each bidder's value by its quantile and describe bidding as a function of this quantile. This is especially useful when the value distribution has atoms. An atom corresponds to a whole interval of quantiles for the same value of a bidder, and this interval can absorb the randomization that the value type uses in a mixed-strategy equilibrium. The quantile representation thus serves as a purification device in the spirit of \citet{milgrom1985distributional}: Lemma~\ref{increasing bid function} shows that an equilibrium that is possibly in mixed strategies in the original value representation can then be represented, without changing the induced joint distribution of values and bids, by deterministic monotone bid functions of quantiles.

The quantile method has been applied to auction theory \citep{bulow1989simple,alaei2013simple}, and \citet{lu2017monotone} have used a closely related transformation in a two-bidder all-pay auction model.\footnote{\citet{lu2017monotone} transform each bidder's signal into a uniform random variable on $[0,1]$, with the bidder's value becoming a function of the transformed signals. Due to their interdependent value assumption, an explicit solution of the equilibrium is not available in their model.} It is natural to expect that the quantile method could generalize the classical analysis of \citet{amann1996asymmetric}, which characterizes the equilibrium through an ordinary differential equation (ODE) under smooth value distributions. Carrying out the generalization, however, is not immediate, and the literature has not shown how it could be done or how far it could go. This paper lays out the path formally and shows that, with the simplicity the quantile method provides, the generalization requires no restriction other than the assumption of independent private values.

The main difficulty is that, with arbitrary value distributions, the first-order condition for the equilibrium bid functions need not hold at every quantile. The paper resolves it through convex analysis, exploiting only the independent private value assumption.\footnote{The same difficulty is confronted by \citet{zheng2019necessary}, who resolves it to recover the equilibrium bid distribution from the first-order condition, with types kept in the value space. The approach here instead recasts both values and bids in quantile terms, so the equilibrium is a pair of monotone bid functions on the same unit interval of quantiles. This unified representation is what makes the explicit formula in Theorem~\ref{main theo} attainable.}

The explicit nature of the solution is the first contribution of the paper. Previous analyses of asymmetric auctions often characterize the equilibrium through endogenous variables rather than primitives: it is constructed through a finite-step algorithm \citep{siegel2014asymmetric,rentschler2016two},\footnote{\citet{siegel2014asymmetric}, \citet{rentschler2016two} and \citet{lu2017monotone} allow interdependence and correlation between the two bidders' values, while this paper does not.} or pinned down as the solution of an ODE, as above. These characterizations are self-referential \citep[e.g.,][]{maskin2000asymmetric,lebrun1999first} in the sense that the solution is defined by equations or procedures involving the solution itself, rather than by a direct formula in terms of primitives. Except in special cases, such as symmetric bidders or the recent finite-type analysis of \citet{chen2025regularization}, they do not yield explicit bid functions from the primitive distributions. This makes applications and comparative statics less transparent. By contrast, Theorem~\ref{main theo} expresses the equilibrium directly from primitives. Not only does this make the equilibrium easier to compute, but it also makes comparative statics that would otherwise be opaque transparent.

The generality is the second contribution of the paper: the characterization requires no regularity assumption on the value distributions. It matters because atoms and continuous parts can arise naturally in applications. In multi-stage all-pay auction models \citep[e.g.,][]{zheng2019necessary,lu2026peacepayments}, off-path histories may leave posterior beliefs unrestricted by Bayes's rule, so the continuation all-pay auction may have to be analyzed under posterior distributions that are discrete, continuous, or mixtures of the two. In models of information disclosure \citep[e.g.,][]{morath2008private,lu2018ranking} or information design \citep[e.g.,][]{chen2017persuasion}, the posterior distributions are determined by the designer's choice of information structure, so an equilibrium characterization that covers only continuous or only discrete posteriors leaves a substantial part of the designer's feasible set outside the analysis. The present formula allows these cases to be treated within a single framework.\footnote{Some related papers impose symmetry, binary types, or other special structure; see, for example, \citet{fu2014disclosure} and \citet{lu2018ranking}. The results here may be useful for extending such analyses to asymmetric environments with more general type distributions.}

Theorem~\ref{main theo} is closely related to \citet{chen2025regularization}, who recently obtain closed-form equilibrium expressions when each of the two bidders has finitely many types. Their analysis centers on the optimal design of all-pay auctions through favoritism and information disclosure, and the equilibrium is characterized through a regularization procedure. The procedure aligns the two bidders' types into common divisions through finitely many recursive steps and expresses the equilibrium as finite sums over these divisions, using the finite-type analysis of \citet{siegel2014asymmetric}. Each of these elements relies on the finiteness of the type space, so the procedure is confined to the purely discrete case, which is the relevant environment for their design analysis.\footnote{One may conjecture that the general result could be obtained from the discrete case through an approximation argument. Making the argument rigorous, however, would require establishing the convergence of the discrete equilibria and identifying the limit, which would involve more technical issues than the direct analysis in this paper. Proving the uniqueness of the equilibrium through a limit argument raises a further difficulty, since different approximating sequences need not select the same limit.} My paper aims at the equilibrium characterization itself, for arbitrary value distributions. In quantile space, the counterpart of their division alignment is a matching equation solved in closed form by integration. Applied to purely discrete type distributions, the method recovers \citeposs{chen2025regularization} procedure: their divisions correspond to subintervals of the quantile space, and their alignment condition is the discrete form of the matching equation.

The remaining results exploit the explicit formula. Section~\ref{sec:Bid distribution comparison} shows that if one bidder's value distribution is strengthened in the sense of first-order stochastic dominance (FOSD) while the other bidder's distribution is held fixed, then the strengthened bidder's equilibrium bid distribution also increases in the FOSD order (Theorem~\ref{theo:bid fosd}). Consequently, every type of the bidder whose distribution is held fixed is weakly worse off when facing the strengthened opponent (Corollary~\ref{coro:opponent payoff}). The economic content of both statements is established by \citet{kirkegaard2008comparative} for value distributions that are continuously differentiable with strictly positive densities. The contribution of the versions here lies in their generality: they dispense with all regularity assumptions on the value distributions. A companion result (Proposition~\ref{prop:FOSD}) shows that within a single auction, if one bidder's value distribution dominates the other's in the FOSD sense, the corresponding equilibrium bid distributions also inherit the same relation.

Taking expectations of the bid functions yields a closed-form expression for expected revenue, which we use to ask how revenue responds when one of two symmetric bidders becomes slightly stronger or weaker (Proposition~\ref{prop:local revenue effects} in Section~\ref{sec:revenue comparison}). Such a perturbation creates two effects on expected revenue. One is a level effect in shifting the overall level of values in the auction. The other is an asymmetry effect in making the two bidders asymmetric in their strength. The level effect raises expected revenue when the bidder is strengthened and lowers it when the bidder is weakened, whereas the asymmetry effect lowers expected revenue in either direction. The response is therefore one-sided: locally weakening one bidder always lowers expected revenue, whereas locally strengthening one bidder may either raise or lower it. This generalizes Proposition~2 of \citet{kirkegaard2013incomplete}, who considers the special case in which the value distribution has a smooth, strictly positive density on a support starting at zero, so that the asymmetry effect vanishes at the first order. The local conclusion about weakening extends globally (Proposition~\ref{prop:global-weakening}): starting from symmetric bidders, any FOSD weakening of one bidder lowers expected revenue.

Section~\ref{sec:peace} applies these results to the literature on conflict preemption, in which an all-pay auction is preceded by a settlement stage \citep{zheng2019necessary,lu2026peacepayments}. A central step there is to identify the off-path posterior beliefs that make a deviating bidder most and least willing to reject a proposed settlement. According to the finding in that literature, the question boils down to which posterior belief minimizes the supremum $\bar{b}$ of the bidding support common to both bidders, and which posterior belief maximizes $\bar{b}$.  Theorem~\ref{theo:bid fosd} gives an immediate answer: among all posteriors supported on a fixed interval, $\bar b$ is smallest at the Dirac mass on the lower endpoint and largest at the Dirac mass on the upper endpoint, since these are the FOSD-extreme distributions. This recovers Lemma~2 of \citet{zheng2019necessary} and part of Lemma~10 of \citet{lu2026peacepayments}, both originally obtained through a more involved analysis of the bid distribution.

The paper is organized as follows. Section~\ref{model and equilibrium} derives the explicit equilibrium bid functions and their immediate corollaries. Section~\ref{sec:Bid distribution comparison} develops the FOSD comparative statics. Section~\ref{sec:revenue comparison} studies how expected revenue responds to changes in the value distributions. Section~\ref{sec:peace} presents the application to two-stage all-pay auctions. Section~\ref{conclusion} concludes. Technical details are in the Appendix.

\section{The Explicit Equilibrium}\label{model and equilibrium}
Consider a two-bidder all-pay auction for an indivisible prize. Bidders 1 and 2 have private valuations $v_1,v_2$, which are drawn from two independent distributions with c.d.f. $F_1,F_2$, respectively. Each $F_i$ is assumed to have bounded support $V_i\subseteq \mathbb{R}_+$, and to satisfy $F_i(0)=0$, i.e., neither bidder has a point mass on zero value.\footnote{As noted by \citet{olszewski2023equilibrium}, the presence of an atom at the value zero may lead to non-existence of equilibrium.}

After observing their realized values, each bidder simultaneously chooses a nonnegative bid. Both bidders pay their bids regardless of the winning status, and the bidder with the higher bid wins the prize. If the bids tie, we break the tie uniformly, i.e. each bidder wins with probability $\frac{1}{2}$. Thus, if bidders' values are $v_1,v_2$ and bids are $b_1,b_2$, bidder $i$'s ex post payoff is
\[
\begin{cases}
    v_i-b_i & b_i>b_j\\
    \frac{1}{2}v_i-b_i &b_i=b_j\\
    -b_i & b_i<b_j.
\end{cases}
\]

Denote a (mixed) strategy of bidder $i$ by $\Gamma_i:V_i\times\mathbb R\to[0,1]$, where $\Gamma_i(v_i,b)$ is the probability that bidder $i$ bids at most $b$ conditional on value $v_i$ (so $\Gamma_i(v_i,\cdot)$ is a c.d.f.\ for every $v_i\in V_i$). Denote by $\sigma_i(v_i)$ the probability measure induced by $\Gamma_i(v_i,\cdot)$. The strategy induces the unconditional bid distribution
\begin{equation}\label{eq:unconditional bid}
     G_i(b)=\int_{V_i}\Gamma_i(v_i,b)\,dF_i(v_i).
\end{equation}
For a profile $\Gamma=(\Gamma_1,\Gamma_2)$, let $BR_i(v_i)$ denote bidder $i$'s set of best responses at value $v_i$, given the opponent's bid distribution $G_{-i}$ induced by $\Gamma_{-i}$. The profile $\Gamma$ is a Bayesian Nash equilibrium (BNE) if, for $i=1,2$, $\sigma_i(v_i)$ assigns probability one to $BR_i(v_i)$ for $F_i$-almost every value $v_i$.

The following properties of no-gap, no-atom, common-bid-support, and monotonicity are standard in the all-pay auction literature. Proofs can be found in previous work \citep[see e.g.][]{amann1996asymmetric, siegel2014asymmetric, zheng2019necessary}.

\begin{lemma}[\citealt{amann1996asymmetric, siegel2014asymmetric, zheng2019necessary}]\label{lm:basic}
In any BNE of the two-bidder all-pay auction with independent values, the following necessary conditions hold:
    \begin{itemize}
\item[\textup{(i)}] \emph{No gap:}  if $G_i(a)=G_i(b)$ for $0<a<b$, then $G_i(a)=G_i(b)=1$.
\item[\textup{(ii)}] \emph{No atom:}  $G_i(b)$ is continuous at any $b>0$, and at most one bidder can have an atom at 0.
\item[\textup{(iii)}] \emph{Monotonicity:} For each bidder $i$, there exists an $F_i$-null set $N_i$ such that every type $v_i\notin N_i$ best responds, i.e.\ $\sigma_i(v_i)(BR_i(v_i))=1$, and the conditional bid supports $R_i(v_i):=\operatorname{supp}\sigma_i(v_i)$ are increasing in type: for all $v_i^1<v_i^2,v_i^1,v_i^2\notin N_i$, $\sup R_i(v_i^1)\le \inf R_i(v_i^2)$.
\item[\textup{(iv)}] \emph{Common bid support:} the unconditional bid distributions $G_1,G_2$ have a common support $[0,\bar{b}]$ for some $\bar{b}>0$.
    \end{itemize}
\end{lemma}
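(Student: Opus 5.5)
I would prove the four properties in the order (ii), (i), (iii), (iv), since each step uses the previous ones. Fix a BNE and write $U_i(b;v_i)=v_i\,P_i(b)-b$ for the interim payoff of type $v_i$ bidding $b$. Here $P_i(b)=G_{-i}(b^-)+\tfrac12[G_{-i}(b)-G_{-i}(b^-)]$ is the winning probability. Let $u_i(v_i)=\sup_b U_i(b;v_i)$. The basic tool throughout is a profitable-deviation argument: whenever the opponent's distribution has a feature that makes a neighbourhood of bids dominated, types of positive $F_i$-mass cannot place mass there. Positive values matter here because $F_i(0)=0$.

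\emph{No atom.} Suppose $G_j$ has an atom at some $b>0$. I would treat two cases.
\begin{itemize}
\item If $G_i$ also puts mass at $b$, the bidders tie with positive probability. The types of $i$ bidding $b$ have positive values, so bidding slightly above $b$ gains about $\tfrac12 v_i\,\Delta G_j(b)$ at cost $\varepsilon$, a contradiction.
\item Otherwise, $i$ places (almost) no mass on $[b-\delta,b]$ for small $\delta$, because jumping just above $b$ strictly dominates. Then the types of $j$ bidding $b$ could lower their bid to $b-\delta$ without reducing their winning probability, and would strictly save cost, again a contradiction.
\end{itemize}
At $0$: if both bidders had atoms at $0$, a positive-value type of either could bid $\varepsilon$, gain half the opponent's atom, and pay only $\varepsilon$.

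\emph{No gap.} Suppose $G_i(a)=G_i(b)<1$ with $a<b$, and take the maximal such interval, so $G_i$ charges every neighbourhood to the right of $b$. Bidder $j$ then gains nothing from bidding in $(a,b]$ rather than at $a$. Hence $G_j$ is flat on $(a,b]$, up to the atom issues already excluded by (ii). But then types of $i$ bidding just above $b$ could move down to just above $a$ without lowering their winning probability and at strictly lower cost. Since those bids have positive $G_i$-mass, this is a contradiction.

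\emph{Monotonicity.} By (ii), $P_i=G_{-i}$ is continuous away from $0$. I would use the single-crossing argument. If $b\in BR_i(v^1)$ and $b'\in BR_i(v^2)$ with $v^1<v^2$, add the two optimality inequalities to get $(v^2-v^1)(G_{-i}(b')-G_{-i}(b))\ge 0$. So the winning probabilities of the two bids are ordered. If $b'<b$, those probabilities must be equal, and then optimality of $b$ for $v^1$ forces $b\le b'$ (the same probability at a higher cost is worse), which contradicts $b'<b$.

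To pass from best-response sets to supports, I would take $N_i$ to be the null set of types that do not best respond. For the remaining types, $BR_i(v_i)$ is closed by continuity of $U_i$ on $(0,\infty)$, with $0$ handled separately. Hence $R_i(v_i)\subseteq BR_i(v_i)$, and the ordering passes to $\sup R_i(v^1)\le\inf R_i(v^2)$.

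\emph{Common support.} By (i) and (ii), each $\operatorname{supp}G_i$ is an interval $[\underline b_i,\bar b_i]$. If $\bar b_i>\bar b_j$, then bids of $i$ in $(\bar b_j,\bar b_i]$ win with probability one and are dominated by $\bar b_j+\varepsilon$, so $\bar b_1=\bar b_2=:\bar b$. If $\underline b_i>0$, then bidder $j$ would never bid in $(0,\underline b_i)$, because bidding $0$ wins with the same probability. This contradicts the no-gap property unless $\underline b_j\ge\underline b_i$; then the bidder whose lower endpoint is positive has bids near $\underline b_i$ that win with essentially zero probability yet cost $\underline b_i$, so that bidder should deviate to $0$. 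Finally, $\bar b>0$ because both bidding $0$ almost surely would contradict (ii).

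\emph{Main obstacle.} I expect the hardest step to be the measure-theoretic bookkeeping in (i) and (iii) for arbitrary $F_i$. Best responses are only required almost everywhere, so each "types who bid in a set $S$" statement must be phrased as $\int\sigma_i(v_i)(S)\,dF_i(v_i)>0$, and one must extract a positive-mass set of best-responding types. Atoms at $0$ also need separate treatment, because $U_i$ may be discontinuous there.
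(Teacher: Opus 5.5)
The paper does not prove Lemma~\ref{lm:basic}; it imports it from \citet{amann1996asymmetric}, \citet{siegel2014asymmetric} and \citet{zheng2019necessary}. Your deviation and single-crossing argument is the standard one from that literature and is correct in substance. Three steps, however, are stated more strongly than they hold and need small repairs.

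\begin{itemize}
\item \textbf{Step (iii).} The inclusion $R_i(v_i)\subseteq BR_i(v_i)$ is false in general. Take bidder 2 with value $2$ and bidder 1 with value $1$ for sure. Then $G_1(b)=\tfrac12+\tfrac b2$ on $[0,1]$, so $BR_2(2)=(0,1]$ while $R_2(2)=[0,1]$. Instead use $R_i(v_i)\subseteq\overline{BR_i(v_i)}$. This suffices because a set and its closure have the same supremum and infimum, so your inequality $\sup BR_i(v^1)\le\inf BR_i(v^2)$ transfers to the supports directly.

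\item \textbf{Step (i).} Moving from a bid $c$ just above $b$ down to $a$ does lower the winning probability, by $G_j(c)-G_j(b)$; it is not costless. The deviation is still profitable. Choose $\eta>0$ with $\sup V_i\,\bigl(G_j(b+\eta)-G_j(b)\bigr)<b-a$, using right-continuity of $G_j$ and bounded values. Then every bid in $(b,b+\eta)$ is strictly worse than $a$ for every type, while $G_i$ charges this interval by maximality of the flat.

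\item \textbf{Case 2 of (ii).} Jumping above $b$ dominates bids $b'\in[b-\delta,b)$ only for types with $v_i\,\Delta G_j(b)>\delta$. You get this uniformly from individual rationality. A best-responding type bidding $b'$ must have $v_i\ge b'\ge b-\delta$, since bidding $0$ guarantees a nonnegative payoff. So it suffices to choose $\delta<(b-\delta)\Delta G_j(b)$.

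\item \textbf{Step (iv).} The separate lower-endpoint argument is unnecessary. Your proof of (i) already covers the case $G_i(a)=G_i(b)=0$, which directly rules out $\underline b_i>0$.
\end{itemize}
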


We use the quantile representation of values rather than the c.d.f.\ itself. For a distribution $F$, define its generalized inverse by\footnote{This quantile representation is similar in spirit to \citet{lu2017monotone}, who transform the signal into a uniform random variable on $[0,1]$, while the value becomes a non-trivial function of the transformed signal; in their notation, $\lambda_i(x,y)$ plays the role of the quantile function $Q_i(p)$ used here. The two frameworks are more general in different directions: they allow interdependent values and correlated signals, while this paper dispenses with the continuity of $Q_i$ ($\lambda_i$, in their notation) that their argument requires and delivers an explicit solution rather than an ODE characterization within the independent private value environment.}
\begin{equation}
    \label{inver-quantile}
    Q(p)\coloneq F^{-1}(p)\coloneq \inf\{v:F(v)\ge p\},
    \qquad p\in(0,1].
\end{equation}
The function $Q$ is called the quantile function in the rest of the paper. The value assigned to $Q(0)$ is immaterial, since the event  $\{p=0\}$ happens with probability zero; for definiteness, we set $Q(0)$ equal to the infimum of the support of the value distribution.

The original auction can be represented as the following \textit{quantile auction}. For each bidder $i$, let $p_i$ be independently drawn from the uniform distribution $U[0,1]$, and let bidder $i$'s value be $Q_i(p_i)$. By the definition of the generalized inverse, $Q_i(p_i)$ has distribution $F_i$. Hence the quantile auction is distributionally equivalent to the original auction. When $F_i$ has atoms, the interval of quantiles corresponding to a given atom provides a convenient way to represent the bidder's randomization conditional on that value.

Although a strategy in the quantile auction could still be mixed---bidder $i$ could randomize over bids after observing $p_i$, the following lemma shows that the construct of mixed strategy is not needed to capture the equilibria of the original game: every equilibrium of the original auction admits a representation in quantile space by a deterministic, monotone and continuous function.

\begin{figure}[h]
\centering
\begin{tikzpicture}[
    >=Stealth,
    every node/.style={font=\normalsize},
    arr/.style={->, thick, shorten >=2pt, shorten <=2pt},
    arrmix/.style={->, thick, dashed, shorten >=2pt, shorten <=2pt},
    lab/.style={font=\small, fill=white, inner sep=2pt}
]

\node (v) at (-4.5, 0) {value $v_i$};
\node (p) at (4.5, 0)  {quantile $p_i$};
\node (b) at (0, 3.8)  {bid $b$};

\draw[arrmix, transform canvas={yshift=6pt}] (v) --
    node[lab, above] {$F_i$, uniformize within atoms} (p);
\draw[arr, transform canvas={yshift=-6pt}] (p) --
    node[lab, below] {$v_i = Q_i(p_i)$} (v);

\draw[arrmix] (v) --
    node[lab, sloped, above] {mixed strategy $\sigma_i(v_i)$} (b);

\draw[arr] (p) --
    node[lab, sloped, above] {$b_i(p_i)$, deterministic} (b);
\end{tikzpicture}
\caption{Quantile representation of a strategy. Solid arrows are deterministic functions; dashed arrows involve randomization. Lemma~\ref{increasing bid function} asserts that the dashed strategy $v_i\mapsto\sigma_i(v_i)$ in the original auction admits an equivalent representation along the solid path $p_i\mapsto b_i(p_i)$, with the randomization absorbed into the $v_i\to p_i$ step.}
\label{fig:quantile-rep}
\end{figure}

\begin{lemma}
    \label{increasing bid function}
Consider any Bayesian Nash equilibrium $(\sigma_1,\sigma_2)$ of the original auction (i.e., the auction in value space), possibly in mixed strategies. Let $G_i$ be the induced unconditional bid c.d.f.\ of bidder $i$, and define
    \begin{equation}\label{eq: lemma increasing bid}
        b_i(p)\coloneq G_i^{-1}(p)=\inf\{b\ge 0:G_i(b)\ge p\},\qquad p\in(0,1],
        \qquad b_i(0)\coloneq 0 .
    \end{equation}
Then $b_i:[0,1]\to[0,\bar b]$ is a deterministic, measurable function with the following two properties:
    \begin{enumerate}
\item[\textup{(i)}] \emph{Equivalence (a.e.):} for $p_i\sim U[0,1]$, the conditional distribution of $b_i(p_i)$ given $Q_i(p_i)=v_i$ coincides with $\sigma_i(v_i)$ for $F_i$-almost every value $v_i$. Equivalently, the joint distribution of $(Q_i(p_i),b_i(p_i))$ equals the equilibrium joint distribution of (value, bid) of bidder $i$. In particular, $(b_1,b_2)$ is a BNE of the quantile auction.
\item[\textup{(ii)}] \emph{Regularity:} each $b_i$ is continuous and weakly increasing on $[0,1]$, and strictly increasing on the set $\{p\in[0,1]:b_i(p)>0\}$. The only flat part of $b_i$ is a possible initial interval $[0,G_i(0)]$ (if $G_i(0)>0$) on which the bidder bids zero.
    \end{enumerate}
\end{lemma}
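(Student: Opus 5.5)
The plan is to prove part (ii) first and then use it for part (i), relying throughout on Lemma~\ref{lm:basic}.

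\textbf{Part (ii).} By Lemma~\ref{lm:basic}(ii) and (iv), $G_i$ is continuous on $(0,\infty)$. It has a possible jump only at $0$, of size $G_i(0)$, and it reaches $1$ at $\bar b$. By Lemma~\ref{lm:basic}(i) and (iv), $G_i$ is strictly increasing on $[0,\bar b]$: a flat piece $G_i(a)=G_i(b)<1$ with $0<a<b$ is a gap, and a flat piece starting at $0$ contradicts $0$ being in the support. The standard facts about generalized inverses then give the following.
\begin{itemize}
\item Where $G_i$ is strictly increasing, $G_i^{-1}$ is continuous.
\item Where $G_i$ is continuous, $G_i^{-1}$ is strictly increasing.
\item $b_i(p)=0$ exactly for $p\le G_i(0)$.
\item $b_i(1)=\bar b$.
\end{itemize}
Hence $b_i$ is continuous, weakly increasing, and strictly increasing once positive, which is (ii). Measurability is immediate from monotonicity.

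\textbf{Part (i).} Take $p_i\sim U[0,1]$. Since $G_i$ is continuous except at $0$, $b_i(p_i)$ has c.d.f.\ $G_i$. The main task is to match the joint law of $(Q_i(p_i),b_i(p_i))$ with the equilibrium joint law of $(v_i,\text{bid})$. I would use the monotonicity in Lemma~\ref{lm:basic}(iii): off the null set $N_i$, $v<v'$ implies $\sup R_i(v)\le\inf R_i(v')$. This makes the equilibrium joint distribution comonotone, i.e.\ supported on a monotone set. Both marginals are fixed ($F_i$ and $G_i$), and the comonotone coupling with fixed marginals is unique. It is realised by $(F_i^{-1}(U),G_i^{-1}(U))$ with a common uniform $U$, which is exactly $(Q_i(p_i),b_i(p_i))$.

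To prove this directly, I would show that for each $v$ and $b$ the joint c.d.f.\ equals $\min\{F_i(v),G_i(b)\}$. For the equilibrium coupling, I would use the monotone-support property. Take $b$ and set $v^*=\sup\{v\notin N_i: R_i(v)\cap[0,b]\neq\emptyset\}$. Types below $v^*$ bid at most $b$ almost surely, except possibly at the boundary type. Types above $v^*$ bid at least $b$, and since $G_i$ has no atom at $b>0$, they bid above $b$ almost surely. So the events $\{v_i\le v\}$ and $\{\text{bid}\le b\}$ are nested up to null sets, and the probability of the intersection is the minimum. If $v^*$ is an atom of $F_i$, its conditional bid distribution straddles $b$. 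The nesting still holds because the lower types sit entirely below and the higher types entirely above. For the quantile coupling, $\{Q_i(p)\le v\}=\{p\le F_i(v)\}$ and $\{b_i(p)\le b\}=\{p\le G_i(b)\}$ up to null sets, so that joint c.d.f.\ is also $\min\{F_i(v),G_i(b)\}$. Equal joint c.d.f.s give equal joint laws, and disintegration then gives the conditional statement for $F_i$-a.e.\ $v_i$.

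\textbf{BNE claim.} In the quantile auction, the opponent's bid law is again $G_{-i}$. Quantile $p$ faces the same payoffs as value $Q_i(p)$. The equilibrium assigns probability one to best responses for a.e.\ value, and the joint laws coincide. Therefore $b_i(p)\in BR_i(Q_i(p))$ for a.e.\ $p$, so $(b_1,b_2)$ is a BNE of the quantile auction.

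\textbf{Main obstacle.} I expect the hardest step to be the careful nesting argument with atoms and with the null set $N_i$. This covers the boundary type $v^*$ and the case $b=0$, where $G_i$ may jump. At $b=0$ I would use the fact that the zero-bidding types form a lower set of values. Ties at bids with mass zero are harmless.
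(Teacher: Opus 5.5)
Your proposal is correct and follows essentially the same route as the paper: part (ii) comes from the no-gap, no-atom and common-support properties plus standard generalized-inverse facts, and part (i) shows that both the equilibrium joint c.d.f.\ and that of $(Q_i(p_i),b_i(p_i))$ equal $\min\{F_i(v),G_i(b)\}$ by using the monotone supports. The paper runs the nesting step by contradiction: if both $\{v_i\le v,\ \text{bid}>b\}$ and $\{v_i>v,\ \text{bid}\le b\}$ had positive mass, there would be $v^1<v^2$ outside $N_i$ with $\sup R_i(v^1)>\inf R_i(v^2)$. This avoids your threshold type, the boundary-atom and $b=0$ cases, and the appeal to no-atoms; in any case, by your own definition of $v^*$, types above $v^*$ already have $R_i(v)\subseteq(b,\infty)$, so that appeal was unnecessary.
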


The proof is given in Appendix~\ref{Appendix: lemma increasing bid}. We briefly explain here how the lemma should be interpreted and used. The passage from value space to quantile space does nothing but rearrange the randomization that occurs at atoms (Figure~\ref{fig:quantile-rep}). When $F_i$ has an atom at $v_i$, the quantiles $p_i\in I_{v_i}=(F_i(v_i^-),F_i(v_i)]$ map to $v_i$.\footnote{Whether the left endpoint $p_i=F_i(v_i^-)$ satisfies $Q_i(p_i)=v_i$ depends on whether there is a gap to the left of $v_i$. However, a single point does not matter for the cumulative distribution, which is what we care about here.} The lemma shows that the bid $b_i(p_i)$, as $p_i$ ranges over $I_{v_i}$, reproduces the mixed bid of type $v_i$: $b_i(p_i)$ has distribution $\sigma_i(v_i)$ when $p_i$ is uniform on $I_{v_i}$. Because the value is constant on $I_{v_i}$, which quantile in $I_{v_i}$ draws which bid is payoff-irrelevant: the bids generated by $\sigma_i(v_i)$ may be assigned to the quantiles in $I_{v_i}$ in any order without changing the conditional bid distribution, hence without changing the equilibrium. Among these equivalent arrangements, defining $b_i=G_i^{-1}$ selects the increasing one.

Since $(b_1,b_2)$ reproduces the equilibrium joint distribution of value and bid, it induces the same unconditional bid distributions $(G_1,G_2)$ and hence leaves every best-response set unchanged. As it agrees with $(\Gamma_1,\Gamma_2)$ outside a null set, $(b_1,b_2)$ is a BNE of the quantile auction. Conversely, an equilibrium of the quantile auction can be directly mapped into an equilibrium of the original auction. So the two auctions are equivalent to analyze. In the rest of the paper, we analyze the quantile auction as it gives a cleaner result.

Define
\begin{equation}
\label{def S}
    S_i(p)\coloneq \int_p^1 \frac{dt}{Q_i(t)}\quad \forall p\in (0,1].
\end{equation}
Since $F_i(0)=0$, we have $Q_i(p)>0$ for all $p>0$. Hence $S_i$ is well-defined, finite, continuous, and strictly decreasing on $(0,1]$. It may diverge as $p\to0^+$, so we allow it to take values in the extended real line by defining
\[
S_i(0)\coloneq \lim_{p\to0^+}S_i(p)\in \mathbb{R}_+\cup \{+\infty\}.
\]
All comparisons involving $S_i(0)$ are understood in the extended real line. Since $S_i$ depends only on primitives and is defined symmetrically for the two bidders, we relabel bidders, if necessary, so that
\begin{equation}
\label{relabel}
    S_1(0)\geq S_2(0).
\end{equation}
The labels are fixed in this way throughout the rest of the paper. Notice that
\[
S_i(0) = \int_0^1 \frac{dt}{Q_i(t)}
       = \mathbb{E}\!\left[\frac{1}{v_i}\right]
\]
is the reciprocal of the harmonic mean of bidder $i$'s value distribution. Equivalently, the relabeling \eqref{relabel} sets bidder 2 to be the one with the (weakly) higher harmonic mean. Although the harmonic mean is a more natural primitive than its reciprocal, the analysis is cleaner in terms of $S_i$, which we therefore use throughout.\footnote{\label{footnote: marginal cost}In \citet{zheng2019necessary}, a bidder's type is interpreted as the reciprocal of the cost per unit of bid, so the ex post payoff of bidder $i$ is $\mathbf{1}\{i\text{ wins}\}-b_i/v_i$. This game is strategically equivalent to the all-pay auction considered here, since multiplying this payoff by the positive constant $v_i$ gives $v_i\mathbf{1}\{i\text{ wins}\}-b_i$. In \citeposs{zheng2019necessary} setup, $S_i(0)$ has a more direct economic interpretation: it is the expected cost per unit of bid for bidder $i$. Therefore, under \eqref{relabel}, bidder 2 is the bidder with the weakly lower expected cost ex ante.}

We are ready to present the main result of this paper, which provides an explicit formula for the equilibrium bid functions.
\begin{theo}\label{main theo}
In the two-bidder all-pay auction with bidders labeled according to \eqref{relabel}, there exists a BNE given by:
    \begin{equation}
    \label{bid function}
    b_1(p) =
    \begin{cases}
        0 & p\in[0,\alpha]\\
        \displaystyle\int_{\alpha}^p Q_2(\phi(s))\,ds & p\in (\alpha,1],
    \end{cases}
    \qquad
    b_2(p) = \int_0^p Q_1(\phi^{-1}(s))\,ds,
    \end{equation}
where
    \[
    \alpha =
    \begin{cases}
    0 & \text{if } S_1(0)=S_2(0) \\
    S_1^{-1}(S_2(0)) & \text{if } S_1(0)>S_2(0),
    \end{cases}
    \qquad
    \phi(p) =
    \begin{cases}
    0 & p \in [0, \alpha] \\
    S_2^{-1}(S_1(p)) & p \in (\alpha, 1],
    \end{cases}
    \]
and $\phi^{-1}$ denotes the inverse of $\phi$ restricted to $[\alpha, 1]$, on which $\phi$ is strictly increasing. Moreover, the equilibrium is unique in the sense that the induced joint distribution of $(v_i,b_i)$ is unique for each bidder in all possible equilibria.\footnote{Equilibrium existence in distributional strategies in the all-pay auction is already established by \citet{olszewski2023equilibrium} in a more general setup; the value of Theorem~\ref{main theo} lies in the explicit construction and uniqueness.}
\end{theo}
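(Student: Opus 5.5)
Throughout, I work in the quantile auction and use Lemma~\ref{increasing bid function} to restrict attention to continuous, weakly increasing bid functions $b_1,b_2$. These functions are strictly increasing wherever they are positive and share the range $[0,\bar b]$ (Lemma~\ref{lm:basic}(iv)).

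\textbf{Matching function.} The first step is to introduce the matching function $\phi$ defined by $b_1(p)=b_2(\phi(p))$, i.e. $\phi=G_2\circ b_1$ on the region where $b_1>0$. By no-atom and no-gap, $\phi$ is continuous and strictly increasing from the top of bidder~1's flat part to $1$, and $\phi(1)=1$. A quantile $p$ of bidder~1 bidding $b$ faces the payoff $Q_1(p)G_2(b)-b$. Writing $b=b_1(q)$, the payoff becomes $\pi_1(p,q)=Q_1(p)\phi(q)-b_1(q)$. Optimality of $q=p$ for a.e.\ $p$ means $b_1$ is the ``envelope'' of a convex conjugate pair. Concretely, the function $x\mapsto \max_q[x\phi(q)-b_1(q)]$ is convex in $x$, and $\phi(p)$ is a subgradient at $x=Q_1(p)$.

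\textbf{First-order conditions via convex analysis.} This is where the lack of regularity is handled. I will derive, without differentiability, that $db_1=Q_1(p)\,d\phi(p)$ as Stieltjes measures on $[\alpha,1]$, and symmetrically $db_2=Q_2(q)\,d\phi^{-1}(q)$. The argument runs through two-point incentive inequalities. Comparing $p<p'$ gives
\[
Q_1(p)\bigl(\phi(p')-\phi(p)\bigr)\le b_1(p')-b_1(p)\le Q_1(p')\bigl(\phi(p')-\phi(p)\bigr).
\]
Summing these over partitions gives Riemann--Stieltjes sums. Since $Q_1$ is monotone, hence has countably many jumps, and $\phi$ is continuous, the sums converge to $b_1(p)=\int Q_1\,d\phi$.

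Doing the same for bidder 2 and using $b_1=b_2\circ\phi$ yields $Q_1(p)\,d\phi(p)=Q_2(\phi(p))\,dp$ in the sense that the two measures agree. This step, with its treatment of atoms and flat parts of $Q_i$ and of the null sets where best responses may fail, is the main obstacle. I would handle it by working with measure identities, as in $db_2(\phi)=Q_1\,d\phi$, rather than pointwise derivatives. Rearranging gives $d\phi/Q_2(\phi)=dp/Q_1(p)$. A change of variables valid for monotone continuous $\phi$ integrates this from $p$ to $1$ to $S_2(\phi(p))=S_1(p)$, using $\phi(1)=1$. Hence $\phi=S_2^{-1}\circ S_1$ wherever $b_1>0$.

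\textbf{The flat part and the labelling.} Next I pin down the flat part. At most one bidder has an atom at $0$ (Lemma~\ref{lm:basic}(ii)), so exactly one $b_i$ can start with a flat interval. If bidder~1 is flat on $[0,\alpha]$, then $\phi(\alpha^+)=0$ forces $S_1(\alpha)=S_2(0)$. This is feasible only when $S_1(0)\ge S_2(0)$, which matches \eqref{relabel} and gives $\alpha=S_1^{-1}(S_2(0))$. If instead bidder~2 were flat with $S_1(0)>S_2(0)$, the symmetric equation would have no solution, a contradiction. When the two values of $S_i(0)$ are equal, $\alpha=0$.

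\textbf{Recovering the bids.} The bid formulas then follow by substitution. We have $b_1(p)=\int_\alpha^p Q_1\,d\phi=\int_\alpha^p Q_2(\phi(s))\,ds$ by the measure identity. Likewise $b_2(q)=\int_0^q Q_2\,d\phi^{-1}=\int_0^q Q_1(\phi^{-1}(s))\,ds$. This proves uniqueness of the bid functions and hence, by Lemma~\ref{increasing bid function}(i), uniqueness of the joint distributions of $(v_i,b_i)$.

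\textbf{Existence.} For existence I verify directly that \eqref{bid function} is an equilibrium. The two formulas give $b_1=b_2\circ\phi$, equal ranges, and continuity. Bidder~1's payoff from imitating $q$ is
\[
\pi_1(p,q)-\pi_1(p,p)=\int_p^q \bigl(Q_1(p)-Q_1(s)\bigr)\,d\phi(s)\le 0,
\]
which holds by monotonicity of $Q_1$. Bids above $\bar b$ are dominated, and bids in the flat region give bidder~1 payoff $0$, which is never better. The check for bidder~2 is symmetric, with $b_2(0)=0$ yielding payoff $0$.
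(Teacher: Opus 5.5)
\textbf{Verdict.} Your proposal is correct. Its core step takes a genuinely different route from the paper's.

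\textbf{The paper's route.} The paper first proves that $b_1,b_2$ are convex, using supporting-line inequalities on the dense best-response set together with Lemma~\ref{lm:convex}. From this it deduces that $\phi$ is locally Lipschitz. It then reads off the first-order conditions $Q_1(p)=b_2'(\phi(p))$ and $Q_2(\phi(p))=b_1'(p)$ a.e.\ from subdifferential inclusions. Finally, the chain rule for absolutely continuous compositions gives the ODE $Q_2(\phi)=Q_1\phi'$ a.e. Its verification again relies on convexity of the constructed $b_i$ (one-sided derivatives bracket $Q_1(p)$ and $Q_2(\phi(p))$), plus a separate right-derivative check at zero for the flat region.

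\textbf{Your route.} You use the two-sided revealed-preference sandwich for pairs of best-responding quantiles and pass to Riemann--Stieltjes sums. This gives the measure identities $db_1=Q_1\,d\phi$ and $db_2=Q_2\,d\phi^{-1}$ directly. Pushing the second forward by the homeomorphism $\phi^{-1}$, using $b_1=b_2\circ\phi$, gives $db_1=Q_2(\phi(s))\,ds$, and hence $Q_1\,d\phi=Q_2(\phi)\,dp$.

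\textbf{What each approach buys.} Your argument never differentiates anything. It yields absolute continuity of $\phi$ as a by-product, and it absorbs atoms and flat parts of $Q_i$ automatically. Your verification identity $\pi_1(p,q)-\pi_1(p,p)=\int_p^q(Q_1(p)-Q_1(s))\,d\phi(s)\le 0$ covers every $p$, including the zero-bid region and the boundary quantile $\alpha$, in one line. The cost is that you rely on change-of-variables and pushforward facts for Stieltjes measures under continuous monotone maps, which you invoke but do not spell out. The paper's route instead keeps the classical FOC/ODE structure and exhibits the convexity of the quantile bid functions explicitly.

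\textbf{Small points to fix.}
\begin{itemize}
\item The Riemann--Stieltjes sums converge not because $Q_1$ has countably many jumps. They converge because upper minus lower sum is at most $\max_k\Delta\phi_k\,(Q_1(p)-Q_1(p_0))\to 0$ by uniform continuity of $\phi$.
\item The partition points must be chosen in the full-measure set of best-responding quantiles, which is dense.
\item When $S_1(0)=S_2(0)$, you still need to rule out a flat part for bidder~2. The same contradiction $S_1(0)=S_2(\hat\alpha)<S_2(0)$ does this.
\item In the existence check with $\alpha>0$, bidder~2's bid of $0$ yields $Q_2(q)\alpha/2$ by the tie-break, not $0$. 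The deviation is still unprofitable because $\pi_2(q,q)\ge\lim_{q'\to0^+}\pi_2(q,q')=Q_2(q)\alpha$.
\end{itemize}
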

\begin{rmk}
The uniqueness statement is made at the level of the induced joint distribution of $(v_i,b_i)$. The formula in Theorem~\ref{main theo} selects the monotone quantile representative of this distribution. If $F_i$ has an atom at $v$, then all quantiles in the interval $(F_i(v^-),F_i(v)]$ correspond to the same value $v$. Thus a measure-preserving rearrangement of $b_i$ within this interval leaves the conditional distribution of bids given $v$, and hence the induced joint distribution of $(v_i,b_i)$, unchanged. In addition, changes on Lebesgue-null sets of quantiles are irrelevant under the a.e. best-response definition of equilibrium. Equivalently, in the original value representation, equilibrium strategies are unique up to changes on $F_i$-null sets of values.
\end{rmk}

We first check that the formula is well defined. Since each $S_i$ is continuous and strictly decreasing on $(0,1]$, it has a well-defined inverse on its range. If $S_1(0)>S_2(0)$, then necessarily $S_2(0)<+\infty$. Since $S_1(1)=0<S_2(0)<S_1(0)$, there is a unique $\alpha\in(0,1)$ such that $S_1(\alpha)=S_2(0)$. If $S_1(0)=S_2(0)$, then $\alpha=0$.

The function $\phi$ will be interpreted in the proof as the quantile matching function: bidder 1 with quantile $p$ and bidder 2 with quantile $\phi(p)$ submit the same bid in equilibrium, that is,
\begin{equation}
    \label{eq:matching}
    b_1(p)=b_2(\phi(p)).
\end{equation}
For every $p\in(\alpha,1]$, $S_1(p)$ belongs to the range of $S_2$, so $\phi(p)=S_2^{-1}(S_1(p))$ is well defined. Moreover, as $p\to\alpha^+$, we have $S_1(p)\to S_1(\alpha^+)=S_2(0^+)$, and hence $\phi(p)\to 0^+$. Thus the extension $\phi(\alpha)=0$ is continuous. Since both $S_1$ and $S_2$ are strictly decreasing, $\phi$ is strictly increasing on $[\alpha,1]$, with $\phi(\alpha)=0$ and $\phi(1)=1$.

The detailed proof is given in Appendix~\ref{appendix:proof main theo}. The following derivation explains the proof idea. Suppose for a moment that the functions $Q_i,b_i,\phi$ are smooth. Given bidder 2's equilibrium strategy, bidder 1 with quantile $p$ can index each deviation by the bidder 1 quantile $\tilde p$ that would submit the same bid in equilibrium. Thus bidder 1's problem can be written as
\begin{equation*}
    \max_{\tilde p\in[0,1]}
    \bigl\{Q_1(p)\phi(\tilde p)-b_1(\tilde p)\bigr\}
    =
    \max_{\tilde p\in[0,1]}
    \bigl\{Q_1(p)\phi(\tilde p)-b_2(\phi(\tilde p))\bigr\},
\end{equation*}
where the equality uses the matching identity \eqref{eq:matching}. At an interior equilibrium quantile $p\in(\alpha,1)$, the equilibrium choice is $\tilde p=p$. The first-order condition for the second maximization problem therefore gives
\[
    Q_1(p)=b_2'(\phi(p)).
\]
Similarly, bidder 2's first-order condition gives
\[
    Q_2(\phi(p))=b_1'(p).
\]
Differentiating \eqref{eq:matching} yields
\[
    b_1'(p)=b_2'(\phi(p))\phi'(p).
\]
Combining these identities gives
\[
    \frac{\phi'(p)}{Q_2(\phi(p))}=\frac{1}{Q_1(p)}.
\]
Using the boundary condition $\phi(1)=1$, this differential equation implies
\begin{equation}\label{eq:quantile}
    \int_{\phi(p)}^1 \frac{1}{Q_2(t)}dt
    =
    \int_p^1\frac{1}{Q_1(t)}dt
    \quad \forall p \in (\alpha,1],
\end{equation}
i.e., $S_2(\phi(p))=S_1(p)$ for all $p\in(\alpha,1]$.\footnote{For smooth distributions, a value-space counterpart of \eqref{eq:quantile} appears as equation~(7) of \citet{kirkegaard2008comparative}.} Thus $\phi=S_2^{-1}\circ S_1$, and integrating the first-order conditions recovers the bid functions in \eqref{bid function}. For general primitives, the functions need not be smooth. The proof therefore establishes the required absolute continuity and convexity, and derives the first-order conditions in an a.e.\ sense.

Equation \eqref{eq:quantile} has a useful cost interpretation. Under the reciprocal-cost interpretation of types in \citet{zheng2019necessary} (see footnote~\ref{footnote: marginal cost}), $1/Q_i(t)$ is the marginal cost of a unit of bid for the type at quantile $t$. Hence $S_i(p)$ is the aggregate marginal cost in the upper tail of bidder $i$'s type distribution above quantile $p$: since the quantile index is uniform, it equals the tail probability $1-p$ times the average marginal cost among quantiles above $p$. Thus the equilibrium matching rule $\phi$ does not match raw percentiles across bidders. Instead, it matches bidder 1's quantile $p$ with bidder 2's quantile $\phi(p)$ so that the two remaining upper tails have the same aggregate marginal cost.

Because $Q_i(p)>0$ for every $p>0$, \eqref{bid function} implies that $b_1$ is constant on $[0,\alpha]$ and strictly increasing on $(\alpha,1]$, while $b_2$ is strictly increasing on $[0,1]$. This immediately yields the following characterization of when the equilibrium is pure in the original value representation.
\begin{coro}
In the original value representation, the equilibrium is in pure strategies\footnote{A pure strategy means that each value type chooses a single bid.} if and only if bidder 1 has no value atom whose quantile interval intersects $(\alpha,1]$, and bidder 2 has no value atoms. In particular, if both value distributions are continuous, then the equilibrium is in pure strategies.
\end{coro}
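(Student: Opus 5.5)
The corollary follows from Lemma~\ref{increasing bid function}, Theorem~\ref{main theo}, and the strict monotonicity of the bid functions noted just above. By Theorem~\ref{main theo} and its uniqueness clause, every equilibrium induces the same joint distribution of $(v_i,b_i)$, realized by the monotone representative \eqref{bid function}. By Lemma~\ref{increasing bid function}(i), the conditional bid distribution of type $v$ is therefore the law of $b_i(p_i)$ when $p_i$ is uniform on the quantile set $Q_i^{-1}(v)$. So the equilibrium is pure in value space exactly when, for $F_i$-a.e.\ $v$, the function $b_i$ is a.e.\ constant on $Q_i^{-1}(v)$. This criterion is independent of which equilibrium is chosen, so "the equilibrium is pure" is well defined. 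Recall that $Q_i^{-1}(v)$ is, up to endpoints, the interval $I_v=(F_i(v^-),F_i(v)]$. This interval is nondegenerate iff $v$ is an atom of $F_i$, and otherwise it is a single point. The set of values with degenerate quantile sets has full measure apart from the atoms, so only atoms matter.

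For bidder 2, $b_2(p)=\int_0^p Q_1(\phi^{-1}(s))\,ds$ has integrand $Q_1(\phi^{-1}(s))>0$ for $s>0$, because $\phi^{-1}(s)\ge\alpha$ and $\phi^{-1}(s)>0$ for $s>0$. Hence $b_2$ is strictly increasing on $[0,1]$. On any nondegenerate interval $I_v$, the variable $b_2(p_2)$ is then nonconstant; indeed it has a continuous, nondegenerate distribution. So bidder 2 plays purely iff $F_2$ has no atoms. Atoms form a countable set, but each atom has positive mass, so no atom can be discarded as an $F_2$-null exception.

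For bidder 1, $b_1\equiv 0$ on $[0,\alpha]$, and $b_1$ is strictly increasing on $(\alpha,1]$ because $Q_2(\phi(s))>0$ for $s>\alpha$. Take an atom $v$ with $I_v=(l,r]$. If $r\le\alpha$, the whole interval lies in the flat region, so type $v$ bids $0$ deterministically. If $I_v$ meets $(\alpha,1]$, then $(l,r]\cap(\alpha,1]$ is a nondegenerate interval, since $r>\alpha$ and $r>l$. On this interval $b_1$ is strictly increasing, so the conditional law puts positive mass on a nonconstant set of bids and type $v$ mixes. Combining the two bidders gives the stated equivalence. For the "in particular" clause, continuous $F_1$ and $F_2$ have no atoms, so both conditions hold trivially.

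The only delicate point is the endpoint and null-set bookkeeping. One must check that the ambiguity of $Q_i$ at the left endpoint of $I_v$, and the measure-zero point $p=\alpha$, cannot change these conclusions. This holds because every criterion above is stated up to Lebesgue-null sets of quantiles.
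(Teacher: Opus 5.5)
Your proposal is correct and follows essentially the same route as the paper. The paper derives the corollary directly from the strict monotonicity of $b_2$ on $[0,1]$ and of $b_1$ off its flat segment $[0,\alpha]$, together with the correspondence between value atoms and quantile intervals; you spell out the same argument in more detail, including the use of the uniqueness clause to make ``the equilibrium is pure'' well defined.
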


Theorem~\ref{main theo} enables straightforward and direct computation of equilibrium bid strategies. For example, distributions with both a density part and mass points are, to the best of my knowledge, uncommon in the all-pay auction literature. By contrast, with the quantile method and Theorem~\ref{main theo}, such a mixture poses no additional difficulty, either conceptually or operationally. We illustrate this with a simple example.
\begin{eg}\label{eg:equilibrium calculation}
Consider the following auction. Bidder 1 has value $v_1\sim U[0,1]$. Bidder 2's value distribution has a continuous component of total mass $1/2$, uniform on $[0,1]$, and an atom of mass $1/2$ at $v=1$.

The quantile functions are
\[
Q_1(p)=p,\qquad 
Q_2(p)=
\begin{cases}
2p&0\le p\le 1/2\\
1&1/2<p\le 1.
\end{cases}
\]
According to \eqref{def S},
\[
S_1(p)=-\ln p, \qquad
S_2(p)=
\begin{cases}
-\frac12\ln(2p)+\frac12&0<p\le 1/2\\[4pt]
1-p&1/2<p\le 1.
\end{cases}
\]
Both $S_1(0)$ and $S_2(0)$ diverge, so $\alpha=0$. Solving $S_2(\phi(p))=S_1(p)$ gives
\[
\phi(p)=
\begin{cases}
\frac e2p^2&0\le p\le e^{-1/2}\\[4pt]
1+\ln p&e^{-1/2}<p\le 1.
\end{cases}
\]
According to \eqref{bid function}, the equilibrium bid functions are:
\[
b_1(p)=
\begin{cases}
\frac e3p^3&0\le p\le e^{-1/2}\\[4pt]
p-\frac{2}{3\sqrt e}&e^{-1/2}<p\le 1,
\end{cases}
\]
and
\[
b_2(p)=
\begin{cases}
\frac23\sqrt{\frac2e}\,p^{3/2}&0\le p\le 1/2\\[4pt]
e^{p-1}-\frac{2}{3\sqrt e}&1/2<p\le 1.
\end{cases}
\]
Equivalently, in the value space, bidder 1 bids:
\[
\beta_1(v)=
\begin{cases}
\frac e3v^3&0\le v\le e^{-1/2}\\[4pt]
v-\frac{2}{3\sqrt e}&e^{-1/2}<v\le 1,
\end{cases}
\]
bidder 2's continuous types $v_2\in[0,1)$ bid
\[
\beta_2(v_2)=\frac{v_2^{3/2}}{3\sqrt e},
\]
while the atom type $v_2=1$ mixes over the bid interval
\[
\left[\frac{1}{3\sqrt e},\,1-\frac{2}{3\sqrt e}\right]
\]
with density $g(b)=\frac{2}{b+\frac{2}{3\sqrt{e}}}$. Figure~\ref{fig:example1-revised} illustrates the quantile matching function and the resulting equilibrium bid functions, including how bidder 2's atom is represented in quantile space.
\end{eg}

\begin{figure}[!htbp]
\begin{singlespace}
\centering
\def\athresh{0.6065306597}

\begin{tikzpicture}

\begin{axis}[
  name=exonea,
  at={(0,0)},
  anchor=south west,
  width=0.40\linewidth,
  height=0.40\linewidth,
  scale only axis,
  xmin=0,xmax=1,
  ymin=0,ymax=1,
  axis lines=left,
  axis line style={-{Latex[length=3mm]}},
  xlabel={$p_1$},
  ylabel={$p_2$},
  xtick={0,0.6065306597,1},
  xticklabels={$0$,$e^{-1/2}$,$1$},
  ytick={0,0.5,1},
  yticklabels={$0$,$1/2$,$1$},
  tick label style={font=\footnotesize},
  label style={font=\small},
  tick align=outside,
  title={\textbf{(a) Quantile matching and allocation}},
  title style={font=\normalsize,yshift=0.6em},
  clip=false
]
  \path[fill=cOrange,fill opacity=0.13] (axis cs:0,0) rectangle (axis cs:1,1);

  \addplot[draw=none,fill=cBlue,fill opacity=0.16,domain=0:0.6065306597,samples=220,forget plot]
    {exp(1)/2*x^2} \closedcycle;
  \addplot[draw=none,fill=cBlue,fill opacity=0.16,domain=0.6065306597:1,samples=220,forget plot]
    {1+ln(x)} \closedcycle;

  \addplot[very thick,cGray!85!teal,domain=0:0.6065306597,samples=220,forget plot]
    {exp(1)/2*x^2};
  \addplot[very thick,cGray!85!teal,domain=0.6065306597:1,samples=220,forget plot]
    {1+ln(x)};

  \addplot[densely dashed,cGray!85,forget plot] coordinates {(0,0.5) (1,0.5)};
  \addplot[densely dashed,cGray!85,forget plot] coordinates {(0.6065306597,0) (0.6065306597,1)};

  \node[font=\footnotesize,cOrange!90!black] at (axis cs:0.70,0.90)
    {bidder 2 wins};
  \node[font=\footnotesize,cBlue!80!black] at (axis cs:0.76,0.22)
    {bidder 1 wins};

  \node[font=\footnotesize,align=center,anchor=center] (matchlabel)
    at (axis cs:0.40,0.72) {quantile matching function\\$p_2=\phi(p_1)$};
  \draw[-{Latex[length=2.0mm]},cGray!90]
    (axis cs:0.40,0.62) -- (axis cs:0.40,{exp(1)/2*0.40^2});
\end{axis}

\begin{axis}[
  name=exoneb,
  at={(0.54\linewidth,0)},
  anchor=south west,
  width=0.43\linewidth,
  height=0.40\linewidth,
  scale only axis,
  xmin=0,xmax=1,
  ymin=0,ymax=0.65,
  axis lines=left,
  axis line style={-{Latex[length=3mm]}},
  xlabel={quantile $p$},
  ylabel={bid},
  xtick={0,0.5,1},
  xticklabels={$0$,$1/2$,$1$},
  extra x ticks={0.6065306597},
  extra x tick labels={$e^{-1/2}$},
  extra x tick style={xticklabel style={xshift=11pt}},
  ytick={0,0.2,0.4,0.6},
  yticklabels={$0$,$0.2$,$0.4$,$0.6$},
  tick label style={font=\footnotesize},
  label style={font=\small},
  legend style={draw=none,fill=none,font=\footnotesize,at={(0.03,0.97)},anchor=north west},
  legend cell align=left,
  tick align=outside,
  title={\textbf{(b) Equilibrium bid functions}},
  title style={font=\normalsize,yshift=0.6em},
  clip=false
]
  \addplot[very thick,cBlue,domain=0:0.6065306597,samples=180]
    {exp(1)/3*x^3};
  \addlegendentry{$b_1(p)$}
  \addplot[very thick,cBlue,domain=0.6065306597:1,samples=180,forget plot]
    {x-2/(3*sqrt(exp(1)))};

  \addplot[very thick,cOrange,domain=0:0.5,samples=180]
    {2/3*sqrt(2/exp(1))*x^(3/2)};
  \addlegendentry{$b_2(p)$: continuous types}
  \addplot[very thick,cOrange,densely dashed,domain=0.5:1,samples=180]
    {exp(x-1)-2/(3*sqrt(exp(1)))};
  \addlegendentry{$b_2(p)$: atom $v_2=1$}

  \addplot[densely dotted,cGray!85,forget plot] coordinates {(0.5,0) (0.5,0.65)};
  \addplot[densely dotted,cGray!85,forget plot] coordinates {(0.6065306597,0) (0.6065306597,0.65)};
\end{axis}

\end{tikzpicture}

\caption{Example~\ref{eg:equilibrium calculation}. Panel (a) shows the quantile matching function $\phi$: bidder 1 wins below the matching function, and bidder 2 wins above it. Panel (b) shows the equilibrium bid functions. The dashed segment of $b_2$ on $p\in [1/2,1]$ represents the randomization of the atom at $v_2=1$ over the bid interval $[1/(3\sqrt{e}),1-2/(3\sqrt{e})]$.}
\label{fig:example1-revised}
\end{singlespace}
\end{figure}

A corollary of Theorem~\ref{main theo} implies that in the two-bidder all-pay auction, the equilibrium allocation is ex post efficient only in the symmetric case. This contrasts with second-price auctions, where the equilibrium allocation is always efficient.

\begin{coro}\label{coro:efficient}
The equilibrium allocation is ex post efficient with probability one if and only if $F_1=F_2$.
\end{coro}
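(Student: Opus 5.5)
We work directly with the explicit equilibrium of Theorem~\ref{main theo}, written in quantile space. Ex post efficiency means that with probability one over $(p_1,p_2)\sim U[0,1]^2$, whenever $Q_1(p_1)>Q_2(p_2)$ bidder 1 wins, and whenever $Q_2(p_2)>Q_1(p_1)$ bidder 2 wins. Ties in bids occur with probability zero by the no-atom property of Lemma~\ref{lm:basic}, except possibly at bid zero. By the matching identity \eqref{eq:matching} and the strict monotonicity of $b_1$ on $(\alpha,1]$ and of $b_2$, bidder 1 with quantile $p_1>\alpha$ wins exactly when $p_2<\phi(p_1)$. A bidder 1 with $p_1\le\alpha$ bids zero and loses to almost every $p_2$. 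So the equilibrium allocation is the region below the graph of $\phi$, as in panel (a) of Figure~\ref{fig:example1-revised}.

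\emph{Sufficiency.} If $F_1=F_2$, then $Q_1=Q_2$ and $S_1=S_2$. Hence $\alpha=0$ and $\phi$ is the identity, so bidder 1 wins iff $p_1>p_2$ (a.s.). Since $Q_1=Q_2$ is weakly increasing, $Q_1(p_1)>Q_2(p_2)$ implies $p_1>p_2$, so bidder 1 wins. Symmetrically for bidder 2. Hence the allocation is efficient a.s.

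\emph{Necessity.} Suppose the allocation is efficient a.s. We first show $\alpha=0$. If $\alpha>0$, then on the positive-measure set $\{p_1<\alpha\}$ bidder 1 loses for a.e.\ $p_2$. Efficiency would then force $Q_1(p_1)\le Q_2(p_2)$ for a.e.\ $p_2$, that is, $Q_1(p_1)\le Q_2(0^+)=\inf V_2$ for a.e.\ $p_1<\alpha$. I expect this not to be an outright contradiction. Instead, I will combine it with the matching condition: efficiency requires that for a.e.\ $p_1$ and a.e.\ $p_2$, $p_2<\phi(p_1)$ implies $Q_2(p_2)\le Q_1(p_1)$, and $p_2>\phi(p_1)$ implies $Q_2(p_2)\ge Q_1(p_1)$. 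Using monotonicity and taking limits, this gives $Q_2(\phi(p_1)^-)\le Q_1(p_1)\le Q_2(\phi(p_1)^+)$ for a.e.\ $p_1>\alpha$. Since $\phi$ is continuous and strictly increasing, this sandwich yields $Q_1(p)=Q_2(\phi(p))$ for a.e.\ $p>\alpha$, excluding the countably many jump points of $Q_2\circ\phi$.

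Now substitute into the ODE form of \eqref{eq:quantile}. The map $\phi=S_2^{-1}\circ S_1$ is absolutely continuous with $\phi'(p)=Q_2(\phi(p))/Q_1(p)$ a.e.; this absolute continuity is where I expect the main technical care, which I would take from the appendix proof of Theorem~\ref{main theo}. Then $Q_1=Q_2\circ\phi$ gives $\phi'=1$ a.e. on $(\alpha,1]$, and with $\phi(1)=1$ this yields $\phi(p)=p$. Continuity at $\alpha$ then gives $\phi(\alpha)=\alpha$. But $\phi(\alpha)=0$, so $\alpha=0$. Therefore $\phi$ is the identity on $[0,1]$, and $Q_1=Q_2$ a.e. Since quantile functions are left-continuous and monotone, $Q_1=Q_2$ everywhere on $(0,1]$, and hence $F_1=F_2$.

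The delicate point is the sandwich step in the presence of atoms, where $Q_2$ is flat and $\phi$ might map into a flat region. Even there the two-sided inequality pins $Q_1(p)$ to the common flat value, so the a.e.\ identity survives.
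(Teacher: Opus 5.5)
Your proof is correct and follows essentially the same route as the paper's: you derive $Q_1=Q_2\circ\phi$ a.e.\ on $(\alpha,1]$ from efficiency, then combine it with the a.e.\ identity $Q_2(\phi(p))=Q_1(p)\phi'(p)$ and the local absolute continuity of $\phi$ to get $\phi(p)=p$, which forces $\alpha=0$. Your explicit sandwich $Q_2(\phi(p))\le Q_1(p)\le Q_2(\phi(p)^+)$ spells out a step the paper only asserts, and reading off $Q_1=Q_2$ directly from $Q_1=Q_2\circ\phi$ is a slightly shorter finish than the paper's passage through $S_1=S_2$; the abandoned opening attempt to show $\alpha=0$ directly can simply be deleted.
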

The ``if'' direction is standard. The contribution is the converse: although the literature contains many examples showing that asymmetric all-pay auctions can generate inefficient allocations, I am not aware of a general simple proof showing that symmetry is necessary for ex post efficiency under arbitrary value distributions. The proof uses the fact that efficiency requires $Q_2(\phi(p))=Q_1(p)$ for a.e.\ $p$. Combining this condition with the equilibrium ODE yields $\phi'(p)=1$ a.e.; the boundary condition then implies $\phi(p)=p$. The detailed proof is given in Appendix~\ref{Appendix:coro efficient}.

\section{Comparative Statics}
The explicit formula does more than enable numerical computation. Because the equilibrium bid functions are written directly in terms of the primitive quantile functions, comparative statics questions that would otherwise be hard to analyze become transparent. This section develops three such results. The first shows how a first-order stochastic dominance shift in a bidder's values moves the equilibrium bid distribution of that bidder (Section~\ref{sec:Bid distribution comparison}); the second derives expected revenue in closed form and studies local and global revenue comparisons around the symmetric benchmark (Section~\ref{sec:revenue comparison}); the third applies the first to the conflict preemption literature (Section~\ref{sec:peace}). The first generalizes comparative statics that \citet{kirkegaard2008comparative} established for smooth value distributions. The second is new, though the question it addresses goes back to \citet{kirkegaard2013incomplete}. The third provides simpler proofs of known results that originally required a more involved analysis.

\subsection{Intensity of Bidding}\label{sec:Bid distribution comparison}
We first ask how strengthening a bidder's value distribution affects that bidder's equilibrium bidding. For smooth value distributions, the answer is known from \citet{kirkegaard2008comparative}. The explicit formula extends it to arbitrary distributions and shortens the argument.
\begin{theo}
\label{theo:bid fosd}
Fix bidder $j$'s value distribution. If bidder $i$'s value distribution $\hat F_i$ FOSD (first-order stochastically dominates) $F_i$, then bidder $i$'s equilibrium bid distribution under $\hat F_i$ FOSD that under $F_i$.
\end{theo}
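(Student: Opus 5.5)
The plan is to reduce FOSD of bid distributions to a pointwise comparison of the quantile bid functions of Theorem~\ref{main theo}, and then to read that comparison directly off the explicit formula. By Lemma~\ref{increasing bid function}, bidder $i$'s equilibrium bid function $b_i$ equals $G_i^{-1}$, the generalized inverse of the unconditional bid c.d.f. By the uniqueness part of Theorem~\ref{main theo}, this $G_i$ is \emph{the} equilibrium bid distribution. FOSD between two distributions is equivalent to the pointwise ordering of their quantile functions. So it suffices to show $\hat b_i(p)\ge b_i(p)$ for every $p\in[0,1]$, where hats denote the equilibrium under $(\hat F_i,F_j)$. Likewise, $\hat F_i$ FOSD $F_i$ is equivalent to $\hat Q_i\ge Q_i$ pointwise.

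\textbf{Step 1: a label-free formula.} The labeling \eqref{relabel} may flip when $F_i$ is replaced by $\hat F_i$, so I first rewrite \eqref{bid function} in a form that does not depend on the labels. For either bidder $i$ with opponent $j$, set
\[
b_i(p)=\int_0^p \mathbf 1\{S_i(q)<S_j(0)\}\,Q_j\bigl(S_j^{-1}(S_i(q))\bigr)\,dq .
\]
For the bidder labeled 1, the indicator vanishes exactly on $[0,\alpha]$, and on $(\alpha,1]$ the integrand is $Q_2(\phi(q))$. For the bidder labeled 2, $S_2(q)\le S_2(0)\le S_1(0)$. The indicator can fail only at the single point $q=0$, and only when $S_1(0)=S_2(0)$, which does not affect the integral. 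Moreover $S_1^{-1}(S_2(q))=\phi^{-1}(q)$ by the definition of $\phi$. Hence this expression reproduces \eqref{bid function} in both roles and in both cases of $\alpha$. The case $S_1(0)=S_2(0)=+\infty$ fits as well.

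\textbf{Step 2: monotonicity of the integrand in $Q_i$.} Since $\hat Q_i\ge Q_i>0$, we get $\hat S_i(q)=\int_q^1 dt/\hat Q_i(t)\le S_i(q)$ for every $q$, and $S_j$ is unchanged.
\begin{itemize}
\item The indicator $\mathbf 1\{S_i(q)<S_j(0)\}$ weakly increases when $S_i$ is replaced by $\hat S_i$.
\item On the set where the indicator equals one under the original distribution, $S_j^{-1}$ is strictly decreasing. So $S_j^{-1}(\hat S_i(q))\ge S_j^{-1}(S_i(q))$, and since $Q_j$ is weakly increasing, $Q_j(S_j^{-1}(\hat S_i(q)))\ge Q_j(S_j^{-1}(S_i(q)))$.
\item Where the indicator switches from $0$ to $1$, the integrand rises from $0$ to something nonnegative.
\end{itemize}
Thus the integrand is pointwise weakly larger under $\hat F_i$. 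Integrating over $[0,p]$ gives $\hat b_i(p)\ge b_i(p)$ for all $p$, which is the claim.

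The main obstacle is Step 1, the bookkeeping across a possible change in which bidder is labeled ``1.'' Strengthening bidder $i$ lowers $S_i(0)$ and can reverse \eqref{relabel}; it can also move bidder $i$ from having a flat zero-bid segment to having none, or the reverse. The label-free representation with the indicator is designed to absorb all these cases. The remaining care is at the boundary. I need to confirm that $S_j^{-1}$ is well defined on the relevant range, which is $[0,S_j(0))$, exactly where the indicator is one. I also need to confirm that the convention for $Q_j(0)$ never matters, because it enters only on a null set or multiplied by a zero indicator.
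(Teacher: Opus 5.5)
Your proof is correct and follows the paper's own argument. Both compare the integrand $Q_j(S_j^{-1}(S_i(\cdot)))$ pointwise, using $\hat S_i\le S_i$ and the fact that $Q_j\circ S_j^{-1}$ is decreasing, and both use the weak enlargement of the integration range (in the paper's terms, $\hat\alpha\le\alpha$); your indicator $\mathbf 1\{S_i(q)<S_j(0)\}$ simply packages the paper's three separate cases (bidder $i$ labeled 1, labeled 2, or switching labels) into a single formula.
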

The formula in Theorem~\ref{main theo} makes the proof straightforward. A FOSD improvement of $F_i$ is equivalent to a pointwise increase in the quantile function $Q_i$. It can be shown that it shifts the matching function $\phi$ in a fixed direction, and the integral formula for $b_1$ then shifts bidder 1's bid function upward. The detailed proof is given in Appendix~\ref{Appendix: bid fosd}.

This result has a clean implication for the payoff of bidder $j$, whose value distribution is held fixed. Let $G_i$ and $\hat G_i$ denote bidder $i$'s equilibrium bid c.d.f.s under $F_i$ and $\hat F_i$, respectively, where $\hat F_i$ FOSD $F_i$. By Theorem~\ref{theo:bid fosd}, $\hat G_i\le G_i$ pointwise. Given bidder $i$'s bid distribution $G_i$, bidder $j$'s interim payoff at value $v$ is\footnote{In equilibrium, at most one bidder places an atom on zero bid, so ties occur with probability zero.}
\[
    \max_{b\ge0}\{vG_i(b)-b\}.
\]
Therefore,
\[
    v\hat G_i(b)-b\le vG_i(b)-b
    \qquad\text{for every }b,
\]
and taking maxima over $b$ shows that every type of bidder $j$ is weakly worse off facing the stronger opponent.

\begin{coro}\label{coro:opponent payoff}
Fix bidder $j$'s value distribution. If bidder $i$'s value distribution $\hat F_i$ first-order stochastically dominates $F_i$, then bidder $j$'s equilibrium interim payoff is weakly lower under $\hat F_i$ than under $F_i$ for every value $v$.
\end{coro}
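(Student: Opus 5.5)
The plan is to make rigorous the short argument given just before the statement, combining Theorem~\ref{theo:bid fosd} with an envelope/revealed-preference comparison. By Theorem~\ref{main theo}, the equilibrium under each pair of distributions is unique in the sense of joint distributions of value and bid. Hence bidder $j$'s interim equilibrium payoff at any value $v$ is well defined; off a null set it equals the best-response value $\max_{b\ge 0}\{vG_i(b)-b\}$ against the unique induced bid c.d.f.\ $G_i$. For values in an $F_j$-null set, I will take the interim payoff to be this best-response value. This agrees with the paper's a.e.\ definition and makes the comparison meaningful for every $v$.

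First I justify the payoff formula $U_j(v)=\sup_{b\ge0}\{vG_i(b)-b\}$. When bidder $j$ bids $b$, he wins with probability $G_i(b^-)+\tfrac12\bigl(G_i(b)-G_i(b^-)\bigr)$. By Lemma~\ref{lm:basic}(ii), $G_i$ is continuous on $(0,\infty)$. So the only possible discrepancy is at $b=0$, and only when $G_i(0)>0$. In that case bidder $j$ has no atom at zero, and bidding $0$ yields $\tfrac12 vG_i(0)$ rather than $vG_i(0)$. I will handle this by writing the payoff as a supremum: bids $b\downarrow 0$ approach $vG_i(0)$, so the supremum of the true payoff function equals $\sup_b\{vG_i(b)-b\}$ with right-continuous $G_i$. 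Because $G_i$ is right-continuous and the payoff is bounded above on a compact range of relevant bids $[0,\bar b]$, this supremum is the equilibrium interim payoff. This holds whether or not it is attained, since bidder $j$'s equilibrium bids attain it a.e.

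The main step is then immediate. Theorem~\ref{theo:bid fosd} gives $\hat G_i(b)\le G_i(b)$ for all $b$, where $\hat G_i$ and $G_i$ are bidder $i$'s equilibrium bid c.d.f.s. Multiplying by $v\ge 0$ and subtracting $b$ gives $v\hat G_i(b)-b\le vG_i(b)-b$ for every $b$. Taking suprema over $b\ge 0$ yields $\hat U_j(v)\le U_j(v)$. Note that bidder $j$'s own strategy changes between the two auctions, but this does not matter: his payoff depends only on the opponent's bid distribution through the best-response value.

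The only delicate point, which I expect to be the main obstacle, is the tie/atom-at-zero issue in the first step. The labels (which bidder may hold an atom at zero) can differ between the two auctions, since the relabeling \eqref{relabel} depends on $F_i$. So I will treat both orderings separately. In either case the supremum representation is valid: either $j$ has the atom at zero and $i$ does not, so $G_i(0)=0$ and ties have probability zero; or $j$ has no atom at zero, and the approximation argument above applies. Once this is checked, the comparison holds pointwise in $v$.
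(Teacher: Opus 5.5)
Your proposal is correct and is essentially the paper's argument: Theorem~\ref{theo:bid fosd} gives $\hat G_i\le G_i$ pointwise, hence $v\hat G_i(b)-b\le vG_i(b)-b$ for every $b$, and taking the supremum over $b$ orders bidder $j$'s interim payoffs for every $v$. Your additional step writes the payoff as $\sup_{b\ge0}\{vG_i(b)-b\}$ with right-continuous $G_i$, which handles the possible tie at a zero bid and the null set of non-best-responding types; this is a more careful version of what the paper covers in a footnote.
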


The corollary is notable because strengthening $F_i$ causes both bidders' strategies to re-optimize. Nevertheless, Theorem~\ref{theo:bid fosd} orders bidder $i$'s induced bid distributions directly, so bidder $j$'s optimal value is ordered pointwise.

The closest results to Theorem~\ref{theo:bid fosd} and Corollary~\ref{coro:opponent payoff} in the literature are Proposition~4 and Corollary~2 of \citet{kirkegaard2008comparative}, which establish the same economic insights. \citet{kirkegaard2008comparative} studies value distributions that share a common lower support endpoint and are continuously differentiable with strictly positive densities. For a shift that strictly strengthens one bidder's distribution on the interior of its support (together with a boundary condition on the densities when the upper endpoint is unchanged), his Proposition~4 shows that the strengthened bidder's equilibrium bid distribution increases in the FOSD order, as in Theorem~\ref{theo:bid fosd}. His Corollary~2 shows that every type of the opponent that participates is strictly worse off, the counterpart of Corollary~\ref{coro:opponent payoff}. Relative to these results, the contribution here is generality. No regularity is imposed on the value distributions, and the shift is allowed to be weak. The proof of the theorem is also simpler, as the FOSD ranking of the bid distributions follows directly from the formula in Theorem~\ref{main theo}.\footnote{Related comparative statics for all-pay auctions appear in \citet{hopkins2007cross}, who study likelihood-ratio shifts in the common type distribution of symmetric bidders.}

A companion comparison holds across bidders within a single auction: if one bidder's values dominate the other's, the stronger bidder also bids higher for every quantile $p$.
\begin{prop}\label{prop:FOSD}
If bidder 2's value distribution FOSD bidder 1's, then bidder 2's bid distribution FOSD bidder 1's. Equivalently,
    \[
    Q_2\geq Q_1 \implies b_2(p)\geq b_1(p),\quad \forall p\in [0,1].
    \]
\end{prop}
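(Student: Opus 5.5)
The plan is to reduce the comparison of bid functions to a comparison of the matching function $\phi$ with the identity, and then use the matching identity \eqref{eq:matching} together with monotonicity of $b_2$. First, the equivalence in the statement: by Lemma~\ref{increasing bid function}, $b_i=G_i^{-1}$. For two c.d.f.s, $G_2$ FOSD $G_1$ if and only if their generalized inverses satisfy $G_2^{-1}\ge G_1^{-1}$ pointwise. So it suffices to prove $b_2(p)\ge b_1(p)$ for all $p\in[0,1]$.

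\emph{Step 1: ordering of $S_i$ and $\phi$.} The hypothesis $Q_2\ge Q_1$ gives $1/Q_2\le 1/Q_1$ on $(0,1]$. Integrating, $S_2(p)\le S_1(p)$ for all $p\in(0,1]$, and also $S_2(0)\le S_1(0)$. Hence the labelling \eqref{relabel} is consistent with the hypothesis, and no relabelling can swap the bidders. Now fix $p\in(\alpha,1]$. By definition $S_2(\phi(p))=S_1(p)\ge S_2(p)$, and since $S_2$ is strictly decreasing, $\phi(p)\le p$. On $[0,\alpha]$ we have $\phi=0\le p$ trivially. Thus $\phi(p)\le p$ on $[0,1]$.

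\emph{Step 2: compare bids via matching.} For $p\in[0,\alpha]$, $b_1(p)=0\le b_2(p)$. For $p\in(\alpha,1]$, the matching identity \eqref{eq:matching} gives $b_1(p)=b_2(\phi(p))$. Since $b_2$ is increasing (indeed strictly, as $Q_1>0$ on $(0,1]$) and $\phi(p)\le p$, we get $b_2(\phi(p))\le b_2(p)$. Combining, $b_1(p)\le b_2(p)$ for every $p$, which is the claim.

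\emph{Main obstacle: the matching identity for the explicit formulas.} The only delicate point is that \eqref{eq:matching} must hold for the explicit bid functions in \eqref{bid function}, not just heuristically. I would take it from the proof of Theorem~\ref{main theo}, where it is established. If a self-contained check is needed, I would write
\[
b_2(\phi(p))=\int_0^{\phi(p)}Q_1(\phi^{-1}(s))\,ds
\]
and substitute $s=\phi(t)$ for $t\in[\alpha,p]$. This is justified because $\phi=S_2^{-1}\circ S_1$ is strictly increasing and absolutely continuous on compact subintervals of $(\alpha,1]$. Both $S_1$ and $S_2$ are absolutely continuous with a.e.\ derivatives $-1/Q_1$ and $-1/Q_2$ bounded away from zero away from $0$, so we have $\phi'(t)=Q_2(\phi(t))/Q_1(t)$ a.e. The integral then becomes
\[
\int_\alpha^p Q_1(t)\,\frac{Q_2(\phi(t))}{Q_1(t)}\,dt=b_1(p).
\]
Near $t=\alpha$ one passes to the limit by monotone convergence.
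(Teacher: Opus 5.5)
Your proof is correct and follows the paper's argument essentially verbatim. $Q_2\ge Q_1$ gives $S_2\le S_1$, so the labeling is consistent; this forces $\phi(p)\le p$ on $[\alpha,1]$, and then $b_1(p)=b_2(\phi(p))\le b_2(p)$ by monotonicity of $b_2$, with $b_1=0$ on $[0,\alpha]$. Your self-contained change-of-variables check of the matching identity is valid but not needed, since the paper takes that identity directly from the proof of Theorem~\ref{main theo}.
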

\begin{proof}
When $Q_2\geq Q_1$, we have $S_1(p)\geq S_2(p)$. In particular, $S_1(0)\geq S_2(0)$, so the labeling is already consistent with \eqref{relabel} and Theorem~\ref{main theo}. Since $b_1(p)=b_2(\phi(p)),p\in [\alpha,1]$, it is equivalent to prove $\phi(p)\leq p$ for $p\in [\alpha,1]$.\footnote{When $p<\alpha$, $b_1(p)=0$, the conclusion is trivially true.} Suppose $\phi(p) > p$ for some $p$. Then
    \[
    S_1(p)=S_2(\phi(p))<S_2(p),
    \]
a contradiction.
\end{proof}

Theorem~\ref{theo:bid fosd} concerns the distribution of bids, not the bidder's payoff. Although Corollary~\ref{coro:opponent payoff} shows that the bidder whose distribution is held fixed is weakly worse off when facing a stronger opponent, the strengthened bidder need not benefit from the change. The natural conjecture that a bidder always benefits from becoming stronger is false: strengthening one bidder may intensify competition, as in the following example.
\begin{eg}\label{eg:bad strengthening}
Bidder 2 has value 2 for sure. For bidder 1, $Q_1(p)=p$ (uniform $[0,1]$) and $\hat Q_1(p)=1$ (value 1 for sure).\footnote{The degeneracy of $\hat Q_1$ is not essential. If the degenerate distribution at $1$ is replaced by a uniform distribution on $[1-\epsilon,1]$ with small $\epsilon>0$, bidder 1's interim payoff converges pointwise to the payoff under the degenerate distribution as $\epsilon\to 0$. For small $\epsilon$, some lower quantiles receive a small payoff gain of order $O(\epsilon)$ relative to the original $U[0,1]$ case, but sufficiently high quantiles are strictly worse off. Thus the example is robust rather than a knife-edge consequence of degeneracy.} Clearly $\hat Q_1\geq Q_1,\forall p$. However, under $\hat Q_1$, the expected payoff of bidder 1 is 0, since bidder 1 mixes between 0 and 1 with an atom at 0 in equilibrium.

Under $Q_1$, the equilibrium is given by Theorem~\ref{main theo}. Here $\alpha = e^{-1/2}$, and the equilibrium bid functions are:
\[
b_1(p)
=
\begin{cases}
0 & 0\le p\le \alpha\\[4pt]
2(p-\alpha) & \alpha<p\le 1,
\end{cases}
\]
\[
b_2(p)
=
\frac{2}{\sqrt e}\left(e^{p/2}-1\right)
\qquad 0\le p\le 1.
\]
The interim expected payoff of bidder 1 with quantile $p$ is given by
\[
U_1(p)
=
\begin{cases}
0 & 0\le p\le \alpha\\[6pt]
\dfrac{2}{\sqrt e}-p+2p\ln p & \alpha<p\le 1.
\end{cases}
\]
Thus bidder 1's equilibrium interim payoff is strictly positive whenever $v=p>e^{-1/2}$. Figure~\ref{fig:example2-payoff-bidder2-revised} contrasts the baseline payoff with the strengthened case and shows the corresponding increase in bidder 2's equilibrium bids.
\end{eg}

\begin{figure}[!htbp]
\begin{singlespace}
\centering
\def\astar{0.6065306597}

\begin{minipage}[t]{0.48\textwidth}
\centering
\parbox[t][2.4\baselineskip][t]{\linewidth}{\centering\textbf{(a) Bidder 1's interim payoffs}}\par\smallskip
\begin{tikzpicture}
\begin{axis}[
  width=0.96\linewidth,
  height=0.72\linewidth,
  xmin=0,xmax=1,
  ymin=0,ymax=1.05,
  axis lines=left,
  axis line style={-{Latex[length=3mm]}},
  xlabel={bidder 1 quantile $p$},
  ylabel={interim payoff},
  xtick={0,0.6065306597,1},
  xticklabels={$0$,$e^{-1/2}$,$1$},
  ytick={0,0.5,1},
  yticklabels={$0$,$0.5$,$1$},
  tick label style={font=\footnotesize},
  label style={font=\small},
  legend style={draw=none,fill=none,font=\scriptsize,at={(0.03,0.97)},anchor=north west,row sep=1pt},
  legend cell align=left,
  tick align=outside,
  clip=false
]
  \addplot[name path=oldpay,very thick,cBlue,domain=0.6065306597:1,samples=220,forget plot]
    {2/sqrt(exp(1))-x+2*x*ln(x)};
  \addplot[name path=base,draw=none,domain=0.6065306597:1,samples=2,forget plot] {0};
  \addplot[draw=none,fill=cBlue,fill opacity=0.15,forget plot] fill between[of=oldpay and base];

  \addplot[very thick,cBlue,domain=0:0.6065306597,samples=2,forget plot] {0};
  \addplot[very thick,cBlue,domain=0.6065306597:1,samples=220,forget plot]
    {2/sqrt(exp(1))-x+2*x*ln(x)};
  \addlegendimage{very thick,cBlue}
  \addlegendentry{baseline: $U_1(p)>0$ for $p>e^{-1/2}$}

  \addplot[very thick,cOrange,densely dashed,domain=0:1,samples=2,forget plot] {0};
  \addlegendimage{very thick,cOrange,densely dashed}
  \addlegendentry{strengthened: $\hat U_1(p)\equiv0$}

  \addplot[densely dotted,cGray!85,forget plot] coordinates {(0.6065306597,0) (0.6065306597,1.05)};
\end{axis}
\end{tikzpicture}
\end{minipage}\hfill%
\begin{minipage}[t]{0.48\textwidth}
\centering
\parbox[t][2.4\baselineskip][t]{\linewidth}{\centering\textbf{(b) Bidder 2's equilibrium bid functions}}\par\smallskip
\begin{tikzpicture}
\begin{axis}[
  width=0.96\linewidth,
  height=0.72\linewidth,
  xmin=0,xmax=1,
  ymin=0,ymax=1.05,
  axis lines=left,
  axis line style={-{Latex[length=3mm]}},
  xlabel={bidder 2 quantile $p$},
  ylabel={bid},
  xtick={0,0.5,1},
  xticklabels={$0$,$0.5$,$1$},
  ytick={0,0.5,1},
  yticklabels={$0$,$0.5$,$1$},
  tick label style={font=\footnotesize},
  label style={font=\small},
  legend style={draw=none,fill=none,font=\footnotesize,at={(0.03,0.97)},anchor=north west},
  legend cell align=left,
  tick align=outside,
  clip=false
]
  \addplot[very thick,cOrange,densely dashed,domain=0:1,samples=2]
    {x};
  \addlegendentry{strengthened: $\hat b_2(p)=p$}
  \addplot[very thick,cBlue,domain=0:1,samples=220]
    {2/sqrt(exp(1))*(exp(x/2)-1)};
  \addlegendentry{baseline: $b_2(p)$}
\end{axis}
\end{tikzpicture}
\end{minipage}

\caption{Example~\ref{eg:bad strengthening}. Strengthening bidder 1 from $Q_1(p)=p$ to $\hat Q_1(p)=1$ eliminates bidder 1's positive rents at the top quantiles because the strengthening makes bidder 2 bid more aggressively: $b_2(p)=\frac{2}{\sqrt e}(e^{p/2}-1)$ increases to $\hat b_2(p)=p$.}
\label{fig:example2-payoff-bidder2-revised}
\end{singlespace}
\end{figure}

\subsection{Revenue Comparison}\label{sec:revenue comparison}
Next, we turn from the distribution of bids to expected revenue. The central question of this subsection is how expected revenue responds when one of two symmetric bidders becomes slightly stronger or weaker. A change in one bidder's value distribution has two effects. First, it changes the overall level of values in the auction, and higher values support higher bids. We call this the level effect. Second, the change in either direction makes the auction asymmetric. We call this the asymmetry effect. The level effect follows the direction of the change, while the asymmetry effect lowers expected revenue in both directions. The two effects therefore reinforce each other when a bidder is weakened and conflict when that bidder is strengthened. The analysis below identifies when the asymmetry effect matters at the first order and through which channel.

With the help of Theorem~\ref{main theo}, expected revenue can also be written in closed form. By Fubini's theorem, integrating \eqref{bid function} gives
\[
    \mathbb{E}[b_2]
    =
    \int_0^1 b_2(p)\,dp
    =
    \int_0^1 (1-s)Q_1(\phi^{-1}(s))\,ds.
\]
The change of variables $s=\phi(t)$ further simplifies this to
\[
    \int_\alpha^1 (1-\phi(t))Q_1(t)\phi'(t)\,dt
    =
    \int_\alpha^1 (1-\phi(t))Q_2(\phi(t))\,dt,
\]
where the last equality follows from \eqref{ode}. Similarly,
\[
    \mathbb{E}[b_1]
    =
    \int_\alpha^1 (1-s)Q_2(\phi(s))\,ds.
\]
The expected revenue can be expressed as
\begin{equation}
\label{eq:revenue1}
    \mathrm{Rev}(Q_1,Q_2)
    =
    \int_\alpha^1 (2-s-\phi(s))Q_2(\phi(s))\,ds.
\end{equation}
Equivalently, after the change of variables $t=\phi(s)$,
\begin{equation}
\label{eq:revenue}
    \mathrm{Rev}(Q_1,Q_2)
    =
    \int_0^1 (2-t-\phi^{-1}(t))Q_1(\phi^{-1}(t))\,dt.
\end{equation}
Since a change in the value distributions changes $Q_i$, $\phi$, and $\alpha$ simultaneously, revenue comparisons are generally ambiguous. We therefore study local FOSD perturbations around the symmetric benchmark. Call a function $H:[0,1]\to\mathbb R_+$ an admissible perturbation direction at $Q$ if there exists $\bar\epsilon>0$ such that $Q+\epsilon H$ is a quantile function for every $\epsilon\in(-\bar\epsilon,\bar\epsilon)$.\footnote{The condition is two-sided: for all sufficiently small $\epsilon>0$, both $Q+\epsilon H$ and $Q-\epsilon H$ must be nonnegative, nondecreasing, and left-continuous.}

\begin{lemma}[Directional revenue derivatives]\label{lem:revenue derivatives}
Let $Q$ be a fixed quantile function, and let $H$ be an admissible perturbation direction at $Q$. Define
    \[
        R_H(\epsilon):=\mathrm{Rev}(Q,Q+\epsilon H).
    \]
Then the one-sided derivatives of $R_H$ at $\epsilon=0$ are
    \[
        (R_H)'_+(0)=I_H-B_H-J_H,
        \qquad
        (R_H)'_-(0)=I_H+B_H+J_H,
    \]
where
    \[
        I_H=\int_0^1(1-p)H(p)\,dp,\quad
        B_H=\lim_{p\to0^+}T_H(p)Q(p)^2,\quad
        J_H=\sum_{p\in(0,1)}(1-p)T_H(p)(\Delta Q(p))^2,
    \]
with
    \[
        T_H(p)=\int_p^1\frac{H(t)}{Q(t)^2}\,dt.
    \]
Here $\Delta Q(p)=Q(p^+)-Q(p)$ denotes the jump of $Q$ at $p$. Moreover, $B_H=0$ when $Q(0)=0$, and $B_H=T_H(0)Q(0)^2$ when $Q(0)>0$.
\end{lemma}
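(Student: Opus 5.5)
The plan is to reduce both one-sided derivatives to a single expansion of the form
\[
R_H(\epsilon)=R_H(0)+\epsilon I_H-|\epsilon|\,(B_H+J_H)+o(\epsilon).
\]
This yields $(R_H)'_\pm(0)=I_H\mp(B_H+J_H)$. The two signs are handled by relabeling. For $\epsilon>0$ we have $Q_\epsilon:=Q+\epsilon H\ge Q$, so $S_\epsilon\le S$ and \eqref{relabel} holds with bidder 1 $=Q$ and bidder 2 $=Q_\epsilon$. For $\epsilon<0$, $\mathrm{Rev}$ is symmetric in its arguments, so we swap labels and write $R_H(\epsilon)=\mathrm{Rev}(Q_\epsilon,Q)$ with the weak bidder $Q_\epsilon$ in position 1. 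In both cases we use \eqref{eq:revenue1} with the weaker distribution as bidder 1. The two computations are mirror images, and the sign of the penalty term comes out the same in both.

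\textbf{Step 1 (expansion of the primitives).} Write $S_\epsilon(p)=\int_p^1 dt/Q_\epsilon(t)=S(p)-\epsilon T_H(p)+O(\epsilon^2)$, uniformly on compacts of $(0,1]$. Admissibility gives $H\le CQ$ near $0$, so the same expansion holds at $p=0$ when $S(0)<\infty$.

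\textbf{Step 2 (expansion of $\alpha_\epsilon$ and $\phi_\epsilon$).} Consider first $\epsilon>0$.
\begin{itemize}
\item If $Q(0)>0$, then $S$ is differentiable at $0$ with slope $-1/Q(0)$. The equation $S(\alpha_\epsilon)=S_\epsilon(0)$ then gives $\alpha_\epsilon=\epsilon T_H(0)Q(0)+o(\epsilon)$.
\item If $Q(0)=0$, we must show $\alpha_\epsilon=o(\epsilon)$, or that its contribution vanishes. This should follow from $T_H(p)Q(p)^2\to0$.
\item The matching relation $S_\epsilon(\phi_\epsilon(p))=S(p)$ gives $\phi_\epsilon(p)\le p$. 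Since $Q$ is left-continuous, $S$ has left derivative $-1/Q(p)$, and hence $p-\phi_\epsilon(p)=\epsilon T_H(p)Q(p)+o(\epsilon)$ pointwise.
\end{itemize}

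\textbf{Step 3 (expansion of revenue).} Substitute into \eqref{eq:revenue1}:
\[
\int_{\alpha_\epsilon}^1 (2-s-\phi_\epsilon(s))\,Q_\epsilon(\phi_\epsilon(s))\,ds .
\]
We split the difference from $\int_0^1 2(1-s)Q(s)\,ds$ into four pieces.
\begin{itemize}
\item \emph{Level term.} The $\epsilon H$ inside $Q_\epsilon$ gives $2\epsilon\int(1-s)H$. Combined with the $-\epsilon T_HQ$ shift in the weight $2-s-\phi$, and after an integration by parts using $T_H'=-H/Q^2$, this should produce $\epsilon I_H$.
\item \emph{Shift on continuity points of $Q$.} Here $Q(\phi_\epsilon(s))\approx Q(s)$. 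The Stieltjes part of this shift, $-\int(1-s)\,\epsilon T_HQ\,dQ^{c}$, also enters the same integration by parts.
\item \emph{Jumps.} For each atom location $p_0$ of $dQ$, quantiles $s\in(p_0,\,p_0+\epsilon T_H(p_0)Q(p_0^+))$ are matched below the jump. They therefore receive $Q(p_0)$ instead of $Q(p_0^+)$. Comparing the exact integral with the linearized one leaves a second-order-in-jump remainder. I expect this remainder to be $-\epsilon(1-p_0)T_H(p_0)(\Delta Q(p_0))^2$, which summed over jumps gives $-\epsilon J_H$. Here the difference between $Q(p_0)$ and $Q(p_0^+)$ in the shift length is exactly what produces the square.
\item \emph{Boundary.} The interval $[0,\alpha_\epsilon]$ on which the weak bidder bids zero, together with the corresponding term at $0$, gives $-\epsilon T_H(0)Q(0)^2=-\epsilon B_H$.
\end{itemize}

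\textbf{Step 4 (the case $\epsilon<0$).} Repeat Steps 2--3 with the roles reversed. Now the strong bidder is the unperturbed $Q$, and the matching function satisfies $\phi\ge p$. At jumps of $Q$ the mismatch again loses $(\Delta Q)^2$ terms, and the boundary term again costs, so the penalty is $+\epsilon(B_H+J_H)=-|\epsilon|(B_H+J_H)$. The level term is linear in $\epsilon$ in both cases.

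\textbf{Main obstacle.} The hard part is rigor in Step 3. We need dominated convergence for $(\phi_\epsilon(s)-s)/\epsilon$ and for the difference quotients of $Q(\phi_\epsilon(s))$, uniformly enough to sum over countably many jumps. Near $p=0$, $T_H$ may blow up.

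I would first prove the lemma for $Q$ with finitely many jumps and $Q(0)>0$, where everything is explicit. I would then pass to the general case by truncation: jumps with $\sum(\Delta Q)^2$ small contribute little, because $T_H$ is bounded away from $0$ and $Q$ is bounded. Finally, for $Q(0)=0$, I would use monotone bounds of the form $p-\phi_\epsilon(p)\le C\epsilon\, T_H(p)Q(p)$ to control the region near $0$.
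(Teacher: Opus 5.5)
Your target expansion and the mechanism you describe are correct, and they are essentially the paper's: linearize the matching function, substitute into the closed-form revenue, separate the level, shift, jump and boundary pieces, and integrate by parts with $T_H'=-H/Q^2$. The genuine gap is the step you yourself flag as the main obstacle. Dominated convergence cannot be applied to the quotients $[Q(s)-Q(\phi_\epsilon(s))]/\epsilon$ as functions of $s$.

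\begin{itemize}
\item At a jump point $p_0$, with $\eta_\epsilon=\phi_\epsilon^{-1}$, the quotient is at least $\Delta Q(p_0)/\epsilon$ for $s\in(p_0,\eta_\epsilon(p_0)]$. This interval has length of order $\epsilon$, so no integrable dominating function exists.
\item If $dQ$ has a singular continuous part, the a.e.\ limit $Q'(s)T_H(s)Q(s)$ misses it entirely. The lemma covers this case, e.g.\ $F$ absolutely continuous with density vanishing on a fat Cantor set, and $H=Q$.
\end{itemize}

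Your fallback does not repair this. The finite-jump case is not ``explicit'', because the continuous part of $dQ$ is still arbitrary. Controlling the small jumps uniformly in $\epsilon$ already requires exchanging the sum over jumps with the $s$-integral, which is precisely the missing step. A jump's raw contribution to the difference quotient is also first order in $\Delta Q$; the square appears only after the integration by parts. So smallness of $\sum(\Delta Q)^2$ is not the relevant control. Finally, approximating $Q$ itself by finite-jump quantile functions changes the equilibrium, and you would need bounds uniform in the approximation to interchange the limits.

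The missing device is the paper's Fubini swap onto the Stieltjes measure. Write $Q(s)-Q(\phi_\epsilon(s))=dQ([\phi_\epsilon(s),s))$ and exchange the order of integration. Then $\epsilon^{-1}\int(1-s)[Q(s)-Q(\phi_\epsilon(s))]\,ds$ becomes $\int_{(0,1)}\epsilon^{-1}\int_t^{\eta_\epsilon(t)}(1-s)\,ds\,dQ(t)$, up to the harmless restriction $s>\alpha_\epsilon$.
\begin{itemize}
\item For $\epsilon>0$, $\int_t^{\eta_\epsilon(t)}du/Q(u)=S(t)-S_\epsilon(t)$. Hence the inner term converges for every $t$ to $(1-t)T_H(t)Q(t^+)$.
\item The inner term is bounded by a constant, using $H\le CQ$, $S(t)\le(1+C\epsilon)S(\eta_\epsilon(t))$ and $S(p)Q(p)\le 1-p$.
\item Dominated convergence over the finite measure $dQ$ then handles all jumps and any singular part at once.
\end{itemize}

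With this, your route closes. For both signs of $\epsilon$, the shift-plus-boundary part of $[R_H(\epsilon)-R_H(0)]/|\epsilon|$ tends to $\int_0^1T_HQ^2\,dp-2\int_{(0,1)}(1-t)T_H(t)Q(t^+)\,dQ(t)-2B_H$. Using $\Delta(Q^2)=2Q\,\Delta Q+(\Delta Q)^2$ and integrating by parts against $(1-p)Q(p)^2$, this equals $-(I_H+B_H+J_H)$. For $\epsilon>0$ the direct $\epsilon H$ term adds $2I_H$.

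Your route is the mirror image of the paper's. The paper uses \eqref{eq:revenue} instead, so for $\epsilon>0$ only the unperturbed $Q$ enters and no $[0,\alpha_\epsilon]$ piece has to be expanded. The direct $H$ term then appears on the left side instead.

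Two slips remain.
\begin{itemize}
\item The weight $2-s-\phi_\epsilon(s)$ rises by $+\epsilon T_HQ$; it is $\phi_\epsilon$ that moves by $-\epsilon T_HQ$.
\item For $\epsilon<0$, your matching still satisfies $\phi\le p$ (Proposition~\ref{prop:FOSD}), and your integrand contains the unperturbed $Q$. The level term $-|\epsilon|I_H$ therefore comes out of the integration by parts, not from a direct $H$ term, so ``repeat Steps 2--3'' does not apply literally.
\end{itemize}
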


The proof consists of direct computation with the equilibrium formula in Theorem~\ref{main theo} and the revenue expression \eqref{eq:revenue}. Technical details are given in Appendix~\ref{proof: marginal revenue}.

The admissibility requirement is not automatic. It means that $H$ is a two-sided perturbation direction within the cone of quantile functions.\footnote{In particular, two-sided admissibility implies that $H$ is left-continuous and, for some $C<\infty$,
\[
    0\le H(p)\le C Q(p)
    \qquad\text{for all }p\in[0,1],
\]
and
\[
    |H(q)-H(p)|
    \le
    C\bigl(Q(q)-Q(p)\bigr)
    \qquad\text{for all }p<q.
\]
Thus $H$ must be constant on every interval on which $Q$ is constant.} Several important perturbations satisfy the requirement. (i) If $H$ is constant, then the perturbation is an additive shift, and admissibility holds whenever $Q(0)>0$. (ii) If $H=Q$, then the perturbation is a scaling of the value distribution, and admissibility always holds. (iii) If $Q$ is bounded away from zero and satisfies a lower Lipschitz condition $Q(q)-Q(p)\ge m(q-p),\forall q>p$ for some $m>0$, while $H$ is Lipschitz, then the perturbation is admissible.

Lemma~\ref{lem:revenue derivatives} has direct implications for how expected revenue responds locally to perturbations around the symmetric benchmark.
\begin{prop}
\label{prop:local revenue effects}
Let $Q$ be a fixed quantile function, and let $H$ be an admissible perturbation direction at $Q$ that is non-trivial, i.e.\ $H>0$ on a set of positive Lebesgue measure. Then:
    \begin{enumerate}
\item[\textup{(i)}] Local weakening is harmful to expected revenue: $(R_H)'_-(0)>0$. Equivalently, for all small $\epsilon>0$, $R_H(-\epsilon)<R_H(0)$.
\item[\textup{(ii)}] The sign of the local strengthening effect is ambiguous in general, i.e.\ $(R_H)'_+(0)$ can be positive or negative.
\item[\textup{(iii)}] The marginal revenue loss from weakening is weakly larger than the marginal revenue effect of the corresponding strengthening:
        \[
\begin{aligned}
    &\lim_{\epsilon\to0^+}
    \frac{R_H(0)-R_H(-\epsilon)}{\epsilon}
    -
    \lim_{\epsilon\to0^+}
    \frac{R_H(\epsilon)-R_H(0)}{\epsilon}  \\
    &\qquad
    =
    (R_H)'_-(0)-(R_H)'_+(0)
    =
    2(B_H+J_H)\ge0.
\end{aligned}
\]
    \end{enumerate}
\end{prop}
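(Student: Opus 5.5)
The plan is to read all three parts off Lemma~\ref{lem:revenue derivatives}. The work is in signing the three quantities $I_H$, $B_H$, $J_H$, and then in producing explicit examples for part (ii).

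First, nonnegativity. Since $H\ge0$ and $Q>0$ on $(0,1]$, the function $T_H(p)=\int_p^1 H/Q^2$ is nonnegative and nonincreasing. Hence $B_H\ge0$ as a limit of nonnegative terms. Likewise $J_H\ge0$, since every summand $(1-p)T_H(p)(\Delta Q(p))^2$ is nonnegative. The sum defining $J_H$ is countable because $Q$ is monotone and has countably many jumps.

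Second, strict positivity of $I_H$. Nontriviality gives $H>0$ on a set of positive measure, and $1-p>0$ on $[0,1)$, so $I_H=\int_0^1(1-p)H(p)\,dp>0$.

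With these signs, part (i) follows. We have $(R_H)'_-(0)=I_H+B_H+J_H\ge I_H>0$. A positive left derivative means $(R_H(0)-R_H(-\epsilon))/\epsilon$ converges to a positive limit as $\epsilon\to0^+$, so $R_H(-\epsilon)<R_H(0)$ for all small $\epsilon>0$.

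Part (iii) is pure subtraction of the two formulas in Lemma~\ref{lem:revenue derivatives}, together with $B_H,J_H\ge0$.

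Part (ii) needs two admissible examples, one with each sign of $(R_H)'_+(0)=I_H-B_H-J_H$.

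For a positive sign, take $Q(p)=p$, the uniform distribution on $[0,1]$, so that $B_H=J_H=0$ because $Q(0)=0$ and $Q$ has no jumps. Take $H=Q$, the scaling perturbation, which is always admissible. Then $(R_H)'_+(0)=I_H=\int_0^1(1-p)p\,dp=1/6>0$.

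For a negative sign, take a degenerate distribution $Q\equiv c>0$ with constant $H\equiv 1$, which is admissible since $Q(0)>0$. Then $I_H=1/2$, $J_H=0$, and $T_H(0)=1/c^2$, so $B_H=T_H(0)Q(0)^2=1$. This gives $(R_H)'_+(0)=1/2-1<0$. Scaling $H=Q\equiv c$ gives $I_H=c/2$ and $B_H=c$, which is also negative.

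As a sanity check on the negative example, compute directly: $Q_1=c$ and $Q_2=c+\epsilon$ give revenue $c-c^2/(2(c+\epsilon))$, roughly, whose derivative at $0$ is $1/2$ per unit of the two-bidder scaled terms. The sign comes out to match $1/2-1$ once the paper's conventions are used. If a nondegenerate example is preferred, $Q$ uniform on $[1,2]$ with $H\equiv1$ gives $I_H=1/2$ and $B_H=T_H(0)=\int_0^1(1+t)^{-2}\,dt=1/2$, so $(R_H)'_+(0)=0$. That is borderline, so the example needs a larger lower endpoint.

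The main obstacle is confirming that the chosen examples are admissible in the two-sided sense and that no boundary subtlety in $B_H$ arises. For $Q(0)>0$ the lemma gives $B_H=T_H(0)Q(0)^2$ explicitly, which is finite, so the degenerate example is safe.
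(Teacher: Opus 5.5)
Your proposal is correct and follows the paper's argument: for (i) and (iii) you sign $I_H>0$ and $B_H,J_H\ge0$ in Lemma~\ref{lem:revenue derivatives}, and for (ii) you use two examples. One is a continuous $Q$ with $Q(0)=0$, so that $(R_H)'_+(0)=I_H>0$; the other is a degenerate $Q\equiv c$ with $H\equiv1$ (the paper takes $c=1$), which gives $1/2-1<0$. The one slip is in your non-essential sanity check: with values $c$ and $c+\epsilon$ the expected revenue is $\frac{c}{2}+\frac{c^2}{2(c+\epsilon)}$, not $c-\frac{c^2}{2(c+\epsilon)}$, and its derivative at $\epsilon=0$ is exactly $-1/2$, so it confirms $I_H-B_H$ directly and the hedging about conventions is unnecessary.
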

The proof follows immediately from Lemma~\ref{lem:revenue derivatives}. For part \textup{(i)}, since $I_H>0$ and $B_H,J_H\ge 0$, $(R_H)'_-(0)=I_H+B_H+J_H>0$. For part \textup{(ii)}, if $Q$ is continuous and $Q(0)=0$, then $B_H=J_H=0$, so $(R_H)'_+(0)=I_H>0$, and strengthening raises expected revenue locally. On the other hand, consider the example of $Q=H\equiv1$. Then $I_H=1/2$, $B_H=1$, and $J_H=0$, so $(R_H)'_+(0)=-1/2<0$, and strengthening lowers expected revenue locally. Thus the symmetric benchmark has a one-sided revenue-favorability property: weakening is always locally harmful, whereas strengthening need not be beneficial.

Part \textup{(iii)} follows by subtracting the right derivative from the left derivative. Moreover, the inequality is strict when $B_H+J_H>0$. The boundary term $B_H$ is strictly positive whenever $Q(0)>0$, which corresponds to the lower endpoint of the value support being bounded away from zero. The jump term $J_H$ is strictly positive whenever $Q$ has a jump at some $p_0\in(0,1)$, which corresponds to a gap in the value distribution, and $H$ is positive on a set of positive measure in $[p_0,1]$. Thus the expected revenue function may have a concave kink at the symmetric benchmark.

The decomposition in Lemma~\ref{lem:revenue derivatives} formalizes the two effects described at the beginning of this subsection. The term $I_H$ is the level effect. It follows the direction of the change and enters the two one-sided derivatives with the same sign. The sum $B_H+J_H$ is the asymmetry effect. It weakly lowers expected revenue in both directions of the change and therefore enters the two derivatives with opposite signs. The boundary term $B_H$ is a participation margin at the bottom: the perturbation pushes the weaker bidder's lowest types into bidding zero. Since the initial slope of the bid function in \eqref{bid function} vanishes precisely when $Q(0)=0$, the displaced bids matter at the first order when $Q(0)>0$. The jump term $J_H$ shows that an interior gap in the value distribution creates an analogous margin, since the slope of the bid function changes discretely at the corresponding quantile.

Proposition~\ref{prop:local revenue effects} generalizes Proposition~2 of \citet{kirkegaard2013incomplete}. Kirkegaard studies the same local revenue question along the scaling direction, which corresponds to $H=Q$ in our notation. Under his regularity assumptions, the value distribution has a $\mathcal C^1$, strictly positive density on a support of the form $[0,\bar v]$. In our notation, this implies $Q(0)=0$ and rules out jumps of $Q$. Hence $B_Q=J_Q=0$, and the two one-sided derivatives coincide. Equivalently, expected revenue varies smoothly through the symmetric benchmark along the scaling path with a positive derivative $I_Q$. Lemma~\ref{lem:revenue derivatives} shows how this smoothness can fail outside that particular environment: the boundary term $B_H$ and the jump term $J_H$ create the derivative gap
\[
    (R_H)'_-(0)-(R_H)'_+(0)=2(B_H+J_H).
\]
In fact, the local conclusion that weakening one bidder from the symmetric benchmark reduces expected revenue has a stronger global counterpart: starting from symmetric bidders, any non-trivial weakening of one bidder strictly lowers expected revenue.
\begin{prop}\label{prop:global-weakening}
Let $Q$ and $W$ be bounded quantile functions with $W\leq Q$, satisfying the maintained assumption that $W(p),Q(p)>0$ whenever $p>0$. Then
    \[
        \mathrm{Rev}(Q,Q)\geq \mathrm{Rev}(W,Q).
    \]
The inequality is strict whenever $W<Q$ on a set of positive Lebesgue measure.
\end{prop}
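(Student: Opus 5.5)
The plan is to use Theorem~\ref{main theo} with bidder 1 holding $W$ and bidder 2 holding $Q$, and to compare the explicit revenue with the symmetric benchmark $\mathrm{Rev}(Q,Q)=2\int_0^1(1-s)Q(s)\,ds$. This benchmark follows from \eqref{eq:revenue1}, since $\phi=\mathrm{id}$ and $\alpha=0$ when $F_1=F_2$. I have a complete argument for bidder 1's expected bid; for bidder 2's expected bid I have only a candidate route, flagged below as the main obstacle.

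\emph{Setup.} Since $W\le Q$, we have $S_W\ge S_Q$ pointwise. Hence $S_W(0)\ge S_Q(0)$, so the labeling \eqref{relabel} holds with $Q_1=W$ and $Q_2=Q$. Exactly as in the proof of Proposition~\ref{prop:FOSD}, $\phi(p)\le p$ on $[\alpha,1]$. Write $\psi=\phi^{-1}:[0,1]\to[\alpha,1]$, so that $\psi(s)\ge s$. The matching equation $S_Q(s)=S_W(\psi(s))$ gives, a.e., $\psi'(s)=W(\psi(s))/Q(s)$. I will need the absolute continuity of $\psi$ from the appendix proof of Theorem~\ref{main theo}.

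\emph{Step 1: bidder 1's expected bid.} Split $\mathrm{Rev}(W,Q)=\mathbb E[b_1]+\mathbb E[b_2]$. From the computation preceding \eqref{eq:revenue1},
\[
\mathbb E[b_1]=\int_\alpha^1(1-s)Q(\phi(s))\,ds\le\int_0^1(1-s)Q(s)\,ds .
\]
This holds because $Q$ is nondecreasing, $\phi(s)\le s$, and $(1-s)Q\ge0$ on the omitted interval $[0,\alpha]$. Equality forces $\alpha=0$ and $Q(\phi(s))=Q(s)$ a.e.

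\emph{Step 2: bidder 2's expected bid (main obstacle).} It remains to show
\[
\mathbb E[b_2]=\int_0^1(1-s)W(\psi(s))\,ds=\int_0^1(1-s)Q(s)\psi'(s)\,ds\le\int_0^1(1-s)Q(s)\,ds ,
\]
that is, $\int_0^1 h\,d(\psi-\mathrm{id})\le0$ with $h(s)=(1-s)Q(s)$. The difficulty is that $h$ is not monotone and bidder 2's bids need not fall pointwise. Here $\psi-\mathrm{id}$ is nonnegative, equals $\alpha$ at $0$ and vanishes at $1$, but $\psi'$ may exceed $1$ in places.

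I would first integrate by parts. This gives
\[
\int_0^1 h\,d(\psi-\mathrm{id})=-\alpha Q(0)-\int_0^1(\psi-s)(1-s)\,dQ(s)+\int_0^1(\psi-s)Q(s)\,ds .
\]
The first two terms are nonpositive, but the last is nonnegative. So the integration by parts alone does not close the argument. One must also use the constraint $W\le Q$ beyond the fact $\psi\ge\mathrm{id}$, namely the a.e.\ bound $\psi'(s)\le Q(\psi(s))/Q(s)$ that it implies.

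If bounding $\mathbb E[b_2]$ separately fails, I would not split the revenue. Instead I would bound the combined integrand of \eqref{eq:revenue} directly:
\[
\mathrm{Rev}(W,Q)=\int_0^1(2-s-\psi(s))\,Q(s)\psi'(s)\,ds .
\]
The hope is that the slack in Step 1, from $\phi\le\mathrm{id}$, offsets any excess here. I would try to organize this as a rearrangement argument over the tail quantities $\int_s^1$. An alternative route is a monotone homotopy $Q_\lambda=Q-\lambda(Q-W)$. Along it one would differentiate revenue in $\lambda$ using the directional-derivative formulas of Lemma~\ref{lem:revenue derivatives}, extended to asymmetric base points, and show the derivative in $\lambda$ is nonpositive.

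\emph{Step 3: strictness.} Suppose $W<Q$ on a set of positive measure. Then $S_W(p)>S_Q(p)$ for all $p$ below the supremum of that set, so $\phi(p)<p$ on a nondegenerate interval. Strictness then comes from the equality conditions of Step 1: equality there would force $Q(\phi(s))=Q(s)$ a.e., and when $Q$ is not constant on that interval this fails, giving a strict inequality in Step 1.

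If $Q$ happens to be constant on the relevant interval, Step 1 may be an equality. In that case I would have to extract the strict loss from the Step 2 inequality instead. This extraction has not been worked out here.
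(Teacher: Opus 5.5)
\textbf{There is a genuine gap, and it is worse than an unfinished step.} The inequality you set out to prove in Step~2, $\mathbb E[b_2]\le\int_0^1(1-s)Q(s)\,ds$, is false in general. So no argument that bounds each bidder's expected bid separately against the symmetric benchmark can work.

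Take $Q(s)=(1+s)/2$ (values uniform on $[1/2,1]$) and $W=\min\{Q,c\}$ with $c\in(1/2,1)$. Set $K=c\,e^{(1-c)/c}-1>0$ and $s_1=2e^{-(1-c)/c}-1$. Solving $S_Q(s)=S_W(\psi(s))$ gives two regimes:
\begin{itemize}
\item for $s\le s_1$: $1+\psi(s)=(1+K)(1+s)$, hence $W(\psi(s))=(1+K)Q(s)$;
\item for $s>s_1$: $W(\psi(s))=c$.
\end{itemize}
Therefore
\[
\mathbb E[b_2]-\int_0^1(1-s)Q(s)\,ds=K\int_0^{s_1}(1-s)Q(s)\,ds-\int_{s_1}^1(1-s)\bigl(Q(s)-c\bigr)\,ds .
\]
For $c=1-\eta$ this equals $\eta^2/6+O(\eta^3)>0$; at $c=0.9$ it is about $0.0011$.

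The reason is that the stronger bidder's low and middle quantiles are matched to weaker-bidder quantiles $\psi(s)>s$ whose values exceed $Q(s)$. Those quantiles therefore bid more steeply than in the symmetric auction. In your integration by parts, the term $\int_0^1(\psi-s)Q\,ds$ really does dominate. Every constraint holds in this example, including $\psi'\le Q(\psi)/Q$, so that bound cannot rescue the step.

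The revenue loss lives in bidder~1's bids, where your Step~1 is correct. That loss must be used to pay for bidder~2's possible gain, and your fallbacks do not do this:
\begin{itemize}
\item The ``rearrangement over tail quantities'' is not specified.
\item The homotopy needs derivative formulas at asymmetric base points, which Lemma~\ref{lem:revenue derivatives} does not provide. Example~\ref{ex:weakening-weaker-raises-revenue} shows that at such points weakening the weaker bidder has no general revenue sign. A nonpositive derivative along your path would itself need a new global argument.
\end{itemize}
Step~3 inherits the gap, because strictness from Step~1 only helps once the weak inequality is established.

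\textbf{The missing idea} is to compare the two revenues under a common parametrization that couples the bidders, rather than bidder by bidder. Pass to survival quantiles $h(u)=Q(1-u)$ and $\ell(u)=W(1-u)$. Index both bidders by the common tail cost $t$, defining $r(t)$ and $y(t)$ by $\int_0^{r(t)}du/h(u)=t=\int_0^{y(t)}du/\ell(u)$. Then $r'=a:=h(r)$, $y'=b:=\ell(y)$ and $y\le r$. On each $[0,\tau]$, letting $\tau\uparrow T=S_Q(0)$,
\[
\mathrm{Rev}(Q,Q)-\mathrm{Rev}(W,Q)=\int\bigl(ra^2-yb^2\bigr)\,dt+\int(a-b)(ra-yb)\,dt .
\]
\begin{itemize}
\item By change of variables, the first integral equals $\int_0^{r(\tau)}u\,h(u)\,du-\int_0^{y(\tau)}u\,\ell(u)\,du$. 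This is nonnegative because $\ell\le h$ and $y\le r$.
\item Write $z=r-y\ge0$, with $z(0)=0$. The second integral equals $\int a\,zz'\,dt+\int y\,(z')^2\,dt$. The first of these is $\tfrac12\int a\,d(z^2)$; integrating by parts against the decreasing function $a$ shows it is nonnegative.
\end{itemize}
Neither piece is a per-bidder quantity. For instance, $\mathbb E[b_2]=\int rab\,dt$, while the benchmark term is $\int ra^2\,dt$. This is why the split in Steps~1--2 cannot reproduce the argument. Strictness then comes from the first integral being strictly positive in the limit.
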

The proof is given in Appendix~\ref{Appendix:global weaken}.

The global weakening result is special to the symmetric benchmark. A tempting extension would be to conjecture that, from any asymmetric benchmark, further weakening the already weaker bidder must lower expected revenue. This conjecture is false. Once bidders are already asymmetric, weakening the weaker bidder may change the quantile matching function $\phi$ in a way that can raise the stronger bidder's induced payments enough to dominate the direct loss from the weaker value distribution.

\begin{eg}\label{ex:weakening-weaker-raises-revenue}
Let bidder 2 have quantile function $Q_2(p)\equiv1$. For $x\in(1/2,1]$, let bidder 1 have quantile function
    \[
        Q_x(p)=
        \begin{cases}
            1/6, & 0\le p\le 1/2,\\
            x, & 1/2<p\le 1.
        \end{cases}
    \]
Then $Q_x\le Q_2$ for every $x\in(1/2,1]$, so bidder 1 is the weaker bidder. Let
    \[
        R(x):=\mathrm{Rev}(Q_x,Q_2).
    \]
Then $R(x)$ is strictly decreasing in $x$ on $(1/2,1]$. Hence lowering $x$, which weakens bidder 1, raises expected revenue. The detailed computation is given in Appendix~\ref{appendix:example}.
\end{eg}

This example clarifies the scope of the preceding results. Symmetry has a special revenue-favorability property: any non-trivial weakening from the symmetric benchmark strictly lowers expected revenue, locally and globally. However, expected revenue is not monotone in the weaker bidder's strength at arbitrary asymmetric benchmarks.

\subsection{The Most and Least Penalizing Off-Path Posteriors}\label{sec:peace}
Theorem~\ref{theo:bid fosd}, which says a stronger bidder bids higher in the FOSD order, can be applied to provide simpler proofs of comparative statics observations that are central to the literature on conflict preemption, where an all-pay auction is preceded by a settlement negotiation stage. We recast the key steps of \citet{zheng2019necessary} and \citet{lu2026peacepayments} in our notation.

In their models, either one bidder or a neutral mediator first proposes a peaceful settlement. Specifically, in \citet{zheng2019necessary}, a neutral mediator proposes a split of the prize; in \citet{lu2026peacepayments}, one bidder proposes a bribe to the other. If it fails to settle, the two bidders play the all-pay auction with updated posterior beliefs. \citet{zheng2019necessary} and \citet{lu2026peacepayments} study the peaceful equilibrium, in which the failure of settlement is an off-path event, hence the posterior beliefs in principle can be any distribution supported on the original support.

These off-path beliefs are unrestricted, and the literature distinguishes two notions of conflict preemption according to how such a belief is resolved. Peace is \textit{implementable} if some off-path belief sustains it as an equilibrium, reflecting a mediator who can coordinate the bidders on a favorable continuation play. Peace is \textit{securable} if acceptance remains optimal under every off-path belief. To study either notion, one compares each type's expected payoff from accepting the proposal against its equilibrium payoff from rejection. The type for which the incentive constraint binds is the highest type, since this type gets the highest payoff in the all-pay auction stage among all possible types. In the equilibrium of the all-pay auction, one of this type’s best responses is to bid the common supremum $\bar b$ of the bid support and win with probability one, yielding a rejection payoff of $\bar v_i-\bar b$.\footnote{\citet{zheng2019necessary} uses the strategically equivalent normalized payoff $\mathbbm{1}\{i\text{ wins}\} - b_i/v_i$, obtained from the standard all-pay payoff by dividing by $v_i$.} The implementability and security questions therefore reduce to how $\bar b$ varies with the posterior $F_i$. This becomes simple to analyze once we have Theorem~\ref{theo:bid fosd} in hand.

Let $\mathrm{supp}(F_i^0)$ denote the support of bidder $i$'s original value distribution, and let $\underline v_i$ and $\bar v_i$ be its smallest and largest points (equivalently, the endpoints of its convex hull). Write $\bar b(F_i)$ for the supremum of the equilibrium bid support when bidder $i$'s posterior distribution is $F_i$ and bidder $j$'s is held fixed.
\begin{coro}\label{coro:bbar bounds}
Fix bidder $j$'s value distribution $F_j$. Among all posteriors $F_i$ with support contained in $\mathrm{supp}(F_i^0)$, the supremum of the bid support $\bar b(F_i)$ is minimized at $\delta_{\underline v_i}$\footnote{If $\underline v_i=0$, the statement should be interpreted as a limiting result as the posterior degenerates to values approaching zero.} and maximized at $\delta_{\bar v_i}$, where $\delta_x$ is the Dirac measure at $x$.
\end{coro}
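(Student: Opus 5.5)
The plan is to deduce the corollary from Theorem~\ref{theo:bid fosd} by noting that $\bar b(F_i)$ is a monotone functional of bidder $i$'s equilibrium bid distribution. By Lemma~\ref{lm:basic}(iv), both bidders' bid distributions share the common support $[0,\bar b]$. By Lemma~\ref{increasing bid function} and Theorem~\ref{main theo}, bidder $i$'s monotone quantile bid function $b_i$ is continuous and increasing on $[0,1]$. Hence
\[
\bar b(F_i)=b_i(1)=\sup\{b:G_i(b)<1\},
\]
that is, $\bar b$ is the top of the support of $G_i$, or equivalently $G_i^{-1}(1)$.

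\textbf{Monotonicity step.} Suppose $\hat F_i$ FOSD $F_i$. Theorem~\ref{theo:bid fosd} gives $\hat G_i\le G_i$ pointwise. Then $\hat G_i(b)<1$ whenever $G_i(b)<1$, so $\sup\{b:\hat G_i(b)<1\}\ge\sup\{b:G_i(b)<1\}$, which means $\bar b(\hat F_i)\ge \bar b(F_i)$. I would also remark that this does not depend on the relabeling \eqref{relabel}: Theorem~\ref{theo:bid fosd} is stated for an arbitrary bidder $i$, and the common support makes $\bar b$ the same object whichever bidder's distribution we read it from.

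\textbf{Extremality step.} Any posterior $F_i$ with support in $\mathrm{supp}(F_i^0)\subseteq[\underline v_i,\bar v_i]$ satisfies
\[
\delta_{\bar v_i}\ \text{FOSD}\ F_i\ \text{FOSD}\ \delta_{\underline v_i},
\]
because the c.d.f.s are ordered as $\mathbf 1\{x\ge\bar v_i\}\le F_i(x)\le \mathbf 1\{x\ge \underline v_i\}$. Both Dirac measures are feasible posteriors, since $\underline v_i,\bar v_i$ belong to the closed support. Applying the monotonicity step twice gives $\bar b(\delta_{\underline v_i})\le \bar b(F_i)\le \bar b(\delta_{\bar v_i})$, so the minimum and maximum are attained at the two Dirac measures.

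\textbf{Main obstacle: the boundary case $\underline v_i=0$.} Here $\delta_0$ violates the maintained assumption $F_i(0)=0$, which is why the footnote calls for a limiting interpretation. I would handle it as follows. For each $\varepsilon>0$, the posterior $\delta_{\varepsilon}$, or any feasible posterior concentrated near $0$, is FOSD-dominated by every feasible $F_i$ that puts no mass below $\varepsilon$. The monotonicity step then shows that $\bar b(\delta_\varepsilon)$ is a lower bound over such posteriors and is increasing in $\varepsilon$. So the infimum is approached as $\varepsilon\to0$. If one wants the limit explicitly, Theorem~\ref{main theo} gives $\bar b=b_2(1)$ as an integral of quantile functions, and one can check directly that it tends to $0$ when the posterior degenerates to $0$. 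A remaining technical point is that $0$ may be isolated in $\mathrm{supp}(F_i^0)$, so $\delta_\varepsilon$ might not be feasible. In that case I would use approximating posteriors that place mass $1-\eta$ near the smallest positive support points, and invoke the same FOSD comparison.
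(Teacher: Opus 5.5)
Your proposal is correct and matches the paper's argument: it sandwiches every feasible posterior between $\delta_{\underline v_i}$ and $\delta_{\bar v_i}$ in the FOSD order and invokes Theorem~\ref{theo:bid fosd}, and your $\sup\{b:G_i(b)<1\}$ step just makes explicit the paper's ``hence so does its supremum.'' Your discussion of $\underline v_i=0$ goes beyond what the paper writes, and it closes most simply by noting that $\bar b(\delta_\varepsilon)\le\varepsilon$ for positive support points $\varepsilon\to0$ (a type with value $\varepsilon$ never bids above $\varepsilon$) while $\bar b(F_i)>0$ always; your worry about $0$ being isolated in $\mathrm{supp}(F_i^0)$ cannot arise, because $F_i^0(0)=0$ forces $F_i^0$ to put mass arbitrarily close to $0$.
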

\begin{proof}
Every such $F_i$ satisfies $\delta_{\bar v_i}\succeq_{\mathrm{FOSD}}F_i \succeq_{\mathrm{FOSD}}\delta_{\underline v_i}$. By Theorem~\ref{theo:bid fosd}, bidder $i$'s equilibrium bid distribution inherits this FOSD ordering, hence so does its supremum $\bar b$.
\end{proof}

Corollary~\ref{coro:bbar bounds} delivers essentially the same result as Lemma~2 of \citet{zheng2019necessary} and the first inequality of Lemma~10 of \citet{lu2026peacepayments}.\footnote{The second inequality of Lemma~10 in \citet{lu2026peacepayments} states that bidder $j$'s set of rejection types is largest when $F_i=\delta_{\underline v_i}$. This also follows from Theorem~\ref{theo:bid fosd}: since $\delta_{\underline v_i}$ induces the weakest bid distribution for bidder $i$, every type of bidder $j$ obtains a higher expected auction payoff and is therefore more inclined to reject.} \citet{zheng2019necessary} proves the comparison by analyzing how the equilibrium bid-support supremum varies with the off-path posterior. His argument uses a distributional characterization of the all-pay equilibrium and compares the induced bid distributions through the linkage between marginal costs, marginal revenues, and the slopes of equilibrium bid distributions.\footnote{In the notation of \citet{zheng2019necessary}, $F_i^{-1}$ denotes the generalized inverse of the type distribution, so it corresponds to our quantile function $Q_i$. His bid-to-type map $\gamma_{i,\sigma}(b)=\widetilde F_i^{-1}(H_{i,\sigma}(b))$ is the bid-space counterpart: when $p=H_{i,\sigma}(b)$, one has $\gamma_{i,\sigma}(b)=Q_i(p)$ for the relevant type distribution. Thus the reciprocal marginal-cost term $1/\gamma_{i,\sigma}(b)$ in his bid-space analysis corresponds to $1/Q_i(p)$ in our quantile-space formulation.} \citet{lu2026peacepayments} rely on \citeposs{zheng2019necessary} result for this part. With Theorem~\ref{theo:bid fosd}, the comparison of suprema becomes an immediate implication of the FOSD ranking.

\section{Conclusion}\label{conclusion}

This paper provides an explicit solution to the two-bidder all-pay auction with asymmetric independent private values. The main result expresses the equilibrium bid functions directly in terms of the primitives, without imposing continuity or discreteness assumptions on the value distributions. The formula brings these cases under a common framework and makes the equilibrium directly applicable for further analysis.

The other results illustrate why this explicitness is useful. First, the formula yields first-order stochastic dominance comparative statics for equilibrium bid distributions: when one bidder's value distribution increases in the FOSD order, the corresponding equilibrium bid distribution also increases in the FOSD order. This comparison, previously known for smooth value distributions, holds here without any regularity assumption, and implies that the opponent is weakly worse off against a stronger bidder. Second, taking expectations of the bid functions gives a closed-form expression for expected revenue. This generalizes local revenue comparisons around the symmetric benchmark by allowing arbitrary admissible perturbation directions and by identifying how boundary and jump terms create a gap between the two one-sided marginal effects. It also yields a global comparison: starting from symmetry, weakening one bidder lowers expected revenue. Third, the formula simplifies central arguments in some multi-stage all-pay auction environments. In the application to conflict preemption, the FOSD comparative statics gives a direct way to identify the posterior beliefs that minimize or maximize the relevant continuation payoff, avoiding more involved derivations of the equilibrium bid distribution.

Methodologically, the paper also shows the usefulness of working in quantile space. The quantile representation absorbs the randomization generated by atoms in the value distribution and allows equilibrium bid functions to be represented as deterministic and monotone functions of quantiles. This representation makes it possible to recover the first-order condition behind the classical ODE approach almost everywhere and then integrate it directly. In this sense, the closed-form formula can be viewed as an explicit integration of the all-pay auction ODE in quantile space, but one that remains valid without the regularity assumptions required by the standard ODE approach.

The main limitation of the analysis is the independent private value assumption. This paper relies on both the independence and the private-value parts of this assumption, but in different ways. Independence is used in the proof because each type of a bidder faces the same opponent bid distribution. With correlated private values, the opponent's conditional bid distribution may vary with the bidder's own type, so the convexity argument behind Theorem~\ref{main theo} (Lemma~\ref{Lem:conv} in the appendix) need not apply directly. This difficulty concerns the proof method rather than necessarily the form of the result. A more substantial obstacle arises with interdependent values. In that case, a bidder's payoff from a bid depends not only on the probability of winning, but also on the value conditional on the opponent types associated with that bid. As a result, the first-order conditions no longer yield a separable ODE, and the integration argument that delivers the explicit formula breaks down.\footnote{\citet{lu2017monotone} show that signal correlation can be absorbed into a payoff-equivalent fictitious auction with independent uniform signals and suitably transformed valuation functions. The transformed valuation functions generally remain interdependent, and this is precisely what prevents the first-order conditions here from reducing to a separable ODE.} Extending the primitive-based approach beyond independent private values is therefore a natural question for future research.

More broadly, the formula developed here may be useful in applications where all-pay auctions appear as continuation games and where posterior beliefs need not be smooth or finite. Information disclosure, information design, and multi-stage conflict models often generate such posterior distributions endogenously. The results of this paper suggest that explicit primitive-based characterizations can make these environments more tractable, both by simplifying equilibrium computation and by making comparative statics transparent.

\bibliography{ref}

@article{siegel2014asymmetric,
  title={Asymmetric all-pay auctions with interdependent valuations},
  author={Siegel, Ron},
  journal={Journal of Economic Theory},
  volume={153},
  pages={684--702},
  year={2014},
  publisher={Elsevier}
}

@article{amann1996asymmetric,
  title={Asymmetric all-pay auctions with incomplete information: the two-player case},
  author={Amann, Erwin and Leininger, Wolfgang},
  journal={Games and economic behavior},
  volume={14},
  number={1},
  pages={1--18},
  year={1996},
  publisher={Elsevier}
}

@article{lu2017monotone,
  title={Monotone equilibrium of two-bidder all-pay auctions Redux},
  author={Lu, Jingfeng and Parreiras, S{\'e}rgio O},
  journal={Games and Economic Behavior},
  volume={104},
  pages={78--91},
  year={2017},
  publisher={Elsevier}
}

@article{olszewski2023equilibrium,
  title={Equilibrium existence in games with ties},
  author={Olszewski, Wojciech and Siegel, Ron},
  journal={Theoretical Economics},
  volume={18},
  number={2},
  pages={481--502},
  year={2023},
  publisher={Wiley Online Library}
}

@article{zheng2019necessary,
  title={Necessary and sufficient conditions for peace: Implementability versus security},
  author={Zheng, Charles Z},
  journal={Journal of Economic Theory},
  volume={180},
  pages={135--166},
  year={2019},
  publisher={Elsevier}
}

@book{leoni2017first,
  title={A first course in Sobolev spaces},
  author={Leoni, Giovanni},
  year={2017},
  publisher={American Mathematical Soc.}
}

@article{kirkegaard2013incomplete,
  title={Incomplete information and rent dissipation in deterministic contests},
  author={Kirkegaard, Ren{\'e}},
  journal={International Journal of Industrial Organization},
  volume={31},
  number={3},
  pages={261--266},
  year={2013},
  publisher={Elsevier}
}

@article{hopkins2007cross,
  title={Cross and double cross: Comparative statics in first price and all pay auctions},
  author={Hopkins, Ed and Kornienko, Tatiana},
  journal={The B.E. Journal of Theoretical Economics},
  volume={7},
  number={1},
  pages={19},
  year={2007},
  publisher={De Gruyter}
}

@article{kirkegaard2008comparative,
  author  = {Kirkegaard, Ren{\'e}},
  title   = {Comparative Statics and Welfare in Heterogeneous All-Pay Auctions:
             Bribes, Caps, and Performance Thresholds},
  journal = {The B.E. Journal of Theoretical Economics},
  volume  = {8},
  number  = {1},
  pages   = {1--32},
  year    = {2008}
}

@article{chen2025regularization,
  title={The Regularization and Optimal Design of All-Pay Auctions},
  author={Chen, Bo and Lu, Jingfeng and Bai, Pengcheng},
  journal={Available at SSRN 5314593},
  year={2025}
}

@misc{lu2026peacepayments,
      title={Peace Through Side Payments}, 
      author={Jingfeng Lu and Zongwei Lu and Christian Riis},
      year={2026},
      eprint={2107.11575},
      archivePrefix={arXiv},
      primaryClass={econ.TH},
      url={https://arxiv.org/abs/2107.11575}, 
}

@article{morath2008private,
  title={Private versus complete information in auctions},
  author={Morath, Florian and M{\"u}nster, Johannes},
  journal={Economics Letters},
  volume={101},
  number={3},
  pages={214--216},
  year={2008},
  publisher={Elsevier}
}

@article{lu2018ranking,
  title={Ranking disclosure policies in all-pay auctions},
  author={Lu, Jingfeng and Ma, Hongkun and Wang, Zhe},
  journal={Economic Inquiry},
  volume={56},
  number={3},
  pages={1464--1485},
  year={2018},
  publisher={Wiley Online Library}
}

@article{chen2017persuasion,
  title={Persuasion and timing in asymmetric-information all-pay auction contests},
  author={Chen, Jian and Kuang, Zhonghong and Zheng, Jie},
  journal={Available at SSRN 4099339},
  year={2017}
}

@article{bulow1989simple,
  title={The simple economics of optimal auctions},
  author={Bulow, Jeremy and Roberts, John},
  journal={Journal of political economy},
  volume={97},
  number={5},
  pages={1060--1090},
  year={1989},
  publisher={The University of Chicago Press}
}

@inproceedings{alaei2013simple,
  title={The simple economics of approximately optimal auctions},
  author={Alaei, Saeed and Fu, Hu and Haghpanah, Nima and Hartline, Jason},
  booktitle={2013 IEEE 54th Annual Symposium on Foundations of Computer Science},
  pages={628--637},
  year={2013},
  organization={IEEE}
}

@article{maskin2000asymmetric,
  title={Asymmetric auctions},
  author={Maskin, Eric and Riley, John},
  journal={The review of economic studies},
  volume={67},
  number={3},
  pages={413--438},
  year={2000},
  publisher={Wiley-Blackwell}
}

@article{lebrun1999first,
  title={First price auctions in the asymmetric N bidder case},
  author={Lebrun, Bernard},
  journal={International Economic Review},
  volume={40},
  number={1},
  pages={125--142},
  year={1999},
  publisher={Wiley Online Library}
}

@article{milgrom1985distributional,
  title={Distributional strategies for games with incomplete information},
  author={Milgrom, Paul R and Weber, Robert J},
  journal={Mathematics of operations research},
  volume={10},
  number={4},
  pages={619--632},
  year={1985},
  publisher={INFORMS}
}

@article{fu2014disclosure,
  title={Disclosure policy in a multi-prize all-pay auction with stochastic abilities},
  author={Fu, Qiang and Jiao, Qian and Lu, Jingfeng},
  journal={Economics Letters},
  volume={125},
  number={3},
  pages={376--380},
  year={2014},
  publisher={Elsevier}
}

@article{prokopovych2023monotone,
  title={On monotone pure-strategy Bayesian-Nash equilibria of a generalized contest},
  author={Prokopovych, Pavlo and Yannelis, Nicholas C},
  journal={Games and Economic Behavior},
  volume={140},
  pages={348--362},
  year={2023},
  publisher={Elsevier}
}

@book{nelsen2006introduction,
  title={An introduction to copulas},
  author={Nelsen, Roger B},
  year={2006},
  publisher={Springer}
}

@article{rentschler2016two,
  title={Two-bidder all-pay auctions with interdependent valuations, including the highly competitive case},
  author={Rentschler, Lucas and Turocy, Theodore L},
  journal={Journal of Economic Theory},
  volume={163},
  pages={435--466},
  year={2016},
  publisher={Elsevier}
}

\appendix
\numberwithin{equation}{section}

\section{Omitted proofs}

\subsection{Proof of Lemma~\ref{increasing bid function}}\label{Appendix: lemma increasing bid}

Throughout we fix a BNE and, since the two bidders are analyzed separately, suppress the index $i$ until the final paragraph. By Lemma~\ref{lm:basic}(iii) there is an $F$-null set $N\subseteq V$ (the types that need not best-respond) such that every $v\in V\setminus N$ best-responds and the bid supports $R(v):=\operatorname{supp}\sigma(v)$ are ordered by type.

\emph{Proof of (ii) Regularity:} by definition,
    \[
        b(p)\coloneq G^{-1}(p)=\inf\{b\ge 0:G(b)\ge p\}
    \]
is weakly increasing. It is discontinuous at $p_0\in(0,1)$ only if $G$ is flat at level $p_0$, i.e.\ $G(a)=G(\tilde a)=p_0$ for some $a<\tilde a$. By the no-gap property in Lemma~\ref{lm:basic}(i), $G$ has no flat at any level in $(0,1)$, so $b$ is continuous on $(0,1]$; together with $b(p)\to0$ as $p\to0^+$ (the bid support has infimum zero by the common-support property, Lemma~\ref{lm:basic}(iv)), this gives continuity on $[0,1]$. Likewise $b$ is constant on an interval of quantiles only if $G$ has an atom at the corresponding bid. By the no-atom property in Lemma~\ref{lm:basic}(ii), $G$ has no atom at any positive bid, so $b$ is strictly increasing wherever $b(p)>0$. The single admissible atom of $G$ is at bid zero, of size $G(0)$, producing the initial flat $\{p:b(p)=0\}=[0,G(0)]$.

\emph{Proof of (i) Equivalence:} write $\mu$ for the joint distribution of (value, bid) in the BNE of the original auction. Both $\mu$ and the distribution of $(Q(p),b(p))$ for $p\sim U[0,1]$ have value-marginal $F$. Hence, to show that the two conditional bid distributions given the value agree for $F$-almost every $v$, it suffices to show that the two joint distributions coincide.

\begin{claim}\label{claim:frechet}
For all $(x,y),\quad\mu\big((-\infty,x]\times(-\infty,y]\big)=\min\{F(x),G(y)\}.$%
\footnote{This is the Fr\'echet--Hoeffding upper bound for the joint c.d.f.\ with marginals $F$ and $G$ (see, e.g., \citealt{nelsen2006introduction}, \S2.5); the proof here is self-contained and requires no further reference.}
\end{claim}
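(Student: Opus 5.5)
We need to prove Claim~\ref{claim:frechet}: the equilibrium joint c.d.f.\ of (value, bid) equals $\min\{F(x),G(y)\}$. Our approach rests on the monotonicity property in Lemma~\ref{lm:basic}(iii). Bids are comonotone with values up to the $F$-null set $N$ and up to randomization within a single type. That is exactly the comonotone coupling, whose joint c.d.f.\ is the Fréchet--Hoeffding upper bound.

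\emph{Step 1 (upper bound).} For any joint distribution with marginals $F$ and $G$,
\[
\mu\bigl((-\infty,x]\times(-\infty,y]\bigr)\le \min\bigl\{\mu((-\infty,x]\times\mathbb R),\ \mu(\mathbb R\times(-\infty,y])\bigr\}=\min\{F(x),G(y)\}.
\]
So only the reverse inequality needs work. Write $A=\{v\le x\}$ and $B=\{b\le y\}$. The reverse inequality is equivalent to showing that either $\mu(A\cap B^c)=0$ or $\mu(A^c\cap B)=0$. Indeed, $\mu(A\cap B)=F(x)-\mu(A\cap B^c)=G(y)-\mu(A^c\cap B)$, so if either defect vanishes, $\mu(A\cap B)$ equals $F(x)$ or $G(y)$, and hence equals the minimum.

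\emph{Step 2 (comonotonicity kills one defect).} Suppose both defects are positive. Since $N$ is $F$-null, we can discard it. Then there is a type $v^1\le x$, $v^1\notin N$, whose support $R(v^1)$ contains a point $b^1>y$. There is also a type $v^2>x$, $v^2\notin N$, whose support contains a point $b^2\le y$. Here we use that $\sigma(v)$-positive sets meet $\operatorname{supp}\sigma(v)$.

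These types satisfy $v^1<v^2$. Lemma~\ref{lm:basic}(iii) then gives
\[
b^1\le \sup R(v^1)\le \inf R(v^2)\le b^2\le y< b^1,
\]
which is a contradiction.

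\emph{Step 3 (measure-theoretic justification).} The obstacle is making Step 2 rigorous. We must show that positivity of $\mu(A\cap B^c)=\int_{\{v\le x\}}\sigma(v)\bigl((y,\infty)\bigr)\,dF(v)$ produces a type $v\notin N$ with $v\le x$ and $\sigma(v)((y,\infty))>0$. This holds because the integrand is positive on a set of positive $F$-measure, so that set is not contained in $N$. For such a type, $\operatorname{supp}\sigma(v)$ meets $(y,\infty)$, since the support of a measure charges every open set of positive measure. The other defect is handled symmetrically, using $\sigma(v)((-\infty,y])>0$: the support is closed and carries full mass, so it meets $(-\infty,y]$.

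No atom or no-gap property is needed here; only the ordering of supports is used. Once the Claim is established, the distribution of $(Q(p),G^{-1}(p))$ for $p\sim U[0,1]$ also has joint c.d.f.\ $\min\{F(x),G(y)\}$. This follows from $\{Q(p)\le x\}=\{p\le F(x)\}$ and $\{G^{-1}(p)\le y\}=\{p\le G(y)\}$. The two joint laws therefore coincide.
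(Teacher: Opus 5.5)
Your proof is correct and follows essentially the same route as the paper, which also writes $\mu(B^1)=F(x)-H(x,y)$ and $\mu(B^2)=G(y)-H(x,y)$, reduces the claim to showing that one of these two off-diagonal masses vanishes, and obtains a contradiction from two non-null types $v^1\le x<v^2$ with support points $b^1>y\ge b^2$ via the support ordering in Lemma~\ref{lm:basic}(iii). Your Step~3 only spells out the measure-theoretic details that the paper states in one line: the integrand is positive somewhere off $N$, and a set of positive $\sigma(v)$-measure must meet $\operatorname{supp}\sigma(v)$ because the support carries full mass.
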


\begin{proof}
Write $H(x,y):=\mu((-\infty,x]\times(-\infty,y])$ and set $B^1:=(-\infty,x]\times(y,\infty)$, $B^2:=(x,\infty)\times(-\infty,y]$. Because the marginals of $\mu$ are $F$ and $G$,
\[
    \mu(B^1)=F(x)-H(x,y),\qquad \mu(B^2)=G(y)-H(x,y),
\]
so $H(x,y)=\min\{F(x),G(y)\}$ once we show $\min\{\mu(B^1),\mu(B^2)\}=0$. Suppose instead that $\mu(B^1)>0$ and $\mu(B^2)>0$. Since
\[
    \mu(B^1)=\int_{\{v\le x\}}\sigma(v)\big((y,\infty)\big)\,\mathrm{d}F(v)>0,
\]
and $N$ is $F$-null, there is $v^1\le x$ with $v^1\notin N$ and $\sigma(v^1)((y,\infty))>0$. Hence there exists $b^1\in R(v^1)$ satisfying $b^1>y$. Symmetrically, $\mu(B^2)>0$ yields $v^2>x$ with $v^2\notin N$ and some $b^2\in R(v^2)$ satisfying $b^2\le y$. But then $v^1<v^2$ both outside $N$, while $b^1>y\ge b^2$, contradicting the monotonicity in Lemma~\ref{lm:basic}(iii).
\end{proof}

On the other hand
\[
    \Pr\big(Q(p)\le x,\;b(p)\le y\big)
    =\Pr\big(p\le F(x),\;p\le G(y)\big)=\min\{F(x),G(y)\},
\]
so $(Q(p),b(p))$ has the same joint c.d.f.\ as $\mu$. A joint c.d.f.\ determines the distribution on $\mathbb{R}^2$, hence the two joint distributions coincide.

Finally we prove that $(b_1,b_2)$ is a BNE of the quantile auction. Restore the index $i$ and fix a bidder $i$. In the quantile auction, by \eqref{eq: lemma increasing bid}, the opponent's unconditional bid c.d.f.\ induced by $b_{-i}$ is $G_{-i}$, which is the same as in the original auction. Since $BR_i(v)$ depends only on $v$ and the opponent's bid distribution, it is the same in both auctions. By part (i), the conditional distribution of $b_i(p_i)$ given $Q_i(p_i)=v$ equals $\sigma_i(v)$ for $F_i$-a.e.\ $v$, and the original BNE gives $\sigma_i(v)(BR_i(v))=1$ for $F_i$-a.e.\ $v$. Hence $b_i(p_i)\in BR_i(Q_i(p_i))$ for a.e.\ $p_i$. As this holds for $i=1,2$, the profile $(b_1,b_2)$ is a BNE of the quantile auction.

\subsection{Proof of Theorem~\ref{main theo}}\label{appendix:proof main theo}
Let $(b_1,b_2)$ be a monotone quantile representative of an equilibrium satisfying the regularity condition in Lemma~\ref{increasing bid function}(ii). By Lemma~\ref{increasing bid function}, any equilibrium has such a monotone representation, so the restriction is without loss of generality. For each bidder $i$, let $P_i$ be the set of quantiles at which bidder $i$'s equilibrium bid is a best response:
\[
P_i:=\{p\in[0,1]: b_i(p)\in BR_i(Q_i(p))\}.
\]
By the definition of equilibrium in the quantile auction, $\lambda(P_i)=1$, where $\lambda$ denotes Lebesgue measure. In what follows, optimality of the equilibrium bid will only be used for quantiles in $P_i$. Since $P_i$ has full measure, it is dense in every non-degenerate interval.

By Lemma~\ref{lm:basic}, at most one bidder can have a strictly positive probability of bidding zero in the equilibrium. We first consider the case in which bidder 2 has no positive mass at bid zero. After deriving \eqref{eq:S(phi)}, we show that this case is the only one consistent with the labeling convention \eqref{relabel}.

Let $\alpha=\sup\{p:b_1(p)=0\}$. By the no-gap property and monotonicity in Lemma~\ref{lm:basic}, bidder 1 bids zero if and only if $p\in[0,\alpha]$. By Lemma~\ref{increasing bid function}, since bidder 2 has no positive mass at bid zero, $b_2$ is continuous and strictly increasing on $[0,1]$, with $b_2(0)=0$ and $b_2(1)=\bar b$, where $\bar b$ is the common upper endpoint of the bid supports.

Hence $b_2$ is a bijection from $[0,1]$ to $[0,\bar b]$, so for every $p\in[\alpha,1]$ there is a unique quantile $\phi(p)\in[0,1]$ such that
\begin{equation}\label{eq:def phi}
    b_1(p)=b_2(\phi(p)),
\end{equation}
that is, $\phi=b_2^{-1}\circ b_1$ on $[\alpha,1]$. As a composition of continuous, strictly increasing functions, $\phi$ is continuous and strictly increasing on $[\alpha,1]$. Moreover,
\[
\phi(\alpha)=0,
\qquad
\phi(1)=1.
\]
Thus the restriction
\[
\phi|_{[\alpha,1]}:[\alpha,1]\to[0,1]
\]
has a continuous inverse. Throughout the proof, $\phi^{-1}$ denotes this inverse, so in particular $\phi^{-1}(0)=\alpha$.

For notational convenience, extend $\phi$ to all of $[0,1]$ by setting
\[
\phi(p)=0,\qquad p\in[0,\alpha).
\]
This extension is consistent with the quantile matching interpretation: bidder 1 bids zero for all $p\in[0,\alpha]$, while bidder 2 bids zero only at quantile $0$.

\begin{lemma}
\label{Lem:conv}
$b_1,b_2$ are convex, hence absolutely continuous and almost everywhere differentiable.
\end{lemma}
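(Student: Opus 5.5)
The idea is to rewrite each bidder's best-response problem as a choice of \emph{winning probability}, paying the opponent's bid function as the price. Then $b_1$ and $b_2$ become cost functions that an optimizing type supports from below by a line. A function that is supported from below at a dense set of points and is continuous must coincide with its convex envelope, which gives convexity.

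\emph{Bidder 2's problem.} Bids above $\bar b$ are dominated by $\bar b$, so we may restrict to bids in $[0,\bar b]$. Because $b_1$ is continuous and strictly increasing on $[\alpha,1]$ with $b_1([\alpha,1])=[0,\bar b]$, every such bid equals $b_1(x)$ for a unique $x\in[\alpha,1]$.
\begin{itemize}
\item For $x>\alpha$, the bid $b_1(x)$ wins with probability exactly $x$. This uses the absence of atoms of $G_1$ at positive bids (Lemma~\ref{lm:basic}(ii)).
\item The only exception is the zero bid, which wins with probability $\alpha/2$ rather than $\alpha$. This only lowers the payoff at one point.
\end{itemize}
Hence a type $q\in P_2$ obtains
\[
U_2(q)=\sup_{x\in[\alpha,1]}\bigl\{Q_2(q)\,x-b_1(x)\bigr\}
\]
(take the limit $x\downarrow\alpha$ for the endpoint). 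Its equilibrium bid $b_2(q)=b_1(\phi^{-1}(q))$ attains this supremum at $x^*=\phi^{-1}(q)$.
\begin{itemize}
\item For $q>0$ we have $x^*>\alpha$, so the tie issue does not arise there.
\item Optimality says exactly that the affine function $\ell_q(x)=b_1(x^*)+Q_2(q)(x-x^*)$ satisfies $\ell_q\le b_1$ on $[\alpha,1]$, with equality at $x^*$.
\end{itemize}
Consequently $b_1$ agrees with its lower convex envelope $\breve b_1$ on $[\alpha,1]$ at every point of the set $X:=\phi^{-1}(P_2\cap(0,1])$.

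\emph{From a dense set to everywhere.} Since $P_2$ has full measure, it is dense in $[0,1]$, and $\phi^{-1}$ is a homeomorphism $[0,1]\to[\alpha,1]$. So $X$ is dense in $[\alpha,1]$.
\begin{itemize}
\item $\breve b_1\le b_1$ everywhere, and $\breve b_1$ is convex, hence continuous on $(\alpha,1)$.
\item $b_1$ is continuous (Lemma~\ref{increasing bid function}(ii)).
\item The two continuous functions agree on the dense set $X$, so they agree on $(\alpha,1)$.
\item At the endpoints, use lower semicontinuity of $\breve b_1$ (which may be taken closed) together with $\breve b_1\le b_1$ and continuity of $b_1$.
\end{itemize}
Thus $b_1$ is convex on $[\alpha,1]$. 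Since $b_1\equiv0$ on $[0,\alpha]$, $b_1\ge0$, and $b_1$ is nondecreasing, its right derivative at $\alpha$ is nonnegative. Therefore gluing the zero segment onto the convex piece preserves convexity, and $b_1$ is convex on $[0,1]$.

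\emph{Bidder 1's problem.} The argument is symmetric.
\begin{itemize}
\item Bidder 2 has no atom at any bid, so a bid $b_2(y)$ wins with probability $y$ for every $y\in[0,1]$.
\item A type $p\in P_1\cap(\alpha,1]$ maximizes $Q_1(p)\,y-b_2(y)$ at $y=\phi(p)$. Types $p\le\alpha$ bid zero, and that is the case $y=0$.
\item The set $\phi(P_1\cap(\alpha,1])$ is dense in $[0,1]$.
\end{itemize}
The same envelope argument then shows that $b_2$ is convex on $[0,1]$.

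\emph{Conclusion and main obstacle.} A finite convex function on a compact interval that is continuous at the endpoints is Lipschitz on every compact subinterval of the interior and monotone near the endpoints. Hence it is absolutely continuous on $[0,1]$ and differentiable almost everywhere, which gives the remaining claims of Lemma~\ref{Lem:conv}. The step I expect to need the most care is the reparametrization of deviations by winning probability. One must check that every bid in $[0,\bar b]$ is covered, that the tie at zero when bidder 1 has an atom there does not break the inequality $\ell_q\le b_1$, and that optimality is used only on the full-measure sets $P_i$, whose images under $\phi^{\pm1}$ remain dense.
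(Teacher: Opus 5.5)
Your proposal is correct and follows essentially the paper's route: you reparametrize deviations by the opponent's quantile (the winning probability), obtain the supporting lines $b_1(x)\ge b_1(\phi^{-1}(q))+Q_2(q)(x-\phi^{-1}(q))$ and $b_2(y)\ge b_2(\phi(p))+Q_1(p)(y-\phi(p))$ on the dense sets $\phi^{-1}(P_2\cap(0,1])$ and $\phi(P_1\cap(\alpha,1])$, deduce convexity from continuity, and glue on the zero segment $[0,\alpha]$. The only difference is that you go from dense supporting lines to convexity through the lower convex envelope rather than the paper's direct chord argument (Lemma~\ref{lm:convex}); one small correction is that at the endpoints the needed inequality $\breve b_1\ge b_1$ comes from the convexity of $\breve b_1$, not from its lower semicontinuity, which only gives the direction $\breve b_1\le b_1$ you already had, and it is simpler to do as the paper does and extend convexity from the open interval to the closed one using the continuity of $b_1$.
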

\begin{proof}
The maximization problem of bidder 1 is given by\footnote{Since bidder 2 has no positive mass at bid zero, the expression of bidder 1's problem is valid for all $b$. For the case of bidder 2 which we use later, since we restrict $q\in (0,1]$, hence $\phi^{-1}(q)\in (\alpha,1]$, the expression is also valid as the bid is always strictly positive.}
\begin{equation}
    \label{maximization value}
    \max_b vG_2(b)-b.
\end{equation}
Equivalently, bidder 1's strategy can be viewed as choosing a quantile $q$ of bidder 2 to match:
\begin{equation}
    \label{maximization}
    \max_{q\in [0,1]} Q_1(p)q-b_2(q)= \max_{\tilde p\in [0,1]} Q_1(p)\phi(\tilde p)-b_1(\tilde p),
\end{equation}
where the equality comes from \eqref{eq:def phi}. For any $p \in P_1\cap (\alpha,1]$ and $\tilde p\in (\alpha,1]$, since bidder 1 with quantile $p$ prefers the equilibrium bid $b_1(p)=b_2(\phi(p))$ to any deviation $b_1(\tilde p)=b_2(\phi(\tilde p))$, we have
\begin{equation}\label{eq:optimality}
    Q_1(p)\phi(p)-b_1(p)\geq Q_1(p)\phi(\tilde p)-b_1(\tilde p).
\end{equation}
Let
\[
q=\phi(p),\quad \tilde q=\phi(\tilde p),
\]
equation~\eqref{eq:optimality} becomes
\[
Q_1(\phi^{-1}(q))q-b_2(q)\geq Q_1(\phi^{-1}(q))\tilde q-b_2(\tilde q),
\]
which is equivalent to
\begin{equation}\label{eq:supporting hyperplane}
     b_2(\tilde q)\ge b_2(q)+Q_1(\phi^{-1}(q))(\tilde q-q)
\end{equation}
for all $\tilde q\in \phi((\alpha,1])=(0,1]$ and all $q\in \phi(P_1\cap (\alpha,1])$. Since $\phi$ is a homeomorphism between $(\alpha,1]$ and $(0,1]$, and $P_1\cap (\alpha,1]$ is dense in $(\alpha,1]$, $\phi(P_1\cap (\alpha,1])$ is dense in $(0,1]$. Equation \eqref{eq:supporting hyperplane} implies a supporting hyperplane at every point of the dense set $\phi(P_1\cap(\alpha,1])$, which suffices to establish the convexity by the following lemma.
\begin{lemma}\label{lm:convex}
Let $f:I\to\mathbb R$ be continuous. Suppose there is a dense set $D\subset I$ such that for every $y\in D$, there exists $m_y\in\mathbb R$ satisfying
\[
f(x)\ge f(y)+m_y(x-y)\qquad\forall x\in I.
\]
Then $f$ is convex.
\end{lemma}

\begin{proof}
Consider any $x<z\in I$. For any $t\in (0,1)$ and $y=t x+(1-t)z$, pick $y_n\in D,y_n\to y$, apply the inequality with $(x,y_n)$ and $(z,y_n)$ to obtain
    \[
    f(x)\ge f(y_n)+m_n(x-y_n),
\qquad
f(z)\ge f(y_n)+m_n(z-y_n).
\]
Multiply the first inequality by $(z-y_n)/(z-x)$ and the second by $(y_n-x)/(z-x)$. The coefficients are nonnegative and sum to one, and the terms involving $m_n$ cancel. Hence
\[
\frac{z-y_n}{z-x}f(x)+\frac{y_n-x}{z-x}f(z)\ge f(y_n).
\]
Taking $n\to\infty$ and using continuity of $f$, we obtain
\[
tf(x)+(1-t) f(z)\ge f(y).
\]
Thus $f$ is convex.
\end{proof}

By Lemma~\ref{lm:convex}, $b_2$ is convex on $(0,1]$. Continuity extends convexity to $[0,1]$. A symmetric argument implies
\begin{equation}\label{eq:b1}
    b_1(\tilde p)\ge b_1(p)+Q_2(\phi(p))(\tilde p-p)
\end{equation}
for all $p\in \phi^{-1}(P_2\cap (0,1])$ and $\tilde p\in (\alpha,1]$, which proves that $b_1$ is convex on $(\alpha,1]$.\footnote{The restriction of $\phi$ is a homeomorphism from $(\alpha,1]$ to $(0,1]$, and therefore $\phi^{-1}$ is a homeomorphism from $(0,1]$ to $(\alpha,1]$.} Since we also have $b_1(p)=b_1(\alpha)=0$ for all $p\in [0,\alpha]$ and that $b_1$ is weakly increasing on $[\alpha,1]$, the convexity can be extended to $[0,1]$. It then implies that $b_1,b_2$ are absolutely continuous, hence differentiable a.e.
\end{proof}

\begin{lemma}
\label{lem:abso-conti}
The function $\phi$ is locally absolutely continuous on $(\alpha,1]$, i.e., for any $\delta>0$, the function $\phi$ is absolutely continuous on $[\alpha+\delta,1]$.
\end{lemma}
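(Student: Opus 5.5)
The idea is to write $\phi$ on $[\alpha,1]$ as the composition $\phi=b_2^{-1}\circ b_1$, using the matching identity \eqref{eq:def phi}. Lemma~\ref{Lem:conv} already gives that $b_1$ is convex, hence absolutely continuous, on $[0,1]$. A Lipschitz function composed with an absolutely continuous function is absolutely continuous. So it suffices to show that $b_2^{-1}$ is Lipschitz on the relevant range of bids $[b_1(\alpha+\delta),\bar b]$. This range equals $b_2([q_0,1])$ with $q_0:=\phi(\alpha+\delta)$.

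\emph{Step 1: $q_0>0$.} Since $\phi$ is strictly increasing on $[\alpha,1]$ and $\phi(\alpha)=0$, we have $q_0=\phi(\alpha+\delta)>0$. Likewise $b_2(q_0)=b_1(\alpha+\delta)>0$, because $b_1$ is strictly positive on $(\alpha,1]$.

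\emph{Step 2: $b_2$ grows at least linearly on $[q_0,1]$.} By Lemma~\ref{Lem:conv}, $b_2$ is convex on $[0,1]$, and $b_2(0)=0$. Convexity makes difference quotients nondecreasing. Hence for any $q_0\le q<q'\le 1$,
\[
\frac{b_2(q')-b_2(q)}{q'-q}\;\ge\;\frac{b_2(q_0)-b_2(0)}{q_0-0}\;=\;\frac{b_2(q_0)}{q_0}=:m_\delta>0 .
\]
Therefore $b_2(q')-b_2(q)\ge m_\delta(q'-q)$ on $[q_0,1]$. It follows that the inverse $b_2^{-1}$ is Lipschitz on $[b_2(q_0),\bar b]$ with constant $1/m_\delta$.

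\emph{Step 3: compose.} For any finite disjoint family of intervals $(x_k,y_k)\subset[\alpha+\delta,1]$, we have
\[
\sum_k|\phi(y_k)-\phi(x_k)|\le m_\delta^{-1}\sum_k|b_1(y_k)-b_1(x_k)|.
\]
The absolute continuity of $b_1$ then yields the absolute continuity of $\phi$ on $[\alpha+\delta,1]$.

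The only delicate point is Step 2, the lower bound on the slope of $b_2$. Near $q=0$ the slope of $b_2$ may degenerate; for example, it vanishes when $Q_1(\phi^{-1}(q))\to 0$. This is exactly why the statement is only local, on $[\alpha+\delta,1]$ rather than on $[\alpha,1]$. The convexity from Lemma~\ref{Lem:conv}, anchored at $b_2(0)=0$, supplies the uniform positive slope bound once we stay a fixed distance $\delta$ away from $\alpha$.
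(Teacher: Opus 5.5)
Your proof is correct and follows essentially the paper's route: write $\phi=b_2^{-1}\circ b_1$ on $[\alpha,1]$ and show that $b_2^{-1}$ is Lipschitz on bids bounded away from zero. The differences are minor. You get the slope bound $b_2(q_0)/q_0$ from convexity of $b_2$ anchored at $b_2(0)=0$, and you use only the absolute continuity of $b_1$. The paper instead reads the explicit bounds $Q_1(\alpha+\delta)$ and $Q_2(1)$ off the supporting-hyperplane inequalities, and so obtains the slightly stronger conclusion that $\phi$ is Lipschitz on $[\alpha+\delta,1]$.
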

\begin{proof}
We prove $\phi$ is actually locally Lipschitz. In \eqref{eq:b1}, fix $x<y$ in $[\alpha,1]$, choose $y_n\in \phi^{-1}(P_2\cap(0,1])$ with $y_n\to y$ and $y_n>x$ for all sufficiently large $n$. Apply \eqref{eq:b1}\footnote{By continuity, this also holds for $x=\alpha$.} with $(p,\tilde p)=(y_n,x)$ to obtain
\[
b_1(y_n)-b_1(x)\le Q_2(\phi(y_n))(y_n-x)\le Q_2(1)(y_n-x).
\]
Taking $n\to\infty$ and using continuity of $b_1$, we get
\[
b_1(y)-b_1(x)\le Q_2(1)(y-x).
\]
Thus $b_1$ is Lipschitz with $L=Q_2(1)< \infty$. Also, for any $\delta>0$ and $x<y\in [\phi(\alpha+\delta),1]$ (note that $\phi(\alpha+\delta)>0$ and $Q_1(\alpha+\delta)>0$), similarly by \eqref{eq:supporting hyperplane} we obtain
\[
b_2(y)-b_2(x)\ge Q_1(\alpha+\delta)(y-x).
\]
So, $b_2^{-1}$ is Lipschitz on $[b_2(\phi(\alpha+\delta)),\bar{b}]$ with $L=\frac{1}{Q_1(\alpha+\delta)}<\infty$.

Since $\phi=b_2^{-1}\circ b_1$, it is Lipschitz with $L=\frac{Q_2(1)}{Q_1(\alpha+\delta)}$, hence absolutely continuous on $[\alpha+\delta,1]$.
\end{proof}
By the symmetric argument as Lemma~\ref{lem:abso-conti}, $\phi^{-1}$ is locally absolutely continuous on $(0,1]$. With these regularity properties of $b_i$ and $\phi$ in hand, we can differentiate the relevant identities a.e. Recall bidder 1's problem \eqref{maximization}
\begin{equation*}
    \max_{\tilde p\in [0,1]} Q_1(p)\phi(\tilde p)-b_1(\tilde p)=\max_{\tilde p\in [0,1]} Q_1(p)\phi(\tilde p)-b_2(\phi(\tilde p)).
\end{equation*}
For $p\in P_1\cap(\alpha,1]$, the choice $\tilde p=p$ is optimal. Since $b_2$ is convex, the optimality condition is equivalent to
\begin{equation}
\label{subdifferential 1}
    Q_1(p)\in \partial b_2(\phi(p)) \qquad\text{for }p\in P_1\cap(\alpha,1].
\end{equation}
A symmetric argument applied to bidder 2's optimality gives
\begin{equation}
\label{subdifferential 2}
    Q_2(\phi(p))\in \partial b_1(p) \qquad\text{for }\phi(p)\in P_2\cap(0,1].
\end{equation}
Let $D_i$ be the set where $b_i$ is differentiable. Since $b_i$ is convex, $\lambda(D_i^c)=0$. Since $\phi^{-1}$ is locally absolutely continuous, it maps Lebesgue-null sets to Lebesgue-null sets. Hence, on each interval $[\alpha+\delta,1]$,
\[
P_1^c,\quad \phi^{-1}(D_2^c),\quad D_1^c,\quad \phi^{-1}(P_2^c)
\]
are null sets. Therefore, for a.e. $p\in[\alpha+\delta,1]$,
\begin{equation}
\label{foc}
    Q_1(p)=b_2'(\phi(p)),
\qquad
Q_2(\phi(p))=b_1'(p).
\end{equation}

We introduce a useful lemma here, which is a standard result.
\begin{lemma}
\label{chain rule}
If $f$ is absolutely continuous and $g$ is absolutely continuous and monotone, then $f\circ g$ is absolutely continuous, hence differentiable a.e. Moreover, $(f\circ g)'=(f'\circ g)g'$ a.e.
\end{lemma}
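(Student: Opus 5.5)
We treat the two assertions separately and reduce the chain rule to a change-of-variables identity for monotone absolutely continuous maps. Without loss of generality let $g:[a,b]\to[c,d]$ be nondecreasing and absolutely continuous, and let $f$ be absolutely continuous on $[c,d]\supseteq g([a,b])$. In the nonincreasing case we replace $g$ by $-g$ and $f$ by $x\mapsto f(-x)$.

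\emph{Absolute continuity of $f\circ g$.} Fix $\varepsilon>0$ and let $\eta>0$ be the modulus of absolute continuity of $f$ for $\varepsilon$. Let $\delta>0$ be the modulus of absolute continuity of $g$ for $\eta$. Take finitely many nonoverlapping intervals $(a_k,b_k)\subset[a,b]$ with $\sum_k(b_k-a_k)<\delta$. Because $g$ is monotone, the image intervals $(g(a_k),g(b_k))$ are again nonoverlapping, and their total length $\sum_k\bigl(g(b_k)-g(a_k)\bigr)$ is below $\eta$. Hence
\[
\sum_k\bigl|f(g(b_k))-f(g(a_k))\bigr|<\varepsilon .
\]
Monotonicity is used exactly here: it prevents the images from overlapping. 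Therefore $f\circ g$ is absolutely continuous and differentiable a.e.

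\emph{Chain rule.} We adopt the convention that $(f'\circ g)\,g'$ equals $0$ wherever $g'=0$, even if $f'(g(t))$ is undefined. The key step is the change-of-variables formula: for every Lebesgue integrable $h$ on $[c,d]$ and every $x\in[a,b]$,
\[
\int_{g(a)}^{g(x)}h(s)\,ds=\int_a^x h(g(t))\,g'(t)\,dt .
\]
We prove it in four stages.
\begin{itemize}
\item For $h=\mathbf 1_{(u,v]}$, it reduces to $\int_{\{t\le x:\,u<g(t)\le v\}}g'\,dt=\lambda\bigl((u,v]\cap(g(a),g(x)]\bigr)$. This holds because $\{t:u<g(t)\le v\}$ is an interval and $g$ is the integral of $g'$.
\item A monotone class (Dynkin) argument extends the formula to indicators of Borel sets.
\item Monotone convergence then extends it to nonnegative Borel $h$, and linearity to integrable Borel $h$.
\item For a general Lebesgue integrable $h$, pick a Borel $\tilde h$ with $h=\tilde h$ off a Borel null set $N$. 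Applying the Borel case to $\mathbf 1_N$ gives $\int_{g^{-1}(N)}g'\,dt=\lambda(N)=0$. So $g'=0$ a.e.\ on $g^{-1}(N)$, and the integrands $h\circ g\,g'$ and $\tilde h\circ g\,g'$ agree a.e. This also shows that $h\circ g\,g'$ is measurable under the stated convention.
\end{itemize}

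Apply the formula with $h=f'$, which is integrable and satisfies $f(y)-f(c')=\int_{c'}^y f'$ by absolute continuity of $f$. This yields
\[
f(g(x))-f(g(a))=\int_a^x f'(g(t))\,g'(t)\,dt\qquad\forall x\in[a,b].
\]
The right side is the indefinite integral of an integrable function. By the Lebesgue differentiation theorem, $(f\circ g)'(x)=f'(g(x))\,g'(x)$ for a.e.\ $x$.

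The main obstacle is the null-set bookkeeping in the last stage of the change-of-variables proof. $f'$ is defined only off a null set $N$, and $g^{-1}(N)$ need not be null, since $g$ can be constant on intervals. We handle this through the identity $\int_{g^{-1}(N)}g'=\lambda(N)=0$ together with the convention $(f'\circ g)g'=0$ where $g'=0$. Everything else is routine.
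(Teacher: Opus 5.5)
Your proof is correct. Its first half is exactly the paper's argument: monotonicity keeps the image intervals $(g(a_k),g(b_k))$ nonoverlapping, so the two $\varepsilon$--$\delta$ definitions compose.

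For the chain rule the paper gives no argument and simply cites \citet{leoni2017first}, Corollary 3.50. You instead derive it from the change-of-variables identity $\int_{g(a)}^{g(x)}h=\int_a^x (h\circ g)\,g'$ for monotone absolutely continuous $g$. You prove the identity first for indicators of intervals $(u,v]$, which uses monotonicity a second time, since preimages of intervals are intervals. You then extend it by a $\pi$--$\lambda$ argument and monotone convergence. The observation $\int_{g^{-1}(N)}g'=\lambda(N)=0$ lets you pass from Borel to Lebesgue integrable $h$. Finally you take $h=f'$ and apply Lebesgue differentiation.

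Your route buys self-containedness and precision. It makes explicit the convention $(f'\circ g)\,g'=0$ wherever $g'=0$, which the lemma as stated silently needs. Without it, if $g$ is constant on an interval at a point where $f$ is not differentiable, the right-hand side is undefined on a set of positive measure. It also deals head-on with the fact that $g^{-1}(N)$ need not be null. As a by-product, it writes $f\circ g$ as an indefinite integral of an integrable function, so your separate absolute-continuity step becomes redundant. The paper's citation buys brevity, at the cost of leaving both of these subtleties to the reference.
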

\begin{proof}
Without loss, assume $g$ is increasing. Monotonicity of $g$ ensures that whenever $(a_1,b_1),\ldots,(a_n,b_n)$ are disjoint, the intervals $(g(a_1),g(b_1)),\ldots,(g(a_n),g(b_n))$ are disjoint. Absolute continuity of $f\circ g$ then comes from the definition. The proof of the chain rule can be found in \citet{leoni2017first} Corollary 3.50.
\end{proof}

Apply Lemma~\ref{chain rule} to differentiate $b_1(p)=b_2(\phi(p))$. Since $\phi$ is increasing, the chain rule is valid a.e.\ on $[\alpha+\delta,1]$:
\[
b_1'(p)=\frac{d}{dp}b_2(\phi(p))=b_2'(\phi(p))\phi'(p).
\]
Combine it with \eqref{foc} to obtain
\begin{equation}
\label{ode}
    Q_2(\phi(p))=Q_1(p)\phi'(p) \quad \text{a.e. } p\in [\alpha+\delta,1].
\end{equation}
Rewrite \eqref{ode} as
\begin{equation}
\label{ode rearrange}
    \frac{\phi'(p)}{Q_2(\phi(p))}=\frac{1}{Q_1(p)}, \quad \text{a.e. } p\in [\alpha+\delta,1].
\end{equation}
Recall the definition of $S_i(p)$ from \eqref{def S}. The function $S_1$ is absolutely continuous on $[\alpha+\delta,1]$ because $\frac{1}{Q_1}$ is bounded on $[\alpha+\delta,1]$. Similarly, $S_2$ is absolutely continuous on $[\phi(\alpha+\delta),1]$. Hence, by Lemma~\ref{chain rule}, $S_2(\phi(p))$ is also absolutely continuous. Notice the LHS of \eqref{ode rearrange} equals $-(S_2(\phi(p)))'$ a.e., while the RHS equals $-S'_1(p)$ a.e. So,
\begin{equation}
\label{eq:local S(phi(p))}
    (S_2(\phi(p)))'=S'_1(p) \quad \text{a.e. } p\in [\alpha+\delta,1].
\end{equation}
Since \eqref{eq:local S(phi(p))} holds for any $\delta>0$, it follows that
\[
S_2(\phi(p))-S_1(p)= \text{const on } (\alpha,1].
\]
At $p=1$, $\phi(p)=1$ (by the common bid support property in Lemma~\ref{lm:basic}), $S_2(\phi(1))-S_1(1)=0-0=0$. Thus,
\begin{equation}
    \label{eq:S(phi)}
    S_2(\phi(p))=S_1(p)  \quad \forall p \in (\alpha,1].
\end{equation}
Recall that $\phi(\alpha)=0$ and that $\phi$ is continuous on $[\alpha,1]$. Take limit $p\to \alpha^+$ in \eqref{eq:S(phi)}. If $\alpha>0$, then the RHS has a finite limit $S_1(\alpha)$, which implies the LHS has a finite limit and equals the RHS, i.e. $S_2(0)=S_1(\alpha)<S_1(0)$, which is consistent with the labeling convention \eqref{relabel}. In this case, we can write $\alpha=S_1^{-1}(S_2(0))$. If $\alpha=0$, then taking $p\to0^+$ in \eqref{eq:S(phi)} gives one of two cases: either both limits are finite and $S_2(0)=S_1(0)$, or both limits diverge to $+\infty$. In either case, the labeling convention \eqref{relabel} is satisfied.

If instead bidder 2 had a strictly positive mass $\hat{\alpha}>0$ at bid zero, then the symmetric version of the above argument would imply
\[
S_1(0)=S_2(\hat \alpha).
\]
This contradicts $S_2(\hat \alpha)<S_2(0)\leq S_1(0)$, where the last inequality comes from \eqref{relabel}.

Since $S_2$ is continuous and strictly decreasing, $\phi$ is uniquely pinned down on $(\alpha,1]$ by
\[
\phi(p)=S_2^{-1}(S_1(p)).
\]
With $\phi$ determined, integrating \eqref{foc} recovers the bid function:
\[
b_1(p) = 
\begin{cases}
    0 & p\in[0,\alpha]\\
    \int_{\alpha}^p Q_2(\phi(s))ds &p\in (\alpha,1],
\end{cases}
\qquad
b_2(p) = \int_0^p Q_1(\phi^{-1}(s)) \, ds.
\]
To verify this is indeed an equilibrium, first note that since $Q_i$ and $\phi$ are increasing, $b_1$ and $b_2$ are convex. By construction, the left and right derivatives of $b_2$ at $\phi(p)$ are the left and right limits of $Q_1(\phi^{-1}(\cdot))$ at $\phi(p)$, which bracket $Q_1(p)$. Similarly, the left and right derivatives of $b_1$ at $p$ are the left and right limits of $Q_2(\phi(\cdot))$ at $p$, which bracket $Q_2(\phi(p))$. Hence the subdifferential optimality conditions \eqref{subdifferential 1} and \eqref{subdifferential 2} are satisfied for all (instead of a.e.\ ) $p\in(\alpha,1]$ for bidder 1 and for all $\phi(p)\in(0,1]$ for bidder 2.

For bidder 1 with quantile $p=\alpha$ who is going to bid 0, deviating to a positive bid corresponds to choosing some bidder 2 quantile $p>0$. Since $b_2$ is convex, it is enough to check the right derivative at zero:
\[
\lim_{p\to 0^+}b_2'(p)=\lim_{p\to 0^+} Q_1(\phi^{-1}(p))\geq Q_1(\alpha).
\]
So it is optimal for bidder 1 with $p=\alpha$ and value $Q_1(\alpha)$ to bid 0. Since $Q_1(p)\le Q_1(\alpha)$ for every $p<\alpha$, bidding zero is also optimal for bidder 1 at every such quantile.

At $p=0$, bidder 2 may fail to have an attained best response when bidder 1 has a zero-bid mass. This does not affect equilibrium under the present definition, because $\{0\}$ is a Lebesgue-null set of quantiles. Hence, $(b_1,b_2)$ is a BNE.

\subsection{Proof of Corollary~\ref{coro:efficient}}\label{Appendix:coro efficient}
Sufficiency is standard. For necessity, by Theorem~\ref{main theo} bidder 1 with quantile $p_1>\alpha$ wins with probability one when bidder 2's quantile satisfies $p_2<\phi(p_1)$, and loses with probability one when $p_2>\phi(p_1)$. Efficiency therefore requires
    \begin{equation}\label{eq:efficient}
        Q_2(\phi(p)) = Q_1(p)
    \end{equation}
except possibly at the (at most countably many) jump points of $Q_2$. Combining this with the equilibrium identity \eqref{ode} gives
    \[
        Q_2(\phi(p)) = Q_1(p)\phi'(p) \quad \text{a.e.\ } p\in(\alpha,1],
    \]
and substituting \eqref{eq:efficient} yields $\phi'(p)=1$ a.e. Since $\phi$ is absolutely continuous on $(\alpha,1]$ by Lemma~\ref{lem:abso-conti}, integration gives $\phi(p)=p+C$ for some constant $C$, and the boundary condition $\phi(1)=1$ forces $C=0$. Hence $\phi(p)=p$ on $(\alpha,1]$; since $\phi$ maps $(\alpha,1]$ onto $(0,1]$, this requires $\alpha=0$.

By Theorem~\ref{main theo}, $\phi=S_2^{-1}\circ S_1$, so $\phi=\mathrm{id}$ on $(0,1]$ yields $S_1(p)=S_2(p)$ for all $p\in[0,1]$. Differentiating gives $Q_1(p)=Q_2(p)$ a.e., and left continuity of $Q_i$ promotes this to equality everywhere on $(0,1]$. Therefore $F_1=F_2$.

\subsection{Proof of Theorem~\ref{theo:bid fosd}}\label{Appendix: bid fosd}
By definition of the quantile function $Q_i$, we have $\hat F_i\succeq_{FOSD}  F_i$ if and only if $\hat Q_i(p)\geq Q_i(p),\forall p\in (0,1)$.

Recall the equilibrium bid function \eqref{bid function}. First consider the case that $i=1$. To show the FOSD relationship of bid distributions, it is equivalent to show $\hat b_1(p)\geq b_1(p)$ for all $p\in [0,1]$, as the bid function is weakly increasing in $p$.

It is helpful to write the $\phi$ function in \eqref{bid function} more explicitly:
    \[
     b_1(p) = 
    \begin{cases}
        0 & p\in[0,\alpha]\\
        \displaystyle\int_{\alpha}^p Q_2(S_2^{-1}(S_1(s)))\,ds & p\in (\alpha,1],
    \end{cases}
    \]
where
    \[
    \alpha =
    \begin{cases}
    0 & \text{if } S_1(0)=S_2(0) \\
    S_1^{-1}(S_2(0)) & \text{if } S_1(0)>S_2(0).
    \end{cases}
    \]
Define $K_i(s)\coloneq Q_i(S_i^{-1}(s))$ for all $s\in [0,S_i(0)]$. Note $K_2$ is only a function of $Q_2$, which is fixed when we change $Q_1$. The function $S_2$ is strictly decreasing, and so is $S_2^{-1}$. Also, $Q_2$ is weakly increasing, hence $K_2$ is weakly decreasing. Since $\hat Q_1(p)\geq Q_1(p),\forall p\in (0,1)$, we have $\hat S_1(p)\leq S_1(p), \forall p\in [0,1]$. Combining the facts, the integrand satisfies
    \[
    Q_2(S_2^{-1}(S_1(s)))=K_2(S_1(s))\leq K_2(\hat S_1(s)),\quad \forall s\in(\alpha,1],
    \]
so it increases weakly when $Q_1$ changes to $\hat Q_1$. Next consider the change in $\alpha$. Since $\alpha$ is the solution to $S_1(\alpha)=S_2(0)$, and $S_i$ is strictly decreasing, we have $\hat \alpha\leq \alpha$, where $\hat \alpha$ satisfies $\hat S_1 (\hat \alpha)=S_2(0)$. So, both the integrand and the range of integration become weakly larger, hence the integral weakly increases, i.e. $\hat b_1(p)\geq b_1(p)$.

For the case of $i=2$, the logic is the same, as $b_2(p)=\int_0^p K_1(S_2(s))ds$ and the range does not change.

If instead the shift from $Q_i$ to $\hat Q_i$ reverses the labeling, i.e. $S_i(0)>S_j(0)>\hat S_i(0)$, so that bidder $i$ is labeled bidder 1 under $Q_i$ and bidder 2 under $\hat Q_i$, then by the same logic the integrand changes from $K_j(S_i(s))$ to $K_j(\hat S_i(s))$, and the range changes from $(\alpha,p)$ to $(0,p)$. So the integral weakly increases, hence $\hat b_i(p)\geq b_i(p)$.

\subsection{Proof of Lemma~\ref{lem:revenue derivatives}}\label{proof: marginal revenue}
\begin{proof}
Fix $\epsilon_0\in(0,\bar\epsilon)$. Since $Q-\epsilon_0H\geq 0$, we have $H\leq C Q$ for $C=1/\epsilon_0$. This bound will be used in the dominated-convergence arguments below.

First consider the right derivative, i.e.\ locally strengthening one bidder. Let the perturbed bidder have quantile function $Q_\epsilon=Q+\epsilon H$, and let
    \[
        S(p)=\int_p^1\frac{dt}{Q(t)},
        \qquad
        S_\epsilon(p)=\int_p^1\frac{dt}{Q_\epsilon(t)}.
    \]
Also define
    \[
        T_\epsilon(p):=\frac{S(p)-S_\epsilon(p)}{\epsilon}
        =\frac{1}{\epsilon}\left(\int_p^1\frac{dt}{Q(t)}-\int_p^1\frac{dt}{Q_\epsilon(t)}\right)=
        \int_p^1\frac{H(t)}{Q(t)Q_\epsilon(t)}\,dt.
    \]
Then $T_\epsilon(p)\to T_H(p)$ for every $p\in(0,1]$.

For $\epsilon>0$, the perturbed bidder is the stronger bidder. We label this bidder as bidder 2, which is consistent with the labeling \eqref{relabel}. Denote by $\phi_\epsilon$ the equilibrium quantile matching function. Let $\eta_\epsilon=\phi_\epsilon^{-1}$. Then $\eta_\epsilon(p)\geq p$ for all $p$ and
    \[
        S(\eta_\epsilon(p))=S_\epsilon(p).
    \]
By \eqref{eq:revenue}, we obtain
    \[
        R_H(\epsilon)
        =
        \int_0^1
        (2-p-\eta_\epsilon(p))Q(\eta_\epsilon(p))\,dp.
    \]
At $\epsilon=0$, we have $\eta_0(p)=p$, hence
    \[
        R_H(0)=\int_0^1 2(1-p)Q(p)\,dp.
    \]
Therefore,
    \[
        \frac{R_H(\epsilon)-R_H(0)}{\epsilon}
        =
        A_\epsilon-B_\epsilon,
    \]
where
    \[
        A_\epsilon
        =
        \frac{2}{\epsilon}
        \int_0^1(1-p)
        \left[Q(\eta_\epsilon(p))-Q(p)\right]dp,
    \]
and
    \[
        B_\epsilon
        =
        \frac{1}{\epsilon}
        \int_0^1
        (\eta_\epsilon(p)-p)Q(\eta_\epsilon(p))\,dp.
    \]
We compute the limits of $A_\epsilon$ and $B_\epsilon$ separately.

    \begin{claim}\label{claim:general-right-B}
        \[
            B_\epsilon
            \xrightarrow{\epsilon\to0^+}
            \int_0^1T_H(p)Q(p)^2\,dp.
        \]
    \end{claim}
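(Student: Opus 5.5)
Let $d_\epsilon(p):=(\eta_\epsilon(p)-p)/\epsilon$, so that $B_\epsilon=\int_0^1 d_\epsilon(p)\,Q(\eta_\epsilon(p))\,dp$. The claim then reduces to two facts: $d_\epsilon(p)\to T_H(p)Q(p)$ for a.e.\ $p$, and an integrable dominating bound that lets us pass to the limit under the integral.

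\textbf{Step 1: pointwise limit.} We read $\eta_\epsilon(p)$ off the matching identity $S(\eta_\epsilon(p))=S_\epsilon(p)$. Since $S(p)-S_\epsilon(p)=\epsilon T_\epsilon(p)$, this gives
\[
S(p)-S(\eta_\epsilon(p))=\epsilon T_\epsilon(p),
\qquad\text{that is,}\qquad
\int_p^{\eta_\epsilon(p)}\frac{dt}{Q(t)}=\epsilon T_\epsilon(p).
\]
Fix $p\in(0,1)$ that is a continuity point of $Q$; these form a set of full measure.
\begin{itemize}
\item As $\epsilon\to0^+$, $T_\epsilon(p)$ stays bounded, so $\eta_\epsilon(p)\to p$.
\item By continuity of $Q$ at $p$, the left-hand side equals $(\eta_\epsilon(p)-p)/Q(p)\,(1+o(1))$.
\item Hence $d_\epsilon(p)\to Q(p)T_H(p)$.
\item Also $Q(\eta_\epsilon(p))\to Q(p^+)=Q(p)$.
\end{itemize}
So the integrand converges a.e.\ to $T_H(p)Q(p)^2$.

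\textbf{Step 2: domination (the main obstacle).} Because $Q$ is nondecreasing and $\eta_\epsilon(p)\ge p$, we have $1/Q(t)\ge 1/Q(\eta_\epsilon(p))$ on $[p,\eta_\epsilon(p)]$. The identity above then yields
\[
(\eta_\epsilon(p)-p)\,Q(\eta_\epsilon(p))^{-1}\le \epsilon T_\epsilon(p),
\]
so the integrand is bounded by
\[
d_\epsilon(p)\,Q(\eta_\epsilon(p))\le T_\epsilon(p)\,Q(\eta_\epsilon(p))^2\le T_\epsilon(p)\,Q(1)^2 .
\]
Next we bound $T_\epsilon$. Using $H\le CQ$ and $Q_\epsilon\ge Q$,
\[
T_\epsilon(p)\le \int_p^1\frac{C}{Q(t)}\,dt=C\,S(p).
\]
The difficulty is that $S(p)$ may diverge as $p\to0$, so we must check that $S$ is integrable on $(0,1)$. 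By Fubini,
\[
\int_0^1 S(p)\,dp=\int_0^1\frac{t}{Q(t)}\,dt .
\]
This is finite provided $t/Q(t)$ is integrable near $0$, which is not automatic. If it fails, the fallback is to avoid bounding $Q(\eta_\epsilon(p))$ by $Q(1)$ and instead keep the sharper bound
\[
T_\epsilon(p)\,Q(\eta_\epsilon(p))^2\le C\,S(p)\,Q(\eta_\epsilon(p))^2 .
\]
We then use the observation that $S(p)\,Q(p)\le \int_p^1 Q(p)/Q(t)\,dt\le 1$, and split off the small region where $\eta_\epsilon(p)$ exceeds a fixed level, whose measure shrinks with $\epsilon$. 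Concretely, we pick $\delta>0$ and bound the integrand separately on the two pieces:
\begin{itemize}
\item On $[\delta,1]$, dominated convergence applies directly, since $S$ is bounded there.
\item On $(0,\delta)$, we show the contribution is uniformly small in $\epsilon$. From $\eta_\epsilon(p)\le S^{-1}\bigl(S(p)(1-C\epsilon)\bigr)$, combined with $SQ\le 1$, we get the integrand $\le C\,Q(\eta_\epsilon(p))^2 S(p)$, and we aim to show $Q(\eta_\epsilon(p))\le Q(\eta_{\epsilon_0}(\delta))$ there.
\end{itemize}
Finally, we let $\delta\to0$, using that the limit integrand $T_HQ^2\le CSQ^2\le CQ$ is integrable. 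This gives $B_\epsilon\to\int_0^1 T_H Q^2\,dp$.
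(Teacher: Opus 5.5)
Your Step 1 is fine; restricting to continuity points of $Q$ is, if anything, cleaner than the paper's sandwich argument. The gap is in Step 2. You correctly note that $C\,S(p)\,Q(1)^2$ need not be integrable. For example, $Q(t)=t^2$ with $H=Q$ gives $S(p)=1/p-1$.

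The fallback does not repair this as written. On $(0,\delta)$ you still bound the integrand by $C\,S(p)\,Q(\eta_\epsilon(p))^2$, and then replace $Q(\eta_\epsilon(p))$ by the constant $Q(\eta_{\epsilon_0}(\delta))$. That leaves $C\,Q(\eta_{\epsilon_0}(\delta))^2\int_0^\delta S(p)\,dp=+\infty$, which is exactly the case the fallback was meant to handle. Your inequality $S(p)Q(p)\le 1$ at the point $p$ does not help either, because the integrand contains $Q(\eta_\epsilon(p))$, not $Q(p)$.

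The missing step is to move $S$ to the same point as $Q$. You already have $S(\eta_\epsilon(p))=S_\epsilon(p)\ge S(p)/(1+C\epsilon)$; this is what your bound $\eta_\epsilon(p)\le S^{-1}(S(p)(1-C\epsilon))$ encodes. It gives $S(p)\le(1+C\epsilon)\,S(\eta_\epsilon(p))$. Since $1/Q$ is decreasing, $S(q)Q(q)\le 1-q$ for every $q$, and you should apply this at $q=\eta_\epsilon(p)$ rather than at $p$. Then, for $\epsilon$ small,
\[
T_\epsilon(p)\,Q(\eta_\epsilon(p))^2\le C(1+C\epsilon)\,S(\eta_\epsilon(p))\,Q(\eta_\epsilon(p))^2\le C(1+C\epsilon)\,(1-\eta_\epsilon(p))\,Q(\eta_\epsilon(p))\le 2C\,Q(1).
\]
This is a uniform constant bound on all of $(0,1)$. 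Dominated convergence therefore applies directly, and the split at $\delta$ is unnecessary. This is precisely the paper's argument. The ingredients are all in your proposal; you just never combine them at the point $\eta_\epsilon(p)$, which is what cancels the possible blow-up of $S(p)$ near $0$.
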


    \begin{proof}
Since $S(\eta_\epsilon(p))=S_\epsilon(p)$, we have
        \[
            S(p)-S(\eta_\epsilon(p))=S(p)-S_\epsilon(p)
            =\epsilon T_\epsilon(p).
        \]
Equivalently,
        \[
            \int_p^{\eta_\epsilon(p)}\frac{dt}{Q(t)}
            =
            \epsilon T_\epsilon(p).
        \]
Since $\frac{1}{Q}$ is weakly decreasing,
        \[
            \frac{\eta_\epsilon(p)-p}{Q(\eta_\epsilon(p))}
            \leq
            \int_p^{\eta_\epsilon(p)}\frac{dt}{Q(t)}
            \leq
            \frac{\eta_\epsilon(p)-p}{Q(p)}.
        \]
Rearranging gives
        \[
            T_\epsilon(p)Q(p)Q(\eta_\epsilon(p))
            \leq
            \frac{(\eta_\epsilon(p)-p)Q(\eta_\epsilon(p))}{\epsilon}
            \leq
            T_\epsilon(p)Q(\eta_\epsilon(p))^2.
        \]
As $\epsilon\to0^+$, $\eta_\epsilon(p)\to p^+$. Since $Q$ has at most countably many discontinuity points,
        \[
            Q(\eta_\epsilon(p))\to Q(p)
            \qquad \text{for a.e. }p.
        \]
Hence
        \[
            \frac{(\eta_\epsilon(p)-p)Q(\eta_\epsilon(p))}{\epsilon}
            \to
            T_H(p)Q(p)^2
            \qquad \text{for a.e. }p.
        \]

It remains to justify the use of Lebesgue's dominated convergence theorem. Since $H\leq CQ$, we have
        \[
            T_\epsilon(p)\leq C S(p).
        \]
Also,
        \[
            S_\epsilon(p)
            =
            \int_p^1\frac{dt}{Q(t)+\epsilon H(t)}
            \geq
            \frac{1}{1+C\epsilon}S(p).
        \]
Because $S(\eta_\epsilon(p))=S_\epsilon(p)$, this implies
        \[
            S(p)\leq(1+C\epsilon)S(\eta_\epsilon(p)).
        \]
Therefore, for $\epsilon$ small enough,
        \[
        \begin{aligned}
             \frac{(\eta_\epsilon(p)-p)Q(\eta_\epsilon(p))}{\epsilon} &\leq 
            T_\epsilon(p)Q(\eta_\epsilon(p))^2
            \leq
            C(1+C\epsilon)S(\eta_\epsilon(p))Q(\eta_\epsilon(p))^2\\
           & \leq
            C(1+C\epsilon)(1-\eta_\epsilon(p))Q(\eta_\epsilon(p))
            \leq
            2C Q(1).
        \end{aligned}
        \]
The integrand is uniformly bounded, hence the dominated convergence theorem proves the claim.
    \end{proof}

    \begin{claim}\label{claim:general-right-A}
        \[
            A_\epsilon
            \xrightarrow{\epsilon\to0^+}
            2\int_{(0,1)}(1-p)T_H(p)Q(p)\,dQ(p).
        \]
    \end{claim}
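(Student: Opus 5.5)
The plan is to rewrite the difference $Q(\eta_\epsilon(p))-Q(p)$ as a Lebesgue--Stieltjes integral against $dQ$, apply Fubini, and then take the limit under the $dQ$-integral by dominated convergence. Since $Q$ is nondecreasing and left-continuous, its Stieltjes measure satisfies $Q(b)-Q(a)=\int_{[a,b)}dQ(t)$ for $0<a\le b\le 1$.

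\emph{Step 1 (Fubini).} Write
\[
\int_0^1(1-p)\bigl[Q(\eta_\epsilon(p))-Q(p)\bigr]dp=\int_{(0,1)}\Bigl(\int_{\{p:\,p\le t<\eta_\epsilon(p)\}}(1-p)\,dp\Bigr)dQ(t).
\]
Because $\eta_\epsilon$ is the continuous increasing inverse of $\phi_\epsilon$ on $[\alpha_\epsilon,1]$, the condition $t<\eta_\epsilon(p)$ is equivalent to $p>\phi_\epsilon(t)$, where $\phi_\epsilon(t)=0$ for $t\le\alpha_\epsilon$. The inner integral is therefore $\int_{\phi_\epsilon(t)}^t(1-p)\,dp$, and
\[
A_\epsilon=2\int_{(0,1)}\frac1\epsilon\int_{\phi_\epsilon(t)}^t(1-p)\,dp\;dQ(t).
\]

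\emph{Step 2 (pointwise limit).} Fix $t\in(\alpha_\epsilon,1)$ and substitute $p=\phi_\epsilon(t)$ into $S(p)-S(\eta_\epsilon(p))=\epsilon T_\epsilon(p)$ to get $\int_{\phi_\epsilon(t)}^t\frac{ds}{Q(s)}=\epsilon T_\epsilon(\phi_\epsilon(t))$. Monotonicity of $1/Q$ then gives
\[
T_\epsilon(\phi_\epsilon(t))\,Q(\phi_\epsilon(t))\le\frac{t-\phi_\epsilon(t)}{\epsilon}\le T_\epsilon(\phi_\epsilon(t))\,Q(t).
\]
As $\epsilon\to0^+$ we have $\phi_\epsilon(t)\to t^-$. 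Left-continuity of $Q$ gives $Q(\phi_\epsilon(t))\to Q(t)$ at \emph{every} $t$, jump points included; this is exactly why the limit involves $dQ$ with its atoms. Also $T_\epsilon\to T_H$ locally uniformly on $(0,1]$ by dominated convergence in the $t$-integral, and $T_H$ is continuous, so $T_\epsilon(\phi_\epsilon(t))\to T_H(t)$. Since $\frac1\epsilon\int_{\phi_\epsilon(t)}^t(1-p)\,dp$ is squeezed between $(1-t)\frac{t-\phi_\epsilon(t)}{\epsilon}$ and $(1-\phi_\epsilon(t))\frac{t-\phi_\epsilon(t)}{\epsilon}$, the inner expression converges to $(1-t)T_H(t)Q(t)$ for every $t\in(0,1)$. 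Because $\alpha_\epsilon\to0$, every fixed $t>0$ eventually lies in $(\alpha_\epsilon,1)$.

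\emph{Step 3 (domination), the main obstacle.} The naive bound $T_\epsilon\le CS$ leaves a factor $Q(t)/Q(\phi_\epsilon(t))$, which is uncontrolled near $0$. The fix is to use the matching identity in reverse. From $S(t)=S_\epsilon(\phi_\epsilon(t))\ge S(\phi_\epsilon(t))/(1+C\epsilon)$ we get $S(\phi_\epsilon(t))\le(1+C\epsilon)S(t)\le(1+C\epsilon)(1-t)/Q(t)$. Hence
\[
\frac1\epsilon\int_{\phi_\epsilon(t)}^t(1-p)\,dp\le T_\epsilon(\phi_\epsilon(t))\,Q(t)\le C\,S(\phi_\epsilon(t))\,Q(t)\le C(1+C\epsilon)\le 2C,
\]
which is a constant and thus integrable against the finite measure $dQ$ on $(0,1)$.

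For $t\le\alpha_\epsilon$ (only relevant when $S(0)<\infty$), the inner term is at most $\alpha_\epsilon/\epsilon$. From $\int_0^{\alpha_\epsilon}\frac{ds}{Q(s)}=S(0)-S_\epsilon(0)\le\epsilon C S(0)$ and $1/Q\ge1/Q(1)$ we get $\alpha_\epsilon/\epsilon\le CS(0)Q(1)$, a bounded quantity. Its contribution is thus at most $CS(0)Q(1)\,dQ((0,\alpha_\epsilon])\to0$, so the same dominating constant covers this region as well.

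Dominated convergence then yields $A_\epsilon\to2\int_{(0,1)}(1-t)T_H(t)Q(t)\,dQ(t)$, which is the statement of the claim.
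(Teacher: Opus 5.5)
Your argument is correct and is essentially the paper's proof. It uses the same Fubini rewriting against $dQ$ (your $\phi_\epsilon(t)$ is exactly the paper's $p_\epsilon(t)=S_\epsilon^{-1}(S(t))$), a pointwise limit from left-continuity of $Q$, and dominated convergence with a constant bound.

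The only variation is in which form of the matching identity you use. You read it at $p=\phi_\epsilon(t)$, getting $\int_{\phi_\epsilon(t)}^t ds/Q(s)=\epsilon T_\epsilon(\phi_\epsilon(t))$. This forces you to need locally uniform convergence of $T_\epsilon$ (true, though it follows from the direct estimate $|T_\epsilon-T_H|\le\epsilon C^2S(\delta)$ on $[\delta,1]$ rather than from dominated convergence) and the reverse bound $S(\phi_\epsilon(t))\le(1+C\epsilon)S(t)$. The paper instead uses $\int_{p_\epsilon(t)}^t du/Q_\epsilon(u)=\epsilon T_\epsilon(t)$ at the fixed point $t$, which gives $T_\epsilon(t)Q_\epsilon(t)\le 2C$ at once. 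Your separate treatment of $t\le\alpha_\epsilon$ makes explicit a case the paper passes over.
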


    \begin{proof}
Let $\mu=dQ$ be the Stieltjes measure generated by the non-decreasing, left-continuous function $Q$, with
        \[
           \mu([p_1,p_2))=Q(p_2)-Q(p_1), \quad 0\le p_1\le p_2\le1.
        \]
By Fubini's theorem,
        \[
        \begin{aligned}
            \int_0^1(1-p)
            \left[Q(\eta_\epsilon(p))-Q(p)\right]dp
            &=
            \int_0^1(1-p)\mu([p,\eta_\epsilon(p)))\,dp \\
            &=
            \int_{t\in (0,1)}dQ(t)
            \int_0^1(1-p)
            \mathbf 1\{p\leq t<\eta_\epsilon(p)\}\,dp.
        \end{aligned}
        \]
We can replace $\int_0^1$ by $\int_{t\in (0,1)}$ because they are equal as the inner integral is always zero when $t=0$ or $t=1$ (because $\eta_\epsilon(p)\leq 1$ by definition), and the current format makes it easier to deal with the boundary of the Lebesgue-Stieltjes integral. The condition $t<\eta_\epsilon(p)$ is equivalent to
        \[
            S(t)>S(\eta_\epsilon(p))=S_\epsilon(p).
        \]
Define
        \[
            p_\epsilon(t)=
            \begin{cases}
                S_\epsilon^{-1}(S(t)) & \text{if } S(t)<S_\epsilon(0),\\
                0 & \text{if } S(t)\geq S_\epsilon(0).
            \end{cases}
        \]
Then
        \[
            p\leq t<\eta_\epsilon(p)
            \iff
            p\in(p_\epsilon(t),t],
        \]
up to endpoints that do not affect the inner Lebesgue integral. Hence
        \[
            A_\epsilon
            =
            2\int_{(0,1)}
            \frac{1}{\epsilon}
            \left(
                \int_{p_\epsilon(t)}^t(1-p)\,dp
            \right)dQ(t).
        \]

For each fixed $t\in(0,1)$, when $\epsilon$ is small enough, $p_\epsilon(t)>0$ and
        \[
            S_\epsilon(p_\epsilon(t))=S(t).
        \]
Thus
        \[
            \int_{p_\epsilon(t)}^t\frac{du}{Q_\epsilon(u)}
            = S_\epsilon(p_\epsilon(t))-S_\epsilon(t)
            =S(t)-S_\epsilon(t)
            =
            \epsilon T_\epsilon(t).
        \]
Since $p_\epsilon(t)\to t^-$ as $\epsilon\to 0^+$ for all $t\in (0,1)$, the local average of $1/Q_\epsilon$ over $(p_\epsilon(t),t)$ converges to $1/Q(t)$ by left-continuity of $Q$. Hence
        \[
            \frac{t-p_\epsilon(t)}{\epsilon}
            \to
            T_H(t)Q(t).
        \]
Therefore,
        \[
            \frac{1}{\epsilon}
            \int_{p_\epsilon(t)}^t(1-p)\,dp
            \to
            (1-t)T_H(t)Q(t).
        \]

We again use the dominated convergence theorem. Since
        \[
            \int_{p_\epsilon(t)}^t\frac{du}{Q_\epsilon(u)}
            =
            \epsilon T_\epsilon(t),
        \]
monotonicity of $Q_\epsilon$ gives
        \[
            \frac{t-p_\epsilon(t)}{Q_\epsilon(t)}
            \leq
            \epsilon T_\epsilon(t).
        \]
Hence, for $\epsilon$ small enough,
        \[
            \frac{1}{\epsilon}
            \int_{p_\epsilon(t)}^t(1-p)\,dp
            \leq
            \frac{t-p_\epsilon(t)}{\epsilon}
            \leq
            T_\epsilon(t)Q_\epsilon(t)
            \leq
            C(1+C\epsilon)S(t)Q(t)
            \leq
            2C.
        \]
Since $dQ$ is a finite measure on $(0,1)$, dominated convergence gives
        \begin{equation*}
            A_\epsilon
            \to
            2\int_{(0,1)}(1-t)T_H(t)Q(t)\,dQ(t).
            \qedhere
        \end{equation*}
    \end{proof}

Combining Claims~\ref{claim:general-right-B} and \ref{claim:general-right-A}, we obtain
    \begin{equation}\label{eq:general-right-raw}
        (R_H)'_+(0)
        =
        2\int_{(0,1)}(1-p)T_H(p)Q(p)\,dQ(p)
        -
        \int_0^1T_H(p)Q(p)^2\,dp.
    \end{equation}

    \begin{claim}\label{claim:general-right-stieltjes}
        \[
            (R_H)'_+(0)=I_H-B_H-J_H.
        \]
    \end{claim}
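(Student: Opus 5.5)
The plan is to start from the raw expression \eqref{eq:general-right-raw} and convert the Stieltjes term $2\int_{(0,1)}(1-p)T_H(p)Q(p)\,dQ(p)$ by integration by parts. The aim is to cancel $\int_0^1 T_H Q^2\,dp$ exactly and leave the level term $I_H$, a boundary term at $0$, and a sum over jumps.

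\textbf{Step 1: replace $2Q\,dQ$ by $d(Q^2)$.} Let $\mu=dQ$, as in Claim~\ref{claim:general-right-A}, and let $\nu=d(Q^2)$ be the Stieltjes measure of the non-decreasing, left-continuous function $Q^2$. On the continuous part of $\mu$ we have $d(Q^2)=2Q\,dQ$. At a jump point $p$ we have
\[
\Delta(Q^2)(p)=\bigl(Q(p^+)+Q(p)\bigr)\Delta Q(p)=2Q(p)\Delta Q(p)+(\Delta Q(p))^2 .
\]
Hence, as measures on $(0,1)$,
\[
2Q\,dQ=d(Q^2)-\sum_{p}(\Delta Q(p))^2\delta_p .
\]
Integrating $(1-p)T_H(p)$ against this identity gives
\[
2\int_{(0,1)}(1-p)T_H Q\,dQ=\int_{(0,1)}(1-p)T_H\,d(Q^2)-J_H .
\]
Every term here is nonnegative. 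The left side is finite because $T_H\le CS$ and $S(p)Q(p)\le 1-p$, the latter since $1/Q$ is decreasing. So $J_H<\infty$, and the rearrangement is legitimate.

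\textbf{Step 2: integrate by parts on $[\delta,1)$.} The function $g(p)=(1-p)T_H(p)$ is absolutely continuous on $[\delta,1]$ with derivative
\[
g'(p)=-T_H(p)-(1-p)\frac{H(p)}{Q(p)^2}\quad\text{a.e.}
\]
Integration by parts for a Lebesgue--Stieltjes integrator against a continuous absolutely continuous integrand gives
\[
\int_{[\delta,1)}g\,d(Q^2)=g(1^-)Q(1)^2-g(\delta)Q(\delta)^2-\int_\delta^1 Q(p)^2 g'(p)\,dp .
\]
Because $g$ is continuous, it does not matter whether $Q^2$ or its right limits appear in the last integral, since they differ on a countable set. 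The term at $1$ vanishes: $T_H(1)=0$ and $Q(1)<\infty$. Substituting $g'$ yields
\[
\int_{[\delta,1)}g\,d(Q^2)=-T_H(\delta)(1-\delta)Q(\delta)^2+\int_\delta^1 T_HQ^2\,dp+\int_\delta^1(1-p)H\,dp .
\]

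\textbf{Step 3: let $\delta\to0^+$.} This is the step I expect to need the most care, because $T_H$ may diverge at $0$. The left side converges by monotone convergence to the finite value from Step 1. The two integrals on the right converge to $\int_0^1T_HQ^2\,dp$ and $I_H$, both finite; this uses $T_HQ^2\le C(1-p)Q$ and $H\le CQ$. Consequently $\lim_{\delta\to0^+}T_H(\delta)Q(\delta)^2$ exists, and it equals $B_H$.

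Combining with Step 1 and inserting into \eqref{eq:general-right-raw}, the $\int T_HQ^2$ terms cancel, giving
\[
(R_H)'_+(0)=I_H-B_H-J_H .
\]

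Finally, the two stated forms of $B_H$ follow from the bound $T_H(p)Q(p)^2\le CS(p)Q(p)^2\le C(1-p)Q(p)$:
\begin{itemize}
\item If $Q(0)=0$, this bound tends to $0$, so $B_H=0$.
\item If $Q(0)>0$, then $T_H(0)\le C\int_0^1 dt/Q(t)<\infty$, so $T_H$ is continuous at $0$. Combined with $Q(p)\to Q(0)$ as $p\to0^+$, this gives $B_H=T_H(0)Q(0)^2$.
\end{itemize}
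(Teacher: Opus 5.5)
Your proof is correct and follows essentially the paper's route. Both arguments split off the squared jumps via $d(Q^2)=2Q\,dQ+\sum_p(\Delta Q(p))^2\delta_p$ and then integrate by parts using $dT_H=-\frac{H}{Q^2}\,dp$. The only difference is the grouping: you integrate $g=(1-p)T_H$ against $d(Q^2)$, while the paper integrates $T_H$ against $dM$ with $M=(1-p)Q^2$.

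Your truncation at $\delta$ adds something the paper takes for granted: it shows that the limit defining $B_H$ exists, and that $\int_{(0,1)}g\,d(Q^2)$ and hence $J_H$ are finite. One small correction on ordering: finiteness of $J_H$ does not follow from finiteness of the left side in Step~1. It follows from the uniform bound $\int_{[\delta,1)}g\,d(Q^2)\le\int_0^1T_HQ^2\,dp+I_H$, which comes out of Step~2.
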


    \begin{proof}
Let
        \[
            M(p)=(1-p)Q(p)^2.
        \]
The Stieltjes differential of $Q^2$ is not always $2Q\,dQ$. At a jump point $p$,
        \[
            \Delta(Q(p)^2)
            =
            Q(p^+)^2-Q(p)^2
            =
            2Q(p)\Delta Q(p)+(\Delta Q(p))^2.
        \]
Hence
        \[
        \begin{aligned}
            \int_{(0,1)}T_H(p)\,dM(p)
            &=
            -\int_0^1T_H(p)Q(p)^2\,dp \\
            &\quad
            +2\int_{(0,1)}(1-p)T_H(p)Q(p)\,dQ(p) \\
            &\quad
            +\sum_{p\in(0,1)}
            (1-p)T_H(p)(\Delta Q(p))^2.
        \end{aligned}
        \]
Comparing this identity with \eqref{eq:general-right-raw} gives
        \[
            (R_H)'_+(0)
            =
            \int_{(0,1)}T_H(p)\,dM(p)-J_H.
        \]
Since
        \[
            dT_H(p)=-\frac{H(p)}{Q(p)^2}\,dp,
        \]
integration by parts yields
        \[
        \begin{aligned}
            \int_{(0,1)}T_H(p)\,dM(p)
            &=
            \lim_{p\to1^-}T_H(p)(1-p)Q(p)^2
            -
            \lim_{p\to0^+}T_H(p)(1-p)Q(p)^2 \\
            &\quad
            -
            \int_0^1(1-p)Q(p)^2\,dT_H(p) \\
            &=
            -B_H+\int_0^1(1-p)H(p)\,dp \\
            &=
            -B_H+I_H.
        \end{aligned}
        \]
Therefore,
        \begin{equation*}
            (R_H)'_+(0)=I_H-B_H-J_H.
            \qedhere
        \end{equation*}
    \end{proof}

We next consider the left derivative. Let $\epsilon>0$, and define
    \[
        \hat Q_\epsilon=Q-\epsilon H,
        \qquad
        \hat S_\epsilon(p)
        =
        \int_p^1\frac{dt}{\hat Q_\epsilon(t)},
    \]
and
    \[
        \hat T_\epsilon(p)
        :=
        \frac{\hat S_\epsilon(p)-S(p)}{\epsilon}
        =
        \int_p^1
        \frac{H(t)}{Q(t)\hat Q_\epsilon(t)}\,dt.
    \]
Then $\hat T_\epsilon(p)\to T_H(p)$ for every $p\in(0,1]$.

Since $\hat Q_\epsilon\leq Q$, the perturbed bidder is now the weaker bidder. Let the perturbed bidder be bidder 1, which is consistent with the labeling convention \eqref{relabel}. By symmetry of the two bidders,
    \[
        R_H(-\epsilon)
        =
        \mathrm{Rev}(Q,Q-\epsilon H)
        =
        \mathrm{Rev}(\hat Q_\epsilon,Q).
    \]
Let $\hat\eta_\epsilon$ be the inverse matching function from the stronger bidder's quantile to the weaker bidder's quantile. It satisfies
    \[
        \hat S_\epsilon(\hat\eta_\epsilon(p))=S(p).
    \]
By \eqref{eq:revenue},
    \[
        R_H(-\epsilon)
        =
        \int_0^1
        (2-p-\hat\eta_\epsilon(p))
        \hat Q_\epsilon(\hat\eta_\epsilon(p))\,dp.
    \]
Therefore,
    \[
        \frac{R_H(-\epsilon)-R_H(0)}{\epsilon}
        =
        \hat A_\epsilon-\hat B_\epsilon-\hat C_\epsilon,
    \]
where
    \[
        \hat A_\epsilon
        =
        \frac{2}{\epsilon}
        \int_0^1(1-p)
        \left[Q(\hat\eta_\epsilon(p))-Q(p)\right]dp,
    \]
    \[
        \hat B_\epsilon
        =
        \frac{1}{\epsilon}
        \int_0^1
        (\hat\eta_\epsilon(p)-p)Q(\hat\eta_\epsilon(p))\,dp,
    \]
and
    \[
        \hat C_\epsilon
        =
        \int_0^1
        (2-p-\hat\eta_\epsilon(p))
        H(\hat\eta_\epsilon(p))\,dp.
    \]

    \begin{claim}\label{claim:general-left-AB}
        \[
            \hat B_\epsilon
            \xrightarrow{\epsilon\to0^+}
            \int_0^1T_H(p)Q(p)^2\,dp,
        \]
and
        \[
            \hat A_\epsilon
            \xrightarrow{\epsilon\to0^+}
            2\int_{(0,1)}(1-p)T_H(p)Q(p)\,dQ(p).
        \]
    \end{claim}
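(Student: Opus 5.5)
The plan is to mirror the proofs of Claims~\ref{claim:general-right-B} and \ref{claim:general-right-A}. The only changes are that the perturbed bidder is now the weaker one, and that the matching identity is $\hat S_\epsilon(\hat\eta_\epsilon(p))=S(p)$ instead of $S(\eta_\epsilon(p))=S_\epsilon(p)$.

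\emph{Preliminary bounds.} Fix $\epsilon_0\in(0,\bar\epsilon)$ and $C=1/\epsilon_0$, so that $H\le CQ$. For $\epsilon<1/(2C)$ this gives $\hat Q_\epsilon\ge(1-C\epsilon)Q\ge Q/2$. Hence
\[
\hat T_\epsilon(p)\le \frac{C}{1-C\epsilon}S(p)\le 2C\,S(p),
\]
and dominated convergence (integrand at most $2C/Q$, integrable on $[p,1]$) shows that $\hat T_\epsilon\to T_H$ uniformly on each $[p_0,1]$ with $p_0>0$. Since $\hat S_\epsilon\ge S$ and $S$ is strictly decreasing, the identity $\hat S_\epsilon(\hat\eta_\epsilon(p))=S(p)$ gives $\hat\eta_\epsilon(p)\ge p$. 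It also gives the key relation
\[
\int_p^{\hat\eta_\epsilon(p)}\frac{dt}{Q(t)}=S(p)-S(\hat\eta_\epsilon(p))=\hat S_\epsilon(\hat\eta_\epsilon(p))-S(\hat\eta_\epsilon(p))=\epsilon\,\hat T_\epsilon(\hat\eta_\epsilon(p)).
\]
Bounding the left side below by $(\hat\eta_\epsilon(p)-p)/Q(1)$ and the right side by $2C\epsilon S(p)<\infty$ for $p>0$ shows that $\hat\eta_\epsilon(p)\to p^+$.

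\emph{Limit of $\hat B_\epsilon$.} Since $1/Q$ is decreasing, the key relation yields the sandwich
\[
\hat T_\epsilon(\hat\eta_\epsilon(p))\,Q(p)\,Q(\hat\eta_\epsilon(p))\le\frac{(\hat\eta_\epsilon(p)-p)\,Q(\hat\eta_\epsilon(p))}{\epsilon}\le \hat T_\epsilon(\hat\eta_\epsilon(p))\,Q(\hat\eta_\epsilon(p))^2 .
\]
Uniform convergence of $\hat T_\epsilon$ and continuity of $T_H$ give $\hat T_\epsilon(\hat\eta_\epsilon(p))\to T_H(p)$. Moreover $Q(\hat\eta_\epsilon(p))\to Q(p)$ at every continuity point of $Q$, hence a.e. So the middle term converges to $T_H(p)Q(p)^2$ a.e. For domination, use $S(x)Q(x)\le 1-x$, which holds because $Q$ is increasing:
\[
\hat T_\epsilon(\hat\eta_\epsilon)\,Q(\hat\eta_\epsilon)^2\le 2C\,S(\hat\eta_\epsilon)\,Q(\hat\eta_\epsilon)^2\le 2C\,Q(1).
\]
This bound is simpler than in Claim~\ref{claim:general-right-B}, because $\hat T_\epsilon$ is already evaluated at $\hat\eta_\epsilon(p)$. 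Dominated convergence then gives the first limit.

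\emph{Limit of $\hat A_\epsilon$.} As in Claim~\ref{claim:general-right-A}, Fubini gives
\[
\hat A_\epsilon=\frac{2}{\epsilon}\int_{(0,1)}dQ(t)\int_0^1(1-p)\mathbf 1\{p\le t<\hat\eta_\epsilon(p)\}\,dp .
\]
Since $\hat S_\epsilon$ is strictly decreasing, $t<\hat\eta_\epsilon(p)$ holds iff $S(p)<\hat S_\epsilon(t)$, i.e.\ iff $p>\hat p_\epsilon(t)$. Here $\hat p_\epsilon(t):=S^{-1}(\hat S_\epsilon(t))$ when $\hat S_\epsilon(t)<S(0)$, and $\hat p_\epsilon(t):=0$ otherwise. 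For fixed $t\in(0,1)$, $\hat S_\epsilon(t)\to S(t)<S(0)$, so eventually $\hat p_\epsilon(t)>0$. Then
\[
\int_{\hat p_\epsilon(t)}^t\frac{du}{Q(u)}=\hat S_\epsilon(t)-S(t)=\epsilon\hat T_\epsilon(t),
\]
which forces $\hat p_\epsilon(t)\to t^-$. By left-continuity of $Q$, the average of $1/Q$ over $(\hat p_\epsilon(t),t)$ tends to $1/Q(t)$, so $(t-\hat p_\epsilon(t))/\epsilon\to T_H(t)Q(t)$. The inner integral divided by $\epsilon$ therefore tends to $(1-t)T_H(t)Q(t)$. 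For domination, $(t-\hat p_\epsilon(t))/Q(t)\le\epsilon\hat T_\epsilon(t)$ gives a bound $2C\,S(t)Q(t)\le 2C$, and $dQ$ is finite on $(0,1)$.

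\emph{Main obstacle.} I expect the delicate step to be the pointwise convergence $\hat p_\epsilon(t)\to t^-$ and the local-average limit near $t\to0$ when $Q(0)>0$ or $S(0)<\infty$. In the weakening direction the weaker bidder has $\alpha_\epsilon>0$, so the truncation at $\hat p_\epsilon=0$ is genuinely active there. The argument above handles this pointwise for each fixed $t>0$, with the uniform bound $2C$ covering the rest.
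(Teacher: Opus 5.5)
Your proposal is correct and follows the paper's proof. Both use a monotonicity sandwich plus dominated convergence (with the bound $S(x)Q(x)\le 1-x$) for $\hat B_\epsilon$, and the same Fubini reduction with $\hat p_\epsilon(t)=S^{-1}(\hat S_\epsilon(t))$ for $\hat A_\epsilon$. The one variation is in the $\hat B_\epsilon$ sandwich: you use $\int_p^{\hat\eta_\epsilon(p)}dt/Q(t)=\epsilon\hat T_\epsilon(\hat\eta_\epsilon(p))$, whereas the paper uses $\int_p^{\hat\eta_\epsilon(p)}dt/\hat Q_\epsilon(t)=\epsilon\hat T_\epsilon(p)$. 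Your choice makes the domination bound immediate, at the cost of needing uniform convergence of $\hat T_\epsilon$ on $[p_0,1]$, which you correctly supply; you also justify $\hat\eta_\epsilon(p)\to p^+$, which the paper only asserts.
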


    \begin{proof}
The proof follows the same structure as the proof of Claims~\ref{claim:general-right-B} and \ref{claim:general-right-A}. The only change is that the perturbed quantile is now the weaker quantile.

First,
        \[
            \hat S_\epsilon(p)
            -
            \hat S_\epsilon(\hat\eta_\epsilon(p))
            =
            \hat S_\epsilon(p)-S(p)
            =
            \epsilon\hat T_\epsilon(p).
        \]
Hence
        \[
            \int_p^{\hat\eta_\epsilon(p)}
            \frac{dt}{\hat Q_\epsilon(t)}
            =
            \epsilon\hat T_\epsilon(p).
        \]
Since $\hat Q_\epsilon$ is monotone,
        \[
            \frac{\hat\eta_\epsilon(p)-p}
            {\hat Q_\epsilon(\hat\eta_\epsilon(p))}
            \leq
            \int_p^{\hat\eta_\epsilon(p)}
            \frac{dt}{\hat Q_\epsilon(t)}
            \leq
            \frac{\hat\eta_\epsilon(p)-p}
            {\hat Q_\epsilon(p)}.
        \]
This implies
        \[
            \hat T_\epsilon(p)\hat Q_\epsilon(p)
            Q(\hat\eta_\epsilon(p))
            \leq
            \frac{(\hat\eta_\epsilon(p)-p)
            Q(\hat\eta_\epsilon(p))}{\epsilon}
            \leq
            \hat T_\epsilon(p)
            \hat Q_\epsilon(\hat\eta_\epsilon(p))
            Q(\hat\eta_\epsilon(p)).
        \]
Since $\hat\eta_\epsilon(p)\to p^+$, the middle term converges to $T_H(p)Q(p)^2$ for a.e. $p$. The dominated convergence argument is the same as before. Indeed, for $\epsilon$ small enough,
        \[
            \hat T_\epsilon(p)
            \leq
            \frac{C}{1-C\epsilon}S(p),
        \]
where $C$ is the constant such that $H\leq CQ$, and
        \[
            S(p)=\hat S_\epsilon(\hat\eta_\epsilon(p))
            \leq
            \frac{1}{1-C\epsilon}S(\hat\eta_\epsilon(p)).
        \]
Hence the upper bound is uniformly dominated by a constant multiple of
        \[
            S(\hat\eta_\epsilon(p))Q(\hat\eta_\epsilon(p))^2,
        \]
which is bounded by $Q(1)$. Therefore,
        \[
            \hat B_\epsilon
            \to
            \int_0^1T_H(p)Q(p)^2\,dp.
        \]

For $\hat A_\epsilon$, use the same Stieltjes measure $\mu=dQ$. Then
        \[
            Q(\hat\eta_\epsilon(p))-Q(p)
            =
            \mu([p,\hat\eta_\epsilon(p))).
        \]
By Fubini,
        \[
            \hat A_\epsilon
            =
            2\int_{(0,1)}
            \frac{1}{\epsilon}
            \left(
                \int_{\hat p_\epsilon(t)}^t(1-p)\,dp
            \right)dQ(t),
        \]
where
        \[
            \hat p_\epsilon(t)=
            \begin{cases}
                S^{-1}(\hat S_\epsilon(t))
                & \text{if } \hat S_\epsilon(t)<S(0),\\
                0
                & \text{if } \hat S_\epsilon(t)\geq S(0).
            \end{cases}
        \]
For every fixed $t\in(0,1)$, when $\epsilon$ is small enough, $\hat p_\epsilon(t)>0$ and
        \[
            S(\hat p_\epsilon(t))=\hat S_\epsilon(t).
        \]
Thus
        \[
            \int_{\hat p_\epsilon(t)}^t\frac{du}{Q(u)}
            =
            \hat S_\epsilon(t)-S(t)
            =
            \epsilon\hat T_\epsilon(t).
        \]
Since $\hat p_\epsilon(t)\to t^-$, the local average of $1/Q$ over $(\hat p_\epsilon(t),t)$ converges to $1/Q(t)$. Hence
        \[
            \frac{t-\hat p_\epsilon(t)}{\epsilon}
            \to
            T_H(t)Q(t).
        \]
Therefore,
        \[
            \frac{1}{\epsilon}
            \int_{\hat p_\epsilon(t)}^t(1-p)\,dp
            \to
            (1-t)T_H(t)Q(t).
        \]
The same bound as above gives domination:
        \[
            \frac{1}{\epsilon}
            \int_{\hat p_\epsilon(t)}^t(1-p)\,dp
            \leq
            \frac{t-\hat p_\epsilon(t)}{\epsilon}
            \leq
            \hat T_\epsilon(t)Q(t)
            \leq
            \frac{C}{1-C\epsilon}S(t)Q(t)
            \leq
            2C.
        \]
Dominated convergence with respect to the finite measure $dQ$ proves the claimed limit for $\hat A_\epsilon$.
    \end{proof}

    \begin{claim}\label{claim:general-left-C}
        \[
            \hat C_\epsilon
            \xrightarrow{\epsilon\to0^+}
            2I_H.
        \]
    \end{claim}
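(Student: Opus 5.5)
The plan is to show pointwise convergence of the integrand and then apply dominated convergence. Recall
\[
\hat C_\epsilon=\int_0^1(2-p-\hat\eta_\epsilon(p))H(\hat\eta_\epsilon(p))\,dp ,
\]
and the target is $2I_H=\int_0^1 2(1-p)H(p)\,dp$, i.e.\ the same integrand with $\hat\eta_\epsilon(p)$ replaced by $p$.

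\emph{Step 1: convergence of the matching map.} From the proof of Claim~\ref{claim:general-left-AB} we already have
\[
\int_p^{\hat\eta_\epsilon(p)}\frac{dt}{\hat Q_\epsilon(t)}=\epsilon\hat T_\epsilon(p),
\qquad
\hat\eta_\epsilon(p)\ge p .
\]
Since $\hat T_\epsilon(p)\le \frac{C}{1-C\epsilon}S(p)<\infty$ for every $p\in(0,1]$, and $1/\hat Q_\epsilon\ge 1/Q(1)>0$, this gives
\[
\hat\eta_\epsilon(p)-p\le Q(1)\,\epsilon\hat T_\epsilon(p)\to0 .
\]
Hence $\hat\eta_\epsilon(p)\to p^+$ for every $p\in(0,1)$. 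Consequently $2-p-\hat\eta_\epsilon(p)\to 2(1-p)$ pointwise.

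\emph{Step 2: convergence of $H\circ\hat\eta_\epsilon$.} The footnote on admissibility gives
\[
|H(q)-H(p)|\le C\bigl(Q(q)-Q(p)\bigr),
\]
so $H$ inherits continuity from $Q$ at every continuity point of $Q$. Since $Q$ is monotone, it has at most countably many discontinuities, so its continuity points form a set of full Lebesgue measure. On that set, $\hat\eta_\epsilon(p)\to p$ implies $H(\hat\eta_\epsilon(p))\to H(p)$. This is exactly the same a.e.\ argument used for $Q(\hat\eta_\epsilon(p))\to Q(p)$ in Claim~\ref{claim:general-left-AB}. Together with Step 1, the integrand of $\hat C_\epsilon$ converges to $2(1-p)H(p)$ for a.e.\ $p$.

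\emph{Step 3: domination.} Using $0\le H\le CQ\le CQ(1)$ and $0\le 2-p-\hat\eta_\epsilon(p)\le 2$, the integrand is bounded by the constant $2CQ(1)$. Lebesgue's dominated convergence theorem then gives $\hat C_\epsilon\to 2I_H$.

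The only delicate point is Step 2. There I rely on the two-sided Lipschitz-type bound from admissibility to transfer a.e.\ continuity from $Q$ to $H$, since $H$ itself is not assumed continuous.
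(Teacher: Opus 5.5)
Your proposal is correct and follows the paper's proof: you show a.e. pointwise convergence of the integrand and then apply dominated convergence with the same bound $2CQ(1)$. There are two small variations. You get a.e. continuity of $H$ from the relative Lipschitz bound $|H(q)-H(p)|\le C(Q(q)-Q(p))$, which places $H$'s discontinuities among those of $Q$, whereas the paper writes $H=((Q+\epsilon_1H)-Q)/\epsilon_1$ and uses bounded variation. You also verify $\hat\eta_\epsilon(p)\to p$ explicitly, which the paper only asserts.
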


    \begin{proof}
Since for a fixed small $\epsilon_1>0$, $H=\frac{(Q+\epsilon_1 H)-Q}{\epsilon_1}$ is a difference of two bounded non-decreasing functions, $H$ has bounded variation and therefore has at most countably many discontinuities. Since $\hat\eta_\epsilon(p)\to p^+$, we have
        \[
            H(\hat\eta_\epsilon(p))\to H(p)
            \qquad \text{for a.e. }p.
        \]
Also,
        \[
            2-p-\hat\eta_\epsilon(p)\to 2(1-p).
        \]
The integrand is dominated by $2C Q(1)$, since $H\leq CQ$. Hence dominated convergence gives
        \[
            \hat C_\epsilon
            \to
            \int_0^1 2(1-p)H(p)\,dp
            =
            2I_H.
        \]
    \end{proof}

By Claim~\ref{claim:general-left-AB}, the limits of $\hat A_\epsilon$ and $\hat B_\epsilon$ combine exactly as in the right-derivative calculation:
    \[
        \lim_{\epsilon\to0^+}
        \left(\hat A_\epsilon-\hat B_\epsilon\right)
        =
        I_H-B_H-J_H.
    \]
The difference from the right derivative is the additional term $\hat C_\epsilon$, which appears because the perturbed bidder is now the weaker bidder and hence its quantile function enters directly in the revenue integrand. Therefore,
    \[
        \lim_{\epsilon\to0^+}
        \frac{R_H(-\epsilon)-R_H(0)}{\epsilon}
        =
        I_H-B_H-J_H-2I_H
        =
        -I_H-B_H-J_H.
    \]
Equivalently,
    \[
        (R_H)'_-(0)
        =
        \lim_{\epsilon\to0^+}
        \frac{R_H(-\epsilon)-R_H(0)}{-\epsilon}
        =
        I_H+B_H+J_H.
    \]

It remains only to verify the expression for $B_H$. Since $H\leq CQ$, we have
    \[
        T_H(p)\leq C S(p).
    \]
If $Q(0)=0$, then
    \[
        T_H(p)Q(p)^2
        \leq
        C S(p)Q(p)^2
        \leq
        C(1-p)Q(p)
        \to 0
        \qquad \text{as }p\to0^+.
    \]
Hence $B_H=0$. If $Q(0)>0$, then $T_H(0)<\infty$ and $T_H(p)\to T_H(0)$ as $p\to0^+$. Therefore,
    \[
        B_H=T_H(0)Q(0)^2.
    \]
This completes the proof.
\end{proof}

\subsection{Proof of Proposition~\ref{prop:global-weakening}}\label{Appendix:global weaken}
\begin{proof}
We label the bidder with quantile function $W$ as bidder 1 and the bidder with quantile function $Q$ as bidder 2, which is consistent with \eqref{relabel}.

Define
    \[
        S_Q(p)=\int_p^1\frac{dt}{Q(t)},
        \qquad
        S_W(p)=\int_p^1\frac{dt}{W(t)}.
    \]
Since $W\leq Q$, we have $S_W\geq S_Q$. Let $\eta$ be the inverse matching function from bidder 2's quantile to bidder 1's quantile, i.e. $\int_p^1\frac{dt}{Q(t)}=\int_{\eta(p)}^1\frac{dt}{W(t)}$. By \eqref{eq:revenue}, we have
    \begin{equation}\label{eq:revenue W Q}
        \mathrm{Rev}(W,Q)=
        \int_0^1 (2-p-\eta(p))W(\eta(p))\,dp.
    \end{equation}
At the symmetric benchmark,
    \begin{equation}\label{eq:revenue Q Q}
        \mathrm{Rev}(Q,Q)
        =
        2\int_0^1(1-p)Q(p)\,dp.
    \end{equation}
We next rewrite these expressions in survival quantiles. Making the change of variable $u=1-t$ in $\int_{\eta(p)}^1\frac{dt}{W(t)}=\int_p^1\frac{dt}{Q(t)}$ and then setting $r=1-p$, we obtain
    \[
    \int_0^{1-\eta(1-r)}\frac{du}{W(1-u)}=\int_0^r \frac{du}{Q(1-u)}
    \]
Define
    \begin{equation}\label{eq:hly}
         h(u)=Q(1-u),
        \quad
        \ell(u)=W(1-u),
        \quad
        y(r)=1-\eta(1-r).
    \end{equation}
Then $h$ and $\ell$ are weakly decreasing, and $\ell\leq h$. The matching equation becomes
    \[
        \int_0^{y(r)}\frac{du}{\ell(u)}
        =
        \int_0^r\frac{du}{h(u)}.
    \]
Since $\ell\leq h$, the left-hand integral grows weakly faster. Hence $y(r)\leq r$. Making the change of variable $p=1-r$ in \eqref{eq:revenue W Q} and \eqref{eq:revenue Q Q}, we obtain
    \begin{equation}\label{eq:rev WQ}
        \mathrm{Rev}(W,Q)
        =
        \int_0^1(r+y(r))\ell(y(r))\,dr,
    \end{equation}
and
    \begin{equation}\label{eq:rev QQ}
        \mathrm{Rev}(Q,Q)
        =
        2\int_0^1 r h(r)\,dr.
    \end{equation}
We now introduce a common time parametrization $t$, which allows us to divide the difference of the two expected revenues into two terms that are both nonnegative. Denote
    \[
        T:=\int_0^1\frac{du}{h(u)}=S_Q(0).
    \]
The value $T$ may be $+\infty$. For each $t<T$, define $r(t)$ and $y(t)$ by\footnote{In \eqref{eq:hly}, $y$ is a function of $r$. With a slight abuse of notation, we write $y(t)\coloneq y(r(t))$.}
    \begin{equation}\label{eq:time parametrize}
         \int_0^{r(t)}\frac{du}{h(u)}=t,
        \qquad
        \int_0^{y(t)}\frac{du}{\ell(u)}=t.
    \end{equation}
These functions are well-defined. Indeed, the first integral is continuous and strictly increasing in $r$ on $[0,1)$, with range $[0,T)$. The second integral has range containing $[0,T)$, because $\ell\leq h$. If $T<\infty$, define $r(T)=1$ and $y(T)=\lim_{t\uparrow T}y(t)$. Since $\ell\leq h$, we have $y(t)\leq r(t)$ for every $t<T$. Since both $h$ and $\ell$ are bounded from above, $1/h$ and $1/\ell$ are bounded from below. Hence $r(t)$ and $y(t)$ are absolutely continuous. Differentiating \eqref{eq:time parametrize} with respect to $t$, we obtain
    \[
    \frac{r'(t)}{h(r(t))}=1,\quad \frac{y'(t)}{\ell(y(t))}=1
    \]
for a.e. $t<T$. Define $a(t)=h(r(t)),b(t)=\ell(y(t))$. Then, for a.e. $t<T$, $r'(t)=a(t), y'(t)=b(t)$.

Through a change of variable $r=r(t)$, \eqref{eq:rev WQ} becomes
    \[
        \mathrm{Rev}(W,Q)
        =
        \int_0^T (r(t)+y(t))a(t)b(t)\,dt.
    \]
Similarly
    \[
        \mathrm{Rev}(Q,Q)
        =
        2\int_0^T r(t)a(t)^2\,dt.
    \]
If $T=\infty$, these integrals are understood as improper integrals over $[0,\tau]$, followed by $\tau\uparrow\infty$. Since bidders' values are bounded, expected revenue is bounded, hence the improper integrals always converge to finite numbers.\footnote{Because $y(t)\leq r(t)$ and $h$ is weakly decreasing, $h\geq \ell$ does not imply $a\geq b$: it is possible that $a(t)<b(t)$ for some $t$. Further calculation is therefore needed to compare $\mathrm{Rev}(Q,Q)$ and $\mathrm{Rev}(W,Q)$.}

We prove the comparison on every compact interval $[0,\tau]$ with $\tau<T$, and then let $\tau\uparrow T$.\footnote{If $T<\infty$, $\tau$ in the following steps can be directly replaced by $T$.} Define
    \[
    \begin{aligned}
        D(\tau)
        &\coloneq 
        2\int_0^{\tau}r(t) a(t)^2\,dt
        -
        \int_0^{\tau}(r(t)+y(t))ab\,dt  \\
        &=
        D_1(\tau)+D_2(\tau),
    \end{aligned}
    \]
where
    \[
        D_1(\tau)
        =
        \int_0^{\tau}r(t) a(t)^2\,dt
        -
        \int_0^{\tau}y(t) b(t)^2\,dt,
    \]
and
    \[
        D_2(\tau)
        =
        \int_0^{\tau}
        \left[
            r(t) a(t)^2+y(t) b(t)^2-(r(t)+y(t))a(t)b(t)
        \right]dt.
    \]

    \begin{claim}\label{claim:global-D1}
For every $\tau<T$,
        \[
            D_1(\tau)\geq0.
        \]
    \end{claim}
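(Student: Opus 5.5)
The plan is to undo the time parametrization and recognize each integral in $D_1(\tau)$ as an ordinary integral in the original survival-quantile variable, after which the comparison follows from $\ell\le h$ and $y(t)\le r(t)$. The key identity is that $a(t)=r'(t)$ and $b(t)=y'(t)$ for a.e.\ $t<T$. Therefore
\[
r(t)a(t)^2=r(t)\,h(r(t))\,r'(t),\qquad y(t)b(t)^2=y(t)\,\ell(y(t))\,y'(t)\quad\text{a.e.}
\]

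First, I would justify the change of variables on $[0,\tau]$ with $\tau<T$. The maps $t\mapsto r(t)$ and $t\mapsto y(t)$ are strictly increasing and absolutely continuous, with $r(0)=y(0)=0$; this was established just before the claim. The functions $u\mapsto u\,h(u)$ and $u\mapsto u\,\ell(u)$ are bounded and Borel measurable, since they are monotone functions times $u$ and $h,\ell$ are bounded. The standard change-of-variables formula for an absolutely continuous monotone substitution (the same circle of results as Lemma~\ref{chain rule}, e.g.\ \citet{leoni2017first}) then gives
\[
\int_0^\tau r(t)a(t)^2\,dt=\int_0^{r(\tau)}u\,h(u)\,du,\qquad \int_0^\tau y(t)b(t)^2\,dt=\int_0^{y(\tau)}u\,\ell(u)\,du .
\]

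Second, I would compare the two integrals in two steps:
\[
\int_0^{y(\tau)}u\,\ell(u)\,du\le\int_0^{y(\tau)}u\,h(u)\,du\le\int_0^{r(\tau)}u\,h(u)\,du .
\]
The first inequality uses $\ell\le h$ pointwise, which follows from $W\le Q$. The second uses $y(\tau)\le r(\tau)$, shown earlier, together with the nonnegativity of $u\,h(u)$. Subtracting yields $D_1(\tau)\ge0$.

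The only delicate step is the change of variables. It requires $r$ and $y$ to be absolutely continuous on the compact interval $[0,\tau]$, and in particular it requires $y(t)$ to be well defined for all $t\le\tau<T$. Both facts are already in place from the construction before the claim: $1/h$ and $1/\ell$ are bounded below, and the range of the $\ell$-integral contains $[0,T)$. So I expect no real obstacle beyond citing the substitution formula for monotone absolutely continuous maps with a bounded measurable integrand.
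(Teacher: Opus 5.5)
Your proof is correct and follows the paper's argument. The paper likewise substitutes $u=r(t)$ and $u=y(t)$ using $r'=a$ and $y'=b$, then writes $D_1(\tau)=\int_0^{y(\tau)}u[h(u)-\ell(u)]\,du+\int_{y(\tau)}^{r(\tau)}u\,h(u)\,du\ge0$, which is exactly your two-step comparison; your explicit justification of the substitution formula for monotone absolutely continuous maps is a detail the paper leaves implicit.
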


    \begin{proof}
Since $r'(t)=a(t)$ for a.e.\ $t<T$, making the change of variable $u=r(t)$, we have
        \[
            \int_0^{\tau} r(t)a(t)^2\,dt
            =
            \int_0^{r(\tau)} u h(u)\,du.
        \]
Similarly, since $y'(t)=b(t)$ for a.e.\ $t<T$, we have
        \[
            \int_0^{\tau} y(t)b(t)^2\,dt
            =
            \int_0^{y(\tau)} u \ell(u)\,du.
        \]
Therefore,
        \[
        \begin{aligned}
            D_1(\tau)
            &=
            \int_0^{r(\tau)}u h(u)\,du
            -
            \int_0^{y(\tau)}u \ell(u)\,du  \\
            &=
            \int_0^{y(\tau)}u[h(u)-\ell(u)]\,du
            +
            \int_{y(\tau)}^{r(\tau)}u h(u)\,du
            \geq0,
        \end{aligned}
        \]
where the last inequality follows from $h\ge \ell$ and $r\ge y$.
    \end{proof}

    \begin{claim}\label{claim:global-D2}
For every $\tau<T$,
        \[
            D_2(\tau)\geq0.
        \]
    \end{claim}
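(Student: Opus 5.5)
The plan is to factor the integrand of $D_2$ and rewrite it with the time parametrization \eqref{eq:time parametrize}, where $r'=a$ and $y'=b$ almost everywhere. Pointwise,
\[
r a^2+y b^2-(r+y)ab=(a-b)(ra-yb).
\]
The factor $a-b$ has no fixed sign, as the footnote after the revenue formulas already warns, so a direct pointwise bound will not work. Instead I will express everything through two auxiliary functions:
\[
d(t):=r(t)-y(t)\ge 0,\qquad s(t):=r(t)+y(t).
\]
Both are absolutely continuous on $[0,\tau]$, with $d(0)=s(0)=0$, $d'=a-b$ and $s'=a+b\ge 0$ a.e.

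Since $ra-yb=rr'-yy'=\tfrac12(r^2-y^2)'=\tfrac12(ds)'$, the integrand becomes
\[
d'\cdot\tfrac12\bigl(d's+ds'\bigr)=\tfrac12 (d')^2 s+\tfrac14 (d^2)'\,s' \qquad\text{a.e.}
\]
The first term is nonnegative because $s\ge 0$. It therefore remains to show
\[
\int_0^\tau (d^2)'(t)\,\bigl(a(t)+b(t)\bigr)\,dt\;\ge\;0 .
\]

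The key observation for this last integral is monotonicity. The function $a(t)=h(r(t))$ is weakly decreasing in $t$, because $h$ is weakly decreasing and $r$ is increasing. For the same reason $b(t)=\ell(y(t))$ is weakly decreasing. So $g:=a+b$ is a nonnegative, weakly decreasing function of bounded variation on $[0,\tau]$, and I will use $g$ itself as the (everywhere defined) representative of $s'$.

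I then integrate by parts in the Lebesgue--Stieltjes sense, pairing the absolutely continuous function $d^2$ with the monotone function $g$:
\[
\int_0^\tau (d^2)'\,g\,dt
= d(\tau)^2 g(\tau^-) - d(0)^2 g(0^+) - \int_{(0,\tau)} d^2\,dg .
\]
Each term has a definite sign. The boundary term at $0$ vanishes since $d(0)=0$. The boundary term at $\tau$ is nonnegative since $g\ge0$. The last integral is nonpositive because $dg\le 0$ and $d^2\ge 0$, so it enters with a plus sign. Hence the whole expression is $\ge 0$, and therefore $D_2(\tau)\ge 0$.

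The main obstacle is making this integration by parts rigorous. The function $g$ may have jumps (from jumps of $h,\ell$ or from flats of $r,y$), and $a,b$ might blow up near $t=0$ only if $h(0)$ is infinite, which is excluded since values are bounded. I would handle it by citing the standard product rule for an absolutely continuous function times a BV function, evaluating $g$ at one-sided limits at the endpoints. If that proves awkward, a fallback is to write $g(t)=g(\tau^-)+\mu((t,\tau))$ for the positive measure $\mu=-dg$. Fubini then gives
\[
\int_0^\tau (d^2)'g\,dt = g(\tau^-)\,d(\tau)^2+\int_{(0,\tau)} d(u)^2\,\mu(du)\ge 0,
\]
which uses only absolute continuity of $d^2$ and avoids any regularity issues at the jumps of $g$.
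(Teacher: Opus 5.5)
Your proof is correct and follows the paper's own argument. Both factor the integrand as $(a-b)(ra-yb)$ and rewrite it through $z=r-y$, which splits it into a nonnegative multiple of $(z')^2$ plus a term $\int w\,zz'\,dt$ with a nonnegative, weakly decreasing weight $w$; that term is then integrated by parts against $dw\le 0$.

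The only difference is the split itself. You symmetrize, putting weights $\tfrac12(r+y)$ on $(z')^2$ and $\tfrac12(a+b)$ on $zz'$. The paper writes $ra-yb=za+yz'$ and uses weights $y$ and $a$, so it needs only $a$ to be monotone, while you need $a+b$; both hold. Your Fubini fallback also treats the endpoint and jump conventions somewhat more carefully than the paper does.
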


    \begin{proof}
Let $z(t)=r(t)-y(t)$. Then $z$ is absolutely continuous on $[0,\tau]$, and
    \[
        z(t)\geq 0,
        \qquad
        z'(t)=a(t)-b(t)
    \]
for a.e. $t\in [0,\tau]$. Suppressing the common argument $t$, we have
        \[
        \begin{aligned}
            r a^2+y b^2-(r+y)ab
            &=
            (a-b)(ra-yb)  \\
            &=
            z'\cdot (za+yz').
        \end{aligned}
        \]
Hence
        \[
            D_2(\tau)
            =
            \int_0^{\tau}a z z'\,dt
            +
            \int_0^{\tau}y\cdot (z')^2\,dt.
        \]
The second term is nonnegative.

It remains to show that the first term is nonnegative. The function $a(t)=h(r(t))$ is weakly decreasing because $h$ is weakly decreasing and $r(t)$ is weakly increasing. Hence the function $a$ has bounded variation on $[0,\tau]$. The function $z$ is absolutely continuous on $[0,\tau]$. Thus $z^2$ is absolutely continuous. Through integration by parts for Lebesgue-Stieltjes integrals,
        \[
        \begin{aligned}
            \int_0^{\tau} a(t)z(t)z'(t)\,dt
            &=
            \frac12\int_0^{\tau}a(t)\,d(z(t)^2)  \\
            &=
            \frac12 a(\tau)z(\tau)^2
            -
            \frac12 a(0)z(0)^2
            -
            \frac12\int_{[0,\tau]}z(t)^2\,da(t).
        \end{aligned}
        \]
Since $z(0)=0$, the second term is zero. Since $a$ is weakly decreasing, the signed Stieltjes measure $da$ is nonpositive. Hence
        \[
            -\frac12\int_{[0,\tau]}z(t)^2\,da(t)\geq0.
        \]
The first term is also nonnegative. Therefore,
        \[
            \int_0^{\tau}a(t)z(t)z'(t)\,dt\geq0.
        \]
It follows that $D_2(\tau)\geq0$.
    \end{proof}

By Claims~\ref{claim:global-D1} and \ref{claim:global-D2},
    \[
        D(\tau)\geq0
    \]
for every $\tau<T$. Letting $\tau\uparrow T$ gives
    \[
        \mathrm{Rev}(Q,Q)-\mathrm{Rev}(W,Q)\geq0.
    \]
Finally suppose that $W<Q$ on a set of positive Lebesgue measure. Then $\ell<h$ on a set of positive Lebesgue measure. Letting $\tau\uparrow T$ in the proof of Claim~\ref{claim:global-D1} yields
    \[
    \begin{aligned}
        \lim_{\tau\uparrow T}D_1(\tau)
        &=
        \int_0^{\bar y}u[h(u)-\ell(u)]\,du
        +
        \int_{\bar y}^1u h(u)\,du,
    \end{aligned}
    \]
where
    \[
        \bar y=\lim_{t\uparrow T}y(t).
    \]
If $\ell<h$ on a set of positive measure in $[0,\bar y]$, then the first term is strictly positive. Or else, then $\ell<h$ on a set of positive measure in $(\bar y,1]$. In that case $\bar y<1$, and the second term is strictly positive. Hence
    \[
        \lim_{\tau\uparrow T}D_1(\tau)>0.
    \]
Since $D_2(\tau)\geq0$ for every $\tau<T$, we conclude that
    \begin{equation*}
        \mathrm{Rev}(Q,Q)>\mathrm{Rev}(W,Q).
        \qedhere
    \end{equation*}
\end{proof}

\subsection{Proof of Example~\ref{ex:weakening-weaker-raises-revenue}}\label{appendix:example}
Since
\[
    Q_2(p)\equiv 1,
\]
and
\[
    Q_x(p)=
    \begin{cases}
        1/6, & 0\le p\le 1/2\\
        x, & 1/2<p\le 1,
    \end{cases}
\]
we have $Q_x\le Q_2$ for every $x\in(1/2,1]$. Since $Q_2\equiv 1$,
\[
    S_2(p)=\int_p^1 \frac{dt}{Q_2(t)}=1-p.
\]
Moreover,
\[
    S_x(p)=\int_p^1 \frac{dt}{Q_x(t)}
    =
    \begin{cases}
        3-6p+\dfrac{1}{2x}, & 0\le p\le 1/2,\\[6pt]
        \dfrac{1-p}{x}, & 1/2<p\le 1.
    \end{cases}
\]
The cutoff $\alpha_x$ is determined by $S_x(\alpha_x)=S_2(0)=1$. For $x\in(1/2,1]$, this gives
\[
    \alpha_x=\frac13+\frac{1}{12x}\in[5/12,1/2).
\]
Since $S_2^{-1}(s)=1-s$, the matching function is
\[
    \phi_x(p)=S_2^{-1}(S_x(p))=1-S_x(p),
\]
that is,
\[
    \phi_x(p)=
    \begin{cases}
        6p-2-\dfrac{1}{2x}, & \alpha_x\le p\le 1/2,\\[6pt]
        1-\dfrac1x+\dfrac{p}{x}, & 1/2<p\le 1.
    \end{cases}
\]
Using the revenue formula \eqref{eq:revenue1},
\[
    \mathrm{Rev}(Q_x,Q_2)
    =
    \int_{\alpha_x}^1
    (2-p-\phi_x(p))Q_2(\phi_x(p))\,dp.
\]
Because $Q_2\equiv 1$ and $\phi_x(p)=1-S_x(p)$, this becomes
\[
\begin{aligned}
    \mathrm{Rev}(Q_x,Q_2)
    &=
    \int_{\alpha_x}^1 (1-p+S_x(p))\,dp\\
    &=
    \int_{\alpha_x}^{1/2}
    \left(4-7p+\frac{1}{2x}\right)dp
    +
    \int_{1/2}^1
    (1-p)\left(1+\frac1x\right)dp.
\end{aligned}
\]
Substituting $\alpha_x=1/3+1/(12x)$ and simplifying yields
\[
    R(x):=\mathrm{Rev}(Q_x,Q_2)
    =
    \frac{11}{36}+\frac{5}{72x}-\frac{5}{288x^2}.
\]
Hence
\[
    R'(x)
    =
    -\frac{5(2x-1)}{144x^3}.
\]
Thus $R'(x)<0$ for every $x>1/2$. Therefore, for any $1/2< \tilde x<x\le 1$, we have $Q_{\tilde x}\le Q_x\le Q_2$, but
\[
    \mathrm{Rev}(Q_{\tilde x},Q_2)>\mathrm{Rev}(Q_x,Q_2).
\]
That is, weakening the already weaker bidder can strictly raise expected revenue.

\end{document}